\def\ta{\mathtt{a}}
\def\tb{\mathtt{b}}
\def\tc{\mathtt{c}}
\def\td{\mathtt{d}}
\def\lsk{\left<}
\def\rsk{\right>}
\newcommand{\letters}{\text{alph}}
\def\nth#1{#1$^{\text{th}}$}
\def\N{\mathbb{N}}
\DeclareMathOperator{\ScatFact}{ScatFact}
\DeclareMathOperator{\id}{id}
\DeclareMathOperator{\ar}{ar}
\DeclareMathOperator{\Pref}{Pref}
\DeclareMathOperator{\last}{last}
\DeclareMathOperator{\nnext}{next}
\title{Scattered Factor-Universality of Words}
\author{Laura Barker\inst{1} \and Pamela Fleischmann\inst{1} \and Katharina Harwardt\inst{1} \and Florin Manea\thanks{Supported by the DFG grant MA 5725/2-1. 
F.M. thanks  Pawe\l{} Gawrychowski for his comments and suggestions.} \inst{2} \and Dirk Nowotka\inst{1}}
\authorrunning{L. Barker, P. Fleischmann, K. Harwardt, F. Manea, and D. Nowotka}
\institute{Kiel University, Germany, \email{stu97347@mail.uni-kiel.de, fpa@informatik.uni-kiel.de, stu120568@mail.uni-kiel.de, dn@informatik.uni-kiel.de} \and
University of G\"ottingen, Germany, \email{florin.manea@informatik.uni-goettingen.de}}
\begin{document}
\maketitle

\begin{abstract}
A word $u=u_1\dots u_n$ is a scattered factor of a word $w$ if $u$ can be obtained from $w$ by deleting some of its letters: there exist the (potentially empty) words $v_0,v_1,..,v_n$ such that $w = 
v_0u_1v_1...u_nv_n$. The set of all scattered factors up to length $k$ of a word is called its full $k$-spectrum. Firstly, we show an algorithm deciding whether the $k$-spectra for given $k$  of two words are equal or not, running in optimal time. Secondly, we consider a notion of scattered-factors universality: the word $w$, with $\letters(w)=\Sigma$, is called 
$k$-universal if its $k$-spectrum includes all words 
of length $k$ over the alphabet $\Sigma$; we extend this notion to $k$-circular universality. After a series of preliminary combinatorial results, we present an algorithm computing, for a given $k'$-universal word $w$ the minimal $i$ such that $w^i$ is $k$-universal for some 
$k>k'$. Several other connected problems~are~also~considered.

\end{abstract}

\section{Introduction}
A scattered factor (also called subsequence or subword) of a given word $w$ is a 
word $u$ such that there exist (possibly empty) words $v_0, \ldots, v_n, u_1, \ldots, 
u_n$ with $u = u_1 \ldots u_n$ and $w = v_0 u_1 v_1 u_2 \ldots u_n v_n$. 
Thus, scattered factors of a word $w$ are imperfect representations of $w$, obtained by removing some of its parts. As such, there is considerable interest in the relationship between a word and its 
scattered factors, both from a theoretical and practical point of view (cf. e.g., 
the chapter {\em Subwords} by J. Sakarovitch and I. Simon in 
\cite[Chapter 6]{Loth97} for an introduction to the combinatorial properties).
Indeed, in situations where one has to deal with input strings in which errors may occur, e.g.,  sequencing DNA or transmitting a digital 
signal, scattered factors form a natural model for the processed data as  parts of the input may be missing. This versatility of scattered factors is also highlighted by the many contexts in which this concept appears. For instance, in~\cite{Zetzsche16,HalfonSZ17,KuskeZ19}, various logic-theories were developed around the notion of scattered factors which are analysed mostly with automata theory tools and discussed in connection to applications in formal verification. On an even more fundamental perspective, there have been efforts to bridge the gap between the field of combinatorics on words, with its usual non-commutative tools, and traditional linear algebra, via, e.g., subword histories or Parikh matrices (cf. e.g.,~\cite{Mat04,Salomaa05,Seki12}) which are algebraic structures in which the number of specific scattered factors occurring in a word are stored. In an algorithmic framework, scattered factors are central in many classical problems, e.g., the longest common subsequence or the shortest common supersequence problems \cite{Maier:1978,BringmannK18}, the string-to-string correction problem \cite{Wagner:1974}, as well as in bioinformatics-related works \cite{ElzingaRW08}. 

In this paper we focus, for a given word, on the sets of scattered factors of a given length: the (full) $k$-spectrum of $w$ is the set containing all scattered factors of $w$ 
of length exactly $k$ (up to $k$ resp.). The total set of scattered 
factors (also called downward closure) of $w = \ta\tb\ta$ is $\{\varepsilon, \ta, \ta\ta, \ta\tb, \ta\tb\ta, \tb, \tb\ta \}$ and the 2-spectrum is $\{ 
\ta\ta, \ta\tb, \tb\ta \}$. The study of scattered factors of a fixed length of 
a word has its roots in \cite{Simon72}, where the relation $\sim_k$ (called 
Simon's congruence) defines the congruence of words that have the same full $k$-spectra.
Our main interest here lies in a special congruence class w.r.t. $\sim_k$: the class of 
words which have the largest possible $k$-spectrum. A word $w$ is called {\em $k$-universal} if its $k$-spectrum contains 
all the words of length $k$ over a given alphabet. That is, 
$k$-universal words are those words that are as rich as possible in terms of 
scattered factors of length $k$ (and, consequently, also scattered factors of 
length at most $k$): the restriction of their downward closure to words of 
length $k$ contains all possible words of the respective length, i.e., is a {\em 
universal} language. Thus $w=\ta\tb\ta$ is not $2$-universal since $\tb\tb$ is 
not a scattered factor of $w$, while $w'=\ta\tb\ta\tb$ is $2$-universal. 
Calling a words {\em universal} if its $k$-spectrum contains all possible words of length $k$, is rooted in formal language theory. The classical universality problem (cf. e.g., \cite{HolzerK11}) is whether a given language $L$ (over an alphabet $\Sigma$) is equal to $\Sigma^{\ast}$, where  $L$ can be given, e.g., as the language accepted by an automaton. A variant of this problem, called length universality, asks, for a natural number $\ell$ and a language $L$ (over $\Sigma$), whether $L$  contains all strings of length $\ell$ over~$\Sigma$.  See \cite{GawrychowskiRSS17} for a series of results on this problem and a discussion on its motivation, and \cite{Rampersad:2012,KrotzschMT17,GawrychowskiRSS17} and the references therein for more results on the universality problem for various types of automata. The universality problem was also considered for words \cite{martin1934,Bruijn46} and, more recently, for partial words \cite{ChenKMS17,GoecknerGHKKKS18} w.r.t. their factors. In this context, the question is to find, for a given $\ell$, a word $w$ over an alphabet $\Sigma$, such that each word of length $\ell$ over $\Sigma$ occurs exactly once as a contiguous factor of $w$. De Bruijn sequences~\cite{Bruijn46} fulfil this property, and have been shown to have many applications in various areas of computer science or combinatorics, see \cite{ChenKMS17,GoecknerGHKKKS18} and the references therein. As such, our study of scattered factor-universality is related to, and motivated by, this well developed~and~classical~line~of~research. 

While $\sim_k$ is a well studied congruence relation from language theoretic, 
combinatorial, or algorithmic points of view (see \cite{Simon72,Loth97,KufMFCS} and the references therein), the study of 
universality w.r.t. scattered factors seems to have been mainly carried out 
from a language theoretic point of view. In \cite{KarandikarKS15} as well as in 
\cite{CSLKarandikarS,journals/lmcs/KarandikarS19} the authors approach, in the context of studying the height of piecewise testable languages, the notion of $\ell$-rich words, 
which coincides with the $\ell$-universal words we define here; we will discuss 
the relation between these notions, as well as our preference to talk about 
universality rather than richness, later in the paper. A combinatorial study of scattered factors universality 
was started in \cite{dlt2019}, where a simple characterisation of $k$-universal 
binary words was given. In the combinatorics on words literature, more attention was given to 
the so called binomial complexity of words, i.e., a measure of the 
multiset of scattered factors that occur in a word, where each occurrence of 
such a factor is considered as an element of the respective multiset (see, 
e.g., \cite{RigoS15,FreydenbergerGK15,LeroyRS17a,Rigo19}).
As such, it seemed interesting to us to continue the work on scattered 
factor universality: try to understand better (in general, not only in 
the case of binary alphabets) their combinatorial properties, but, mainly, try 
to develop an algorithmic toolbox around the concept of ($k$-)universal words. 

\noindent{\bf Our results.} In the preliminaries we give the basic definitions and recall the arch factorisation introduced by Hebrard \cite{TCS::Hebrard1991}. Moreover we explain in detail the connection to richness introduced in \cite{KarandikarKS15}. 

In Section~\ref{Simon} we show one of our main results: testing whether two words have the same full 
$k$-spectrum, for given $k\in\N$, can be done in optimal linear time for words over ordered alphabets and improve and extend the results of~\cite{KufMFCS}. They also lead to an optimal solution over general alphabets.

In Section~\ref{basic} we prove that the arch factorisation can be computed in time linear w.r.t. the word-length and, thus, we can also determine whether a given word is $k$-universal. Afterwards, we provide several 
combinatorial results on $k$-universal words (over arbitrary alphabets); while  some of them follow in a rather straightforward way from the seminal work of 
Simon \cite{Simon72}, other require a more involved analysis. One such result is a characterisation of $k$-universal words by comparing the spectra of $w$ and $w^2$. We also investigate the similarities and differences of the universality if a word $w$ is repeated or $w^R$ and $\pi(w)$ resp. are appended to $w$, for a morphic permutation of the alphabet $\pi$. As consequences, we get a linear run-time algorithm for computing a minimal length scattered factor of $ww$ that is not a scattered factor of $w$. This approach works for arbitrary alphabets, while, e.g., the approach of \cite{TCS::Hebrard1991} only works for binary ones. We conclude the section by analysing the new notion of $k$-circular universality, connected to the universality of repetitions.

In Section \ref{queries} we consider the problem of modifying the 
universality of a word by repeated concatenations or deletions. Motivated by the fact that, 
in general, starting from an input word $w$, we could reach larger sets of 
scattered factors of fixed length by iterative concatenations of $w$, we show 
that, for a word $w$ a positive integer $k$, we can compute efficiently the minimal $\ell$ such that $w^\ell$ is $k$-universal. This result is extensible to sets of words. Finally, the shortest prefix or suffix we need to delete to lower the universality index of a word to a given number can be computed in linear time. Interestingly, in all of the algorithms where we are concerned with reaching $k$-universality we never effectively construct a $k$-universal word (which would take exponential time, when $k$ is given as input via its binary encoding, and would have been needed when solving these problems using, e.g., \cite{KufMFCS,ElzingaRW08}). Our algorithms run in polynomial time w.r.t. $|w|$, the length of the input word, and $\log_2 k$, the size of the representation of~$k$.

\section{Preliminaries}\label{prels}
Let $\N$ be the set of natural numbers and $\N_0=\N\cup\{0\}$.
Define $[n]$ as the set $\{1,\ldots, n\}$, $[n]_0 = [n]\cup\{0\}$
for an $n\in\N$, and $\N_{\geq n}=\N\backslash[n-1]$.
An alphabet $\Sigma$ is a nonempty finite set of symbols called {\em letters}. 
A {\em word} is a finite sequence of letters from $\Sigma$, thus an element of 
the free monoid $\Sigma^{\ast}$. Let $\Sigma^+=\Sigma^{\ast}\backslash\{\varepsilon\}$, where $\epsilon$ is the empty word. The {\em length} of 
a word $w\in\Sigma^{\ast}$ is denoted by $|w|$. For $k\in \mathbb{N}$ 
 define $\Sigma^{k} = \{ w \in \Sigma^* 
| |w| = k \}$ and $\Sigma^{\leq k}, \Sigma^{\geq k}$ analogously.
A word $u\in\Sigma^{\ast}$ is a {\em factor} of $w\in\Sigma^{\ast}$ if $w=xuy$ 
for some $x,y\in\Sigma^{\ast}$. If $x=\varepsilon$ (resp. $y=\epsilon$), $u$ is called a 
{\em prefix} (resp. {\em suffix} of $w$). Let $\Pref_k(w)$  be the prefix of $w$ of length $k\in\N_0$. The \nth{$i$} letter of $w\in\Sigma^{\ast}$ is denoted by 
$w[i]$ for $i\in[|w|]$ and set $w[i .. j] = 
w[i] w[i+1] \ldots w[j]$ for $1\leq i\leq j\leq |w|$.
Define the {\em reversal} of $w\in\Sigma^n$ by $w^R=w[n]\ldots w[1]$.
Set $|w|_{\ta}=|\{i\in[|w|]|\,w[i]=\ta\}|$ and $\letters(w)$ $= \{\ta \in \Sigma | |w|_\ta 
> 0 \}$ for $w\in\Sigma^{\ast}$.
For a word $u \in \Sigma ^*$ we define $u^0 = \varepsilon, u^{i+1} = u^i u$, 
for $i\in\N$. A word $w \in \Sigma ^*$ is called {\em power} (repetition)  of a 
word $u\in\Sigma^{\ast}$, if $w = 
u^t$ for some $t\in\N_{\geq 2}$.  
A word $u\in\Sigma^{\ast}$ is a {\em conjugate} of $w\in\Sigma^{\ast}$ 
if there exist $x,y\in\Sigma^{\ast}$ with $w=xy$ and $u=yx$. A function $\pi:\Sigma^{\ast}\rightarrow\Sigma^{\ast}$ is called {\em morphic permutation} if $\pi$ is bijective and $\pi(uv)=\pi(u)\pi(v)$ for all $u,v\in\Sigma^{\ast}$.

\begin{definition}
A word $v= v_1 \ldots v_k\in \Sigma^*$ is a  {\em scattered factor} of $w \in \Sigma^*$   if there exist $x_1, \ldots, 
x_{k+1}\in\Sigma^{\ast}$ such that $w = x_1 v_1 \ldots x_k v_k x_{k+1}$.
Let $\ScatFact(w)$ be the set of all scattered factors of $w$ and
define $\ScatFact_k(w)$ $(resp., \ScatFact_{\leq k}(w))$ as the set of all scattered factors of $w$ of 
length (resp., up to) $k\in\N$. A word $u\in\Sigma^{\ast}$ is a {\em common scattered factor} of 
$w,v\in\Sigma^{\ast}$, if $u\in\ScatFact(w)\cap\ScatFact(v)$; the word $u$ is an {\em 
uncommon scattered factor} of $w$ and $v$ (and {\em distinguishes} them) if $u$ is a scattered factor of 
exactly one of them.
\end{definition}

For $k\in\N_0$, the sets $\ScatFact_k(w)$ and $\ScatFact_{\leq k}(w)$ are also 
known as the $k$-spectrum and the full-$k$-spectrum of $w$ resp.. Simon 
\cite{Simon72} defined the congruence $\sim_k$ in which $u,v\in\Sigma^{\ast}$ are congruent if they
have the same full $k$-spectrum and thus the same $k$-spectrum. The {\em shortlex normal form} of a word $w\in\Sigma^{\ast}$ w.r.t. $\sim_k$, where $\Sigma$ is an ordered alphabet, is the shortest word $u$ with $u\sim_k w$ which is also lexicographically smallest (w.r.t. the given order on $\Sigma$) amongst all words $v\sim_k w$ with $|v|=|u|$. The maximal cardinality of a word's $k$-spectrum is $|\Sigma|^k$ and as shown in \cite{dlt2019} this is equivalent in the binary case to 
$w\in\{\ta\tb,\tb\ta\}^k$. The following  definition captures this property of a word in a generalised setting.

\begin{definition}
A word $w\in\Sigma^{\ast}$ is called {\em $k$-universal} (w.r.t. $\Sigma$), for $k\in\N_0$, if $\ScatFact_k(w)=\Sigma^k$. We abbreviate $1$-universal by {\em universal}. The {\em universality-index} $\iota(w)$ of $w\in\Sigma^{\ast}$ is the largest $k$ such that $w$ is $k$-universal.
\end{definition}

\begin{remark}
Notice that $k$-universality is always w.r.t. a given alphabet $\Sigma$: the word $\ta\tb\tc\tb\ta$ is $1$-universal for $\Sigma=\{\ta,\tb,\tc\}$ but it is not universal for $\Sigma\cup\{\td\}$. If it is clear from the context, we do not explicitly mention $\Sigma$. The universality of the factors of a word $w$ is considered w.r.t. $\letters(w)$.
\end{remark}

Karandikar and Schnoebelen introduced in \cite{CSLKarandikarS,journals/lmcs/KarandikarS19} the notion
of richness of words: $w\in\Sigma^{\ast}$ is {\em rich} (w.r.t. $\Sigma$) if $\letters(w)=\Sigma$ (and
poor otherwise) and $w$ is $\ell$-rich if $w$ is the concatenation of $\ell\in\N$ 
rich words. Immediately we get that a word is universal iff it is rich and moreover that a word is $\ell$-rich iff it is $\ell$-universal and a rich-factorisation, i.e., the factorisation of an $\ell$-rich word into $\ell$ rich words, can be efficiently obtained. However, we will use the name {\em $\ell$-universality} rather than {\em $\ell$-richness}, as richness defines as well, e.g. the property of a word $w\in\Sigma^n$ to have $n+1$ distinct palindromic factors, see, e.g., \cite{DroubayJP01,LucaGZ08}. 
As $w$ is $\ell$-universal iff $w$ is the concatenation of $\ell\in\N$ universal words it follows immediately that, if $w$ is over the ordered alphabet $\Sigma=\{1<2<\ldots< \sigma\}$ and it is $\ell$-universal then its shortlex normal form w.r.t. $\sim_\ell$ is $(1\cdot2 \cdots \sigma)^\ell$ (as this is the shortest and lexicographically smallest $\ell$-universal word).

The following observation leads to the next definition: the word $w=\ta\tb\tc\in\{\ta,\tb,\tc\}^{\ast}$ is $1$-universal and $w^s$ is $s$-universal for all $s\in\N$. But, $v^2=(\ta\tb\ta\tb\tc\tc)^2\in\{\ta,\tb,\tc\}^{\ast}$ is $3$-universal even though $v$ is only $1$-universal. Notice that the conjugate 
$\ta\tb\tc\tc\ta\tb$ of $v$ is $2$-universal.

\begin{definition}
A word $w\in\Sigma^{\ast}$ is called {\em $k$-circular universal} if a conjugate of $w$ is $k$-universal (abbreviate $1$-circular universal by circular universal). The {\em circular universality index} $\zeta(w)$ of $w$ is the largest $k$ such that $w$ is $k$-circular universal. 
\end{definition}

\begin{remark}\label{not-unique}
It is worth noting that, unlike the case of factor universality of words and partial words \cite{martin1934,Bruijn46,ChenKMS17,GoecknerGHKKKS18}, in the case of scattered factors it does not make sense to try to identify a $k$-universal word $w\in\Sigma^{\ast}$, for $k\in\N_0$, such that each word from $\Sigma^k$ occurs {\em exactly once} as scattered factor of $w$. Indeed for $|\Sigma|=\sigma$, if $|w|\geq k+\sigma$ then there exists a word from $\Sigma^k$ which occurs at least twice as a scattered factor of $w$. Moreover, the shortest word which is $k$-universal has length $k\sigma$ (we need $\ta^k\in\ScatFact_k(w)$ for all $\ta\in\Sigma$). As $k\sigma\geq k+\sigma$ for $k,\sigma\in \N_{\geq 2}$, all $k$-universal words have scattered factors occurring more than once: there exists $i,j\in [\sigma+1]$ such that $w[i]=w[j]$ and $i\neq j$. Then $w[i]w[\sigma+2..\sigma+k], w[j]
w[\sigma+2..\sigma+k]\in\ScatFact_k(w)$ and $w[i]w[\sigma+2..\sigma+k]=w[j]w[\sigma+2..\sigma+k]$.
\end{remark}

We now recall the arch factorisation, introduced by Hebrard in~\cite{TCS::Hebrard1991}.

\begin{definition}[\cite{TCS::Hebrard1991}]
For $w\in\Sigma^{\ast}$ the {\em arch factorisation} of $w$ is given by $w=\ar_w(1)\dots \ar_w(k)r(w)$ for a $k\in\N_0$ with $\ar_w(i)$ is universal and 
$\ar_w(i)[|\ar_w(i)|]\not\in\letters(\ar_w(i)[1\dots |\ar_w(i)|-1])$  for all $i\in[n]$, and 
$\letters(r(w))\subset\Sigma$. 
The words $\ar_w(i)$ are called {\em archs} of $w$, $r(w)$ is called the {\em rest}. Set $m(w)=\ar_w(1)[|\ar_w(1)|]$ $\dots \ar_w(k)[|\ar_w(k)|]$ as the word containing the unique last letters of each arch.
\end{definition}

\begin{remark}
If the arch factorisation contains $k\in\N_0$ archs, the word is $k$-universal, thus the equivalence of $k$-richness and $k$-universality becomes clear. Moreover if a factor $v$ of $w\in\Sigma^{\ast}$ is $k$-universal then $w$ is also $k$-universal: if $v$ has an arch factorisation with $k$ archs then $w$'s arch factorisation has at least $k$ archs (in which the archs of $v$ and $w$ are not necessarily related).
\end{remark}

Finally, our main results are of algorithmic nature. The computational model we use is the standard unit-cost RAM with logarithmic word size: for an input of size $n$, each memory word can hold $\log n$ bits. Arithmetic and bitwise operations with numbers in $[n]$ are, thus, assumed to take $O(1)$ time. Arithmetic operations on numbers larger than $n$, with $\ell$ bits, take $O(\ell/\log n)$ time. For simplicity, when evaluating the complexity of an algorithm we first count the number of steps we perform (e.g., each arithmetic operation is counted as $1$, no matter the size of the operands), and then give the actual time needed to implement these steps in our model. In our algorithmic problems, we assume that the processed words are sequences of integers (called letters or symbols, each fitting in $O(1)$ memory words). In other words, we assume that the alphabet of our input words is {\em an integer alphabet}. In general, after a linear time preprocessing, we can assume that the letters of an input word of length $n$ over an integer alphabet $\Sigma$ are in $\{1, \ldots , |\Sigma|\}$ where, clearly, $|\Sigma| \leq n$. For a more detailed discussion see, e.g.,~\cite{crochemore}.

\section{Testing Simon's Congruence}\label{Simon}
Our first result extends and improves the results of Fleischer and Kufleitner \cite{KufMFCS}.

\begin{theorem}\label{simon_new}
(1). Given a word $w$ over an integer alphabet $\Sigma$, with $|w|=n$, and a number $k\leq n$, we can compute the shortlex normal form of $w$ w.r.t. $\sim_k$ in time $O(n)$. 
(2.) Given two words $w', w''$ over an integer alphabet $\Sigma$, with $|w'|\leq |w''|=n$, and a number $k\leq n$, we can test if $w'\sim_k w''$ in time $O(n)$.

\end{theorem}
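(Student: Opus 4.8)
The plan is to derive statement~(2) from statement~(1). By definition the shortlex normal form of a word is the shortest word in its $\sim_k$-class, and a word always lies in its own class, so the normal form of a word of length $m$ has length at most $m$. Hence, once~(1) gives us in $O(n)$ time the shortlex normal forms $\widehat{w'}$ and $\widehat{w''}$ of $w'$ and $w''$ (both of length at most $n$, using $|w'|\le|w''|=n$ and $k\le n$), we have $w'\sim_k w''$ if and only if $\widehat{w'}=\widehat{w''}$, and comparing two words of length at most $n$ costs $O(n)$. So it suffices to prove~(1).

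For~(1) I would base correctness on the recursive description of $\sim_k$ going back to Simon~\cite{Simon72} and realised algorithmically by Fleischer and Kufleitner~\cite{KufMFCS}. The clean base case already fixes the target: if $\iota(w)\ge k$ then every word of $\Sigma^{\le k}$ is a scattered factor of $w$, so $\ScatFact_{\le k}(w)=\Sigma^{\le k}$, all $k$-universal words are $\sim_k$-equivalent, and the normal form is $(1\cdot 2\cdots\sigma)^k$; since the shortest $k$-universal word has length $k\sigma$ we then have $n\ge k\sigma$, so this normal form has length at most $n$ and can be emitted in $O(n)$. When $\iota(w)<k$ the class, and hence the normal form, are determined by a canonical per-position labelling computed by a left-to-right and a right-to-left sweep that carry the arch factorisation of $w$ together with the prefix- and suffix-universality profiles $\iota(w[1..i])$ and $\iota(w[i..n])$ and, for the current position, the most recent occurrence of each letter; this is exactly the information Simon's recursion consumes when it peels off first and last occurrences of letters and descends to level $k-1$.

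The crux, and the only place where I improve on~\cite{KufMFCS}, is to produce this labelling and emit the normal form in total time $O(n)$ rather than $O(n\,|\Sigma|)$. The superfluous factor $|\Sigma|$ there comes from work that ranges over the whole alphabet once per position, for instance refreshing the last-occurrence index of every letter, resetting per-letter counters when a universal block closes, or writing out a canonical block $1\cdots\sigma$. First, I use the integer alphabet: after the linear preprocessing fixed in the preliminaries the letters lie in $\{1,\ldots,\sigma\}$ with $\sigma\le n$, so a letter may index an array in $O(1)$; the last-occurrence pointers are then maintained in one array updated in $O(1)$ per position, and the arch factorisation and the two universality profiles are obtained in single linear scans (the arch factorisation being computable in linear time, as also shown in Section~\ref{basic}). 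Second, the decisive observation is that \emph{every arch is universal and hence has length at least $|\Sigma|$}; therefore the number of arches times $|\Sigma|$ is at most $n$, and any step that spends $O(|\Sigma|)$ once per arch — closing a block, resetting counters, or printing $1\cdots\sigma$ — contributes only $O(n)$ in total. Charging all alphabet-wide operations either to a distinct position of $w$ or to the arch that triggers them thus brings the whole computation down to $O(n)$.

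Reconstructing $\widehat{w}$ from the labelling is a final $O(n)$ pass, and since $|\widehat{w}|\le n$ it stays within the budget. I expect the genuinely delicate part to be this amortised analysis: one must check that every alphabet-wide action really can be assigned to a single arch (exploiting that arches have length at least $|\Sigma|$) or to a single position, so that the naive per-position bound $O(|\Sigma|)$ never accumulates, and that each array operation on the renamed integer alphabet is implemented in $O(1)$ on the RAM model fixed in the preliminaries. Getting these charges and these constant-time implementations exactly right, rather than the high-level correctness, is where the technical work concentrates.
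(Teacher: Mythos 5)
Your reduction of (2) to (1) is exactly the paper's, and your overall goal (remove the $|\Sigma|$ factor from the Fleischer--Kufleitner algorithm, exploiting the integer alphabet) is the right one. But the technical core of your argument has a genuine gap, in two places. First, the labelling you propose to compute -- the arch factorisation plus the profiles $\iota(w[1..i])$, $\iota(w[i..n])$ and last occurrences -- is not the information that determines the $\sim_k$-class, and you never prove it is. The paper works with the $x$- and $y$-coordinates of \cite{KufMFCS}, which are quite different from universality profiles (e.g.\ for $w=\mathtt{a}\mathtt{a}\mathtt{b}\mathtt{b}$ the $x$-coordinates are $1,2,1,2$ while the prefix universality indices are $0,0,1,1$); the normal form is then obtained by deleting every position with $x_i+y_i>k+1$ and sorting maximal blocks of consecutive positions having equal coordinates with $x_i+y_i=k+1$. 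Your picture of the output as canonical blocks $1\cdots\sigma$ only matches the degenerate case $\iota(w)\geq k$; in general the normal form is $w$ with deletions and local reorderings, and producing it requires those coordinates plus a linear-time (radix) sort of the block positions, none of which your labelling supplies.

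Second, and more seriously, your amortisation misidentifies where the $O(n|\Sigma|)$ cost in \cite{KufMFCS} comes from. It is not alphabet-wide resets that happen once per arch (those would indeed amortise, since each arch has length at least $|\Sigma|$); it is the per-position evaluation of $x_i=\min(x[\last[i]..i-1])+1$ and its right-to-left analogue, i.e.\ a window minimum over previously computed coordinates at \emph{every} position. Charging to arches cannot pay for these: the windows are per-position, overlapping, and of unbounded length. The paper's actual solution is a monotone list (stack) of positions whose insert/delete costs amortise to $O(n)$ for the left-to-right pass, and -- the genuinely delicate part -- for the right-to-left pass, where positions must be dynamically discarded whenever $x_i+y_i>k+1$ (which breaks the stack amortisation, since discarded positions are never pushed), an interval union-find structure over $[1,n]$ supporting unions of adjacent intervals in amortised constant time, combined with a $\nnext[\cdot]$ array of surviving occurrences. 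Your proposal contains no mechanism playing this role, so the claimed $O(n)$ bound is unsupported precisely at the step that constitutes the theorem's contribution.
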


\begin{proof}
The main idea of the algorithm is that checking $w'\sim_k w''$ is equivalent to checking whether the shortlex normal forms w.r.t. $\sim_k$ of $w'$ and $w''$ are equal.
To compute the shortlex normal form of a word $w\in\Sigma^n$ w.r.t. $\sim_k$ the following approach was used in \cite{KufMFCS} :
firstly, for each position of $w$ the $x$- and $y$-coordinates were defined. The $x$-coordinate of $i$, denoted $x_i$, is the length of the shortest sequence of indices $1\leq i_1<i_2<\ldots <i_t=i$ such that $i_1$ is the position where the letter $w[i_1]$ occurs  $w$ for the first time and, for $1<j\leq t$, $i_j$ is the first position where $w[i_j]$ occurs in $w[i_{j-1}+1..i]$. Obviously, if $\ta$ occurs for the first time on position $i$ in $w$, then $x_i=1$ (see \cite{KufMFCS} for more details). A crucial property of the $x$-coordinates is that if $w[\ell]=w[i]=\ta$ for some $i>\ell$ such that $w[j]\neq \ta$ for all $\ell+1\leq j\leq i-1$, then $x_i=\min \{x_\ell,x_{\ell+1},\ldots, x_{i-1}\} + 1.$ The $y$-coordinate of a position $i$, denoted $y_i$, is defined symmetrically: $y_i$ is the length of the shortest sequence of indices $n\geq i_1>i_2>\ldots >i_t=i$ such that $i_1$ is the position where the letter $w[i_1]$ occurs last time in $w$ and, for $1<j\leq t$, $i_j$ is the last position where $w[i_j]$ occurs in $w[i..i_{j-1}-1]$. Clearly, if $w[\ell]=w[i]=\ta$ for some $i<\ell$ such that $w[j]\neq \ta$ for all $\ell-1\geq j\geq i+1$, then $y_i=\min \{y_{i+1}, \ldots, y_{\ell-1},y_{\ell}\} + 1.$ 

Computing the coordinates is done in two phases: the $x$-coordinates are computed and stored (in an array $x$ with elements $x_1,\ldots,x_n$) from left to right in phase 1a, and the $y$-coordinates are stored in an array $y$ with elements $y_1,\ldots,y_n$ and computed from right to left in phase 1b (while dynamically deleting a position whenever the sum of its coordinates is greater then $k + 1$ (cf. \cite[Prop. 2]{KufMFCS})).
Then, to compute the shortlex normal form, in a third phase, labelled phase 2, if letters $\tb>\ta$ occur consecutively in this order, they are interchanged whenever they have the same $x$- and $y$-coordinates and the sum of these coordinates is $k + 1$ (until this situation does not occur anymore). 

We now show how these steps can be implemented in $O(n)$ time for input words over integer alphabets. For simplicity, let $x[i..j]$ denote the sequence of coordinates $x_{i},x_{i+1},\ldots,x_j$; $\min(x[i..j])$ denotes $\min\{x_i,\ldots,x_j\}$. It is clear that in $O(n)$ time we can compute all values $\last[i]=\max(\{0\}\cup \{j<i| w[j]=w[i]\})$.

{\em Firstly, phase 1a.} For simplicity, assume that $x_0=0$. While going with $i$ from $1$ to 
$n$, we maintain a list $L$ of positions $0=i_0< i_1<i_2<\ldots<i_t= i$ such that the following
property is invariant: $x_{i_{\ell-1}}<x_{i_\ell}$ for $1\leq \ell\leq t$ and $x_p \geq x_{i_\ell}$ for
all $i_{\ell-1}<p \leq i_\ell$. After each $i$ is read, if $\last[i]=0$ then set $x_i=1$; otherwise, determine 
$x_i=\min(x[\last[i]..i-1])+1$ by $L$, then append $i$ to $L$ and update $L$ accordingly so that its invariant property holds. This is done as follows: we go through the list $L$ from right to left (i.e., inspect the elements $i_{t}, i_{t-1}, \ldots$) until we reach a position $i_{j-1}<\last[i]$ or completely traverse the list (i.e., $i_{j-1}=0$). Let us note now that all elements $x_\ell$ with $i-1 \geq \ell \geq \last[i]$ fulfill $x_\ell\geq x_{i_j}$ and $i_j\geq \last[i]$. Consequently, $x_i=x_{i_j}+1$. Morover, $x_{i_{j+1}}\geq x_{i_j}+1$. As such, we update the list $L$ so that it becomes $i_1,\ldots,i_j,i$ (and $x_i$ is stored in the array~$x$). 

Note that each position of $w$ is inserted once in $L$ and once deleted (but never reinserted). Also, the time needed for the update of $L$ caused by the insertion of $i$ is proportional to the number of elements removed from the list in that step. Accordingly, the total time needed to process $L$, for all $i$, is $O(n)$. Clearly, this procedure computes the $x$-coordinates of all the positions of $w$ correctly. 

{\em Secondly, phase 1b.} We cannot proceed exactly like in the previous case, because we need to dynamically delete a position whenever the sum of its coordinates is greater than $k + 1$ (i.e., as soon as we finished computing its $y$-coordinate and see that it is $>k+1$; this position does not influence the rest of the computation). If we would proceed just as above (right to left this time), it might be the case that after computing some $y_i$ we need to delete position $i$, instead of storing it in our list and removing some of the elements of the list. As such, our argument showing that the time spent for inspecting and updating the list in the steps where the $y$-coordinates are computed amortises to $O(n)$ would not~work. 

So, we will use an enhanced approach. For simplicity, assume that $y_{n+1}=0$ and that every time we should eliminate position $i$ we actually set $y_i$ to $+\infty$. Also, let $y[i..j]$ denote the sequence of coordinates $y_{i},y_{i+1},\ldots,y_j$; note that some of these coordinates can be $+\infty$. Let $\min(y[i..j])$ denote the minimum in the sequence  $y[i..j]$. 
Similarly to what we did in phase 1a, while going with $i$ from $n$ to $1$, we maintain a list $L'$ of positions $n+1=i_0> i_1>i_2>\ldots>i_t\geq i$ such that the following property is invariant:
$y_{i_{\ell-1}} < y_{i_\ell}$ for $1\leq \ell\leq t$ and
$y_p \geq y_{i_\ell}$ for all $i_{\ell-1}>p \geq i_\ell$. 
In the current case, we also have that $y_p=+\infty$ for all ${i_t}> p\geq i$. 
The numbers $i_0, i_1, i_2, \ldots, i_t\geq i$ contained in the list $L'$ at some moment
in our computation define a partition of the universe $[1,n]$ in intervals: $\{1\}, \{2\},
\ldots,$ $\{i-1\}, [i,i_{t-1}-1], [i_{t-1},i_{t-2}-1], \ldots , [i_{1},i_{0}-1]$ for which
we define an {\em interval union-find} data structure \cite{gabow,union-find}; here the
singleton $\{a\}$ is seen as the interval $[a,a]$. According to \cite{union-find}, in our
model of computation, such a structure can be initialized in $O(n)$ time such that we can
perform a sequence of $O(n)$ \texttt{union} and \texttt{find} operations on it in $O(n)$
time, with the crucial restriction that one can only unite neighbouring intervals. We
assume that \texttt{find(j)} returns the bounds of the interval stored in our data
structure to which $j$ belongs. From the definition of the list $L'$, it is clear that,
before processing position~$i$ (and after finishing processing position $i+1$), 
$y_{i_\ell}=\min(y[i+1..i_{\ell -1}-1])$ holds. We maintain a new array 
$\nnext[\cdot]$ with 
$|\Sigma|$ elements: before processing position $i$, $\nnext[w[i]]$ is the smallest
position $j>i$ where $w[i]$ occurs after position~$i$, which was not eliminated 
(i.e., smallest $j>i$ with $y_j\neq +\infty$), or $0$ if there is no such position. 
Position $i$ is now processed as follows: let $[a,b]$ be the interval returned by $\mathtt{find}(\nnext[i])$. If $a=i+1$ then let $\min = y_{i_t}$; if $a>i+1$ then there exists $j$ such that $[a,b]=[i_j,i_{j-1}-1]$ and $t>j>0$, so let $\min=y_{j}$. Let now $y=\min + 1$, and note that we should set $y_i=y$, but only if $x_i+i\leq k+1$. So, we check whether $x_i+i\leq k+1$ and, if yes, let $y_i=y$ and set $\nnext[w[i]]=i$; otherwise, set $y_i = + \infty$ (note that position $i$ becomes, as such, irrelevant when the $y$-coordinate is computed for other positions). If $y_i = + \infty$ then make the union of the intervals $\{i\}$ and $[i+1,i_{t-1}-1]$ and start processing $i-1$; $L'$ remains unchanged. If $y_i \neq + \infty$ then make the union of the intervals $\{i\}, [i+1,i_{t-1}-1], \ldots, [i_{j+1},i_{j}-1]$ and start processing $i-1$; $L'$ becomes $i,i_j, i_{j-1},\ldots, i_0$. 

As each position of $w$ is inserted at most once in $L'$, and then deleted once (never reinserted), the number of list operations is $O(n)$. The time needed for the update of $L'$, caused by the insertion of $i$ in $L'$, is proportional to the number of elements removed from $L'$ in that step, so the total time needed (exclusively) to process $L$ is $O(n)$. On top of that, for each position $i$, we run one \texttt{find} operation and a number of \texttt{union} operations proportional to the number of elements removed from $L'$ in that step. Overall we do $O(n)$ \texttt{union} and \texttt{find} operations on the {\em union-find} data structure. This takes in total, for all $i$, $O(n)$ time (including the initialisation). Thus, the time complexity of phase 1b is~linear.

{\em Thirdly, phase 2.} Assume that $w_0$ is the input word {\em of this phase}. Clearly, $|w_0|=m\leq n$, and we have computed the coordinates for all its positions (and maybe eliminated some positions of the initial input word $w$). We partition in linear time $O(n)$ the interval $[1,m]$ into $2t+1$ (possibly empty) lists of positions $L_1,\ldots,L_{2t+1}$ such that the following conditions hold. Firstly, all elements of $L_i$ are smaller than those of $L_{i+1}$ for $1\leq i\leq 2t$. Secondly, for $i$ odd, the elements $j$ in $L_i$ have $x_j+y_j<k+1$; for each $i$ even, there exist $a_i,b_i$ such that $a_i+b_i=k+1$ and for all $j$ in $L_i$ we have $x_j=a_i, y_j=b_i$. Thirdly, we want $t$ to be minimal with these properties. We now produce, also in linear time, a new list $U$: for each $i\leq t$ and $j\in L_{2i}$ we add the triplet $(i,w[j],j)$ in $U$. We sort the list of triples $U$ (cf. \cite[Prop. 10]{KufMFCS}) with radix sort in linear time \cite{cormen}. After sorting it, $U$ can be decomposed in $t$ consecutive blocks $U_1$, $U_2, \ldots,$ $U_t$, where $U_i$ contains the positions of $L_{2i}$ sorted w.r.t. the order on $\Sigma$ (i.e., determined by the second component of the pair). As such, $U_i$ induces a new order on the positions of $w_0$ stored in $L_{2i}$. We can now construct a word $w_1$ by just writing in order the letters of $w_0$ corresponding to the positions stored in $L_i$, for $i$ from $1$ to $2t+1$, such that the letters of $L_i$ are written in the original order, for $i$ odd, and in the order induced by $U_i$, for $i$ even. Clearly, this is a correct implementation of phase $2$ which runs in linear time. The word $w_1$ is the shortlex normal form of~$w$.

Summing up, we have shown how to compute the shortlex normal form of a word in linear time (for integer alphabets). Both our claims follow.
\qed
\end{proof}

This improves the complexity of the algorithm reported in~\cite{KufMFCS}, where the problem was solved in $O(n|\Sigma|)$ time. As such, over integer alphabets, testing Simon's congruence for a given $k$ can be done in optimal time, that does not depend on the input alphabet or on $k$. When no restriction is made on the input alphabet, we can first sort it, replace the letters by their ranks, and, as such, reduce the problem to the case of integer alphabets. In that case, testing Simon's congruence takes $O(|\Sigma|\log |\Sigma| + n)$ time which is again optimal: for $k=1$, testing if $w_1 \sim_1 w_2$ is equivalent (after a linear time processing) to testing whether two subsets of $\Sigma$ are equal, and this requires $\Theta(|\Sigma| \log |\Sigma|)$ time~\cite{dobkin}.

\section{Scattered Factor Universality}\label{basic}

In this section we present several algorithmic and combinatorial results.

\begin{remark}
Theorem \ref{simon_new} allows us to decide in linear time $O(n)$ whether a word $w$ over $\Sigma=\{1<2<\ldots <\sigma\}$ is $k$-universal, for a given $k\leq n, \sigma\in\N$. We compute the shortlex normal form of $w$ w.r.t. $\sim_k$ and check whether it is $(1\cdot 2 \cdots \sigma)^k.$
\end{remark}

We can actually compute $\iota(w)$ efficiently by computing its arch factorisation in linear time in $|w|$. Moreover this allows us to check whether $w$ is $k$-universal for some given $k$ by just checking if $\iota(w)\geq k$ or not.

\begin{proposition}\label{decomp}
Given a word $w\in\Sigma^n$, we can compute $\iota(w)$ in time $O(n)$.
\end{proposition}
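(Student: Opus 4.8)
The goal is to compute $\iota(w)$, the universality index of $w$, in linear time. By the remarks preceding the statement, $\iota(w)$ equals the number of archs $k$ in the arch factorisation $w = \ar_w(1)\cdots\ar_w(k)r(w)$, since a word is $k$-universal iff its arch factorisation contains at least $k$ archs. So the whole task reduces to computing the arch factorisation in time $O(n)$.

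The plan is to compute the archs greedily by a single left-to-right scan of $w$. An arch is, by definition, a shortest universal factor: starting at some position, we read letters until we have collected all $\sigma = |\letters(w)|$ distinct letters, and the position where the last missing letter appears closes the arch (this last letter is unique in the arch, matching the condition $\ar_w(i)[|\ar_w(i)|]\notin\letters(\ar_w(i)[1..|\ar_w(i)|-1])$). So the algorithm is: first determine $\sigma = |\letters(w)|$ in one pass; then scan $w$, maintaining a counter \texttt{count} of how many distinct letters of $\letters(w)$ have been seen in the current arch-under-construction and a boolean array \texttt{seen} indexed by the alphabet recording which letters have already occurred in the current arch. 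Whenever reading a fresh letter (one not yet \texttt{seen} in the current arch), increment \texttt{count}; once \texttt{count} reaches $\sigma$, the current arch is closed at the present position, we increment the arch-count $k$, and we reset \texttt{seen} and \texttt{count} to begin the next arch at the following position. After the scan, whatever remains forms $r(w)$, and we return $\iota(w)=k$.

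The main care needed is to make every step genuinely $O(1)$ amortised so the total is $O(n)$; there are two subtleties here. First, since the alphabet is an integer alphabet, after a linear-time preprocessing we may assume $\letters(w)\subseteq\{1,\ldots,\sigma\}$ with $\sigma\le n$, so \texttt{seen} can be a plain array of size $\sigma$ with constant-time access. Second, resetting \texttt{seen} to all-\texttt{false} at the end of each arch must not cost $O(\sigma)$ per arch, as that could give $O(k\sigma)$ total, which need not be $O(n)$. The clean fix is to record, during the construction of an arch, exactly which positions of \texttt{seen} were set to \texttt{true} (there are at most $\sigma$ of them, and their total over all archs is at most $n$ since each is charged to a distinct "first occurrence" within its arch), and to clear only those entries when closing the arch; alternatively one uses a timestamp/version array so no explicit clearing is needed at all. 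Either way the reset cost amortises against the letters actually read, keeping the total linear.

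Correctness follows from the definition of the arch factorisation: the greedy scan produces exactly the factorisation into shortest consecutive universal blocks, because the first time $\texttt{count}=\sigma$ is reached is precisely the shortest prefix of the remaining suffix that is universal, its last letter being the last of the $\sigma$ distinct letters to appear and hence unique in that arch. Thus the number of closed archs is exactly $k=\iota(w)$ by the equivalence recalled in the preceding remark. I do not expect a genuine mathematical obstacle here; the only thing that requires attention is the bookkeeping that guarantees the linear running time, namely the amortised-constant resetting of the per-arch state described above.
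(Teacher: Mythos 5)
Your proposal is correct and follows essentially the same greedy left-to-right arch-scanning algorithm as the paper's proof. The only difference is that your timestamp/selective-clearing device for resetting the \texttt{seen} array is unnecessary: a plain $O(\sigma)$ reset per completed arch is already linear in total, because every completed arch contains all $\sigma$ letters and hence has length at least $\sigma$, so $k\sigma\le n$ --- which is exactly the amortisation argument the paper uses.
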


\begin{proof}
We actually compute the number $\ell$ of archs 
in the arch factorisation. For a lighter notation, we use $u_i=\ar_w(i)$ for $i\in[\ell]_0$. The factors $u_i$ can be computed in linear time as follows. 
We maintain an array $C$ of $|\Sigma|$ elements, whose all elements are 
initially $0$, and a counter $h$, which is initially $|\Sigma|$. For simplicity, 
let $m_0=0$. We go through the letters $w[j]$ of $w[m_{i-1}+1..n]$, from left to 
right, and if $C[w[j]]$ equals $0$, we decrement $h$ by $1$ and set 
$C[w[j]]=1$. Intuitively, we keep track of which letters of $\Sigma$ we meet 
while traversing $w[m_{i-1}+1..n]$ using the array $C$, and we store in $h$ how 
many letters we still need to see. As soon as $h=0$ or $j=n$, we stop: set 
$m_i=j$ (the position of the last letter of $w$ we read), 
$u_i=w[m_{i-1}+1..m_i]$ (the $i^{th}$ arch), and $h=|\Sigma|$ again. If $j<n$ then reinitialise all 
elements of $C$ to $0$ and restart the procedure for~$i+1$. Note that if $j=n$ then $u_i$ is $r(w)$ as introduced in the definition of the arch factorization. 
The time complexity of computing $u_j$  is $O(|u_j|)$, because we process each 
symbol of $u_i=w[m_{i-1}+1..m_i]$ in $O(1)$ time, and, at the end of the 
procedure, we reinitialise $C$ in $O(|\Sigma|)$ time iff $u_i$ contained all 
letters of $\Sigma$, so $|u_i|\geq |\Sigma|$. The conclusion follows.  \qed
\end{proof}

The following combinatorial result characterise universality by repetitions.

\begin{theorem}\label{theorem_nonewsubs}
A word $w\in\Sigma^{\geq k}$ with $\letters(w)=\Sigma$ is $k$-universal for $k\in\N_0$ iff 
$\ScatFact_k(w^n)=\ScatFact_k(w^{n+1})$ for an $n\in\N$. Moreover we have $\iota(w^n)\geq kn$ if $\iota(w)=k$.
\end{theorem}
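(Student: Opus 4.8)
The plan is to settle the ``moreover'' part first, since the non-trivial implication of the equivalence will rely on it. Throughout I read the stated condition with a universal quantifier, i.e.\ as ``$\ScatFact_k(w^n)=\ScatFact_k(w^{n+1})$ holds for every $n\in\N$''; this is the reading forced by the example $v=\ta\tb\ta\tb\tc\tc$, which is $1$- but not $3$-universal and yet satisfies $\ScatFact_3(v^2)=\ScatFact_3(v^3)=\{\ta,\tb,\tc\}^3$. Two elementary monotonicity facts are used repeatedly: since $w^m$ is a prefix, hence a scattered factor, of $w^{m+1}$, we have $\ScatFact_k(w^m)\subseteq\ScatFact_k(w^{m+1})\subseteq\Sigma^k$ for all $m$; and $\letters(w)=\Sigma$ forces $\iota(w)\geq 1$.

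For the ``moreover'' part I assume $\iota(w)=k$, so $\ScatFact_k(w)=\Sigma^k$, and show that $w^n$ is $kn$-universal. Given an arbitrary $z\in\Sigma^{kn}$, I cut it into $n$ consecutive blocks $z=z^{(1)}\cdots z^{(n)}$ with each $z^{(i)}\in\Sigma^k$. As $\ScatFact_k(w)=\Sigma^k$, every $z^{(i)}$ embeds as a scattered factor into the $i$-th copy of $w$ inside $w^n$; concatenating these embeddings, which live in distinct, left-to-right ordered copies, yields an embedding of $z$ into $w^n$. Hence $\Sigma^{kn}\subseteq\ScatFact_{kn}(w^n)$, the reverse inclusion is trivial, and $\iota(w^n)\geq kn$ follows.

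The forward implication of the equivalence is then immediate: if $w$ is $k$-universal, then $\Sigma^k=\ScatFact_k(w)\subseteq\ScatFact_k(w^m)\subseteq\Sigma^k$ gives $\ScatFact_k(w^m)=\Sigma^k$ for every $m$, so in particular $\ScatFact_k(w^n)=\ScatFact_k(w^{n+1})$ for all $n$. For the converse I would telescope the hypothesis to $\ScatFact_k(w)=\ScatFact_k(w^2)=\cdots=\ScatFact_k(w^m)$ for every $m\geq 1$, and then invoke the ``moreover'' part with the actual index $p=\iota(w)\geq 1$: it yields $\iota(w^{k})\geq pk\geq k$, so $w^{k}$ is $k$-universal and $\ScatFact_k(w^{k})=\Sigma^k$. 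Combined with the chain of equalities this forces $\ScatFact_k(w)=\Sigma^k$, i.e.\ $w$ is $k$-universal.

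The main obstacle is conceptual rather than computational. One must notice that a single equality $\ScatFact_k(w^n)=\ScatFact_k(w^{n+1})$ for one $n$ is too weak: by the example $v$ it only certifies that the particular power $w^n$ has saturated, not $w$ itself. The quantifier therefore has to range over all $n$, and the force of the statement comes from coupling the stabilisation of the length-$k$ spectrum across all powers with the unbounded growth of $\iota(w^m)$ guaranteed by the ``moreover'' part, which pulls the universality all the way back to $w^1=w$.
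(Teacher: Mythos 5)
Your proof is correct, and it takes a genuinely different route from the paper's --- in fact, a sounder one. Your diagnosis of the quantifier is also right: the existential reading (``for \emph{some} $n$'') is false, and the counterexample is the paper's own, given already in its preliminaries: $v=\ta\tb\ta\tb\tc\tc$ is only $1$-universal, yet $v^2$ is $3$-universal, so $\ScatFact_3(v^2)=\ScatFact_3(v^3)=\{\ta,\tb,\tc\}^3$ while $v$ is not $3$-universal. Your three steps --- the block-embedding proof of $\iota(w^n)\geq kn$ (the paper does the same thing via concatenating the $n$ copies of the arch factorisation), the trivial forward inclusions, and the converse by telescoping $\ScatFact_k(w)=\ScatFact_k(w^2)=\dots=\ScatFact_k(w^k)$ and then pulling universality back through $\iota(w^k)\geq k\,\iota(w)\geq k$ --- all check out.

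The paper, by contrast, attempts the existential version: its converse fixes a single $n$ and argues in two stages. Stage one shows, by induction on the length of a scattered factor, that the single equality $\ScatFact_k(w^n)=\ScatFact_k(w^{n+1})$ already forces $\ScatFact_k(w^n)=\Sigma^k$; this stage is correct. Stage two tries to descend from $w^n$ to $w$: assuming $w^{n-1}$ is not $k$-universal, it invokes H\'ebrard's Proposition~2 to conclude $m(w^{n-1})\ta\not\in\ScatFact_k(w^{n})$ for a letter $\ta$ missing from the rest. That step is broken: H\'ebrard only yields $m(w^{n-1})\ta\not\in\ScatFact(w^{n-1})$, and non-membership does not transfer to the next power --- for the example above, $\tc\ta\not\in\ScatFact(v)$ but $\tc\ta\in\ScatFact(v^2)$ --- nor can the step be repaired, since the existential statement itself is false. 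So your argument is not merely an alternative; it establishes the only true reading of the theorem, precisely where the paper's descent fails. One thing the paper's (correct) first stage buys that your telescoping does not: applied at $n=1$, it shows that the \emph{single} equality $\ScatFact_k(w)=\ScatFact_k(w^2)$ already characterises $k$-universality, which is the sharpest true form of the statement and implies your all-$n$ version.
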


\begin{proof}
Assume firstly $w$ to be $k$-universal, i.e. we have $\ScatFact_k(w)=\Sigma^k$. This implies $\Sigma^k\subseteq\ScatFact_k(w^{n})$ for all $n\in\N$. 
On the other hand we have $\ScatFact_k(w^n)\subseteq\Sigma^k$ and thus $\ScatFact_k(w^n)=\Sigma^k=\ScatFact_k(w^{n+1})$ for all $n\in\N$ . 
For the second direction assume for a fixed $n\in\N$, $\ScatFact_k(w^n)=\ScatFact_k(w^{n+1})$. 
We prove firstly that $\ScatFact_k(w^n)=\Sigma^k$ holds. Let $v\in\Sigma^{\ast}$. If $v=\varepsilon$, we have $v\in\ScatFact_k(w^n)$. 
Let $|v|=\ell\in\N$ and assume that $\Sigma^{\ell-1}\subseteq\ScatFact_k(w^n)$. Thus $v[1\dots\ell-1]\in\ScatFact_k(w^n)$. 
By $\letters(w)=\Sigma$ we have $v\in\ScatFact_k(w^{n+1})=\ScatFact_k(w^n)$. If $n=1$ we have immediately $\ScatFact_k(w)=\Sigma^k$ and thus $w$ is $k$-universal. 
Consider $n\in\N_{\geq 2}$. Suppose now $\ScatFact_k(w^{n-1})\subset\Sigma^k$. Let $w^{n-1}=\ar_{w^{n-1}}(1)\dots\ar_{w^{n-1}}(\ell)r(w^{n-1})$ be the arch factorisation of $w^{n-1}$ for an appropriate $\ell\in[k-1]$. Choose $p\in[\ell]$ such that $w=\ar_w(1)\dots\ar_w(p)r'$ and $r'$ is a proper prefix of $\ar_{w^{n-1}}(p+1)$. Then $r'$ is a suffix of $w^{n-1}$ and $w^n$. Especially $r(w)$ is a suffix of $r'$. Choose $\ta\in\Sigma\backslash\letters(r')$, i.e. $\ta\not\in\letters(r(w))$. By \cite[Propostion~2]{TCS::Hebrard1991}, $m[w^{n-1}]\ta\not\in\ScatFact_k(w^n)$. By $\letters(w)=\Sigma$ we have on the other hand $m[w^{n-1}]\ta\in\ScatFact_k(w^{n+1})$ - a contradiction. Thus we have $\ScatFact_k(w^{n-1})=\Sigma^k$. Inductively we get $\ScatFact_k(w)=\Sigma^k$ and thus $w$ is $k$-universal.

\medskip

For the second claim, we get immediately that $w^n$ is at least $kn$-universal if $\iota(w)=k$,  since the arch factorisation of $w$ occurs in each $w$ of the repetition.\qed
\end{proof}

As witnessed by $w=\ta\ta\tb\tb\in\{\ta,\tb\}^{\ast}$, $\iota(w^n)$ can be greater than $n\cdot\iota(w)$: $w$ is universal, not $2$-universal but $w^2=\ta\ta\tb.\tb\ta.\ta\tb.\tb$ is $3$-universal. We study this phenomenon at the end of this section.
Theorem~\ref{theorem_nonewsubs} can also be used to compute an uncommon scattered factor of $w$ and $ww$ over arbitrary alphabets; note that the shortest such a factor has to have length $k+1$ if $\iota(w)=k$.

\begin{proposition}\label{minlength}
Given a word $w\in\Sigma^{\ast}$ we can compute in linear time $O(|w|)$ one of the uncommon scattered factors of $w$ und $ww$ of minimal length.
\end{proposition}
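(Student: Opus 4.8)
The plan is to extract everything we need from the arch factorisation of $w$, which by Proposition~\ref{decomp} is computable in time $O(|w|)$. As scattered factors of $w$ and $ww$ only involve letters of $w$, I set $\Sigma=\letters(w)$, so that $k:=\iota(w)\geq 1$, and write $w=\ar_w(1)\cdots\ar_w(k)\,r(w)$. Every scattered factor of $w$ embeds into the first copy of $w$ inside $ww$, hence $\ScatFact(w)\subseteq\ScatFact(ww)$ and every uncommon scattered factor lies in $\ScatFact(ww)\setminus\ScatFact(w)$. Since $w$ is $k$-universal, so is $ww$, giving $\ScatFact_{\leq k}(w)=\ScatFact_{\leq k}(ww)=\Sigma^{\leq k}$; thus no uncommon factor has length $\leq k$. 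Conversely, $w$ is not $(k+1)$-universal, whereas the ``moreover'' part of Theorem~\ref{theorem_nonewsubs} yields $\iota(ww)\geq 2k\geq k+1$, so $ww$ is $(k+1)$-universal. Hence some word of length $k+1$ distinguishes the two words, and the minimal length of an uncommon scattered factor is exactly $k+1$.

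To produce such a word explicitly I would take $v=m(w)\,\ta$, where $\ta\in\Sigma\setminus\letters(r(w))$; such a letter exists because $r(w)$ is the rest of the arch factorisation and hence is not universal. As $|v|=k+1$ and $ww$ is $(k+1)$-universal, $v\in\ScatFact_{k+1}(ww)$. The crux is that $v\notin\ScatFact(w)$, which is exactly \cite[Proposition~2]{TCS::Hebrard1991} (already used in the proof of Theorem~\ref{theorem_nonewsubs}). I would justify it through the greedy left-to-right embedding of $m(w)$ into $w$: since $m(w)[i]$ is the letter completing the $i$-th arch, it makes its first appearance beyond the previous arch exactly at the end of the $i$-th arch, so matching $m(w)$ greedily consumes precisely the $k$ archs and leaves only the suffix $r(w)$; as the earliest-match embedding is optimal, appending any $\ta\notin\letters(r(w))$ cannot be matched, whence $v\notin\ScatFact(w)$.

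All steps run in linear time: the procedure of Proposition~\ref{decomp} delivers the arch boundaries, the word $m(w)$, and the set $\letters(r(w))$ in a single pass; a letter $\ta\notin\letters(r(w))$ is then found in $O(|\Sigma|)=O(|w|)$ time, and writing out $v$ costs $O(k+1)=O(|w|)$. I expect the running time to be routine; the main obstacle is the structural part, namely the lower bound $k+1$ and the non-membership $m(w)\ta\notin\ScatFact(w)$. Both, however, reduce to Hébrard's analysis of the arch factorisation together with Theorem~\ref{theorem_nonewsubs}, so the proof is mainly a careful assembly of tools that are already at hand.
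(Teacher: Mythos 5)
Your proposal is correct and follows essentially the same route as the paper: compute the arch factorisation in linear time via Proposition~\ref{decomp}, take $v=m(w)\ta$ with $\ta\in\Sigma\setminus\letters(r(w))$, show $v\notin\ScatFact(w)$ by the greedy leftmost-embedding argument (the paper carries out explicitly what you delegate to H\'ebrard's Proposition~2), and observe that $k$-universality of $w$ rules out shorter uncommon factors. The only cosmetic differences are that you treat the empty-rest case uniformly where the paper splits into two cases, and your appeal to Theorem~\ref{theorem_nonewsubs} for existence is redundant once $v$ is exhibited.
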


\begin{proof}
Just like in the proof of Theorem \ref{decomp}, we compute the decomposition (arch factorisation) $w=u_1\ldots u_k$ such that, for $i\in [k-1]$, the factor $w[1..m_i]=u_1\cdots u_i$ is the shortest prefix of $w$ such that $\Sigma^i \subseteq \ScatFact_i (w[1..m_i])$, and $u_k$ (called in the arch factorisation {\em the rest}) either does not contain all letters of $\Sigma$ or it does, but if we remove its last letter then it does not contain anymore all letters of $\Sigma$, i.e., $\Sigma^k \subseteq \ScatFact_k (w)$ but $\Sigma^k \not \subseteq \ScatFact_k (w[1..n-1])$. 

If $u_k$ does not contain all letters of $\Sigma$, then $k>1$ (as $w$ contains all letters of $\Sigma$). The procedure described in the proof of Theorem \ref{decomp} identifies a letter $\ta$ that does not occur in $u_k$. We construct the word $x=w[m_1]w[m_2]\cdots w[m_{k-1}]\ta = m(w)\ta$ (where $m(w)$ is defined w.r.t. the arch factorisation). Then $x$ is not a scattered factor of $w$ (and all shorter words are scattered factors of $w$), but $x$ is scattered factor of $ww$ (as $\ta$ occurs in $w$, because $k>1$). Indeed, if $x$ were a scattered factor of $w$, then its \nth{$i$} letter should correspond to the letter occurring on position $j_i\geq m_i$ of $w$. 
This is clear for $m_1$: if $w[m_1]$  occurred also to the left of $m_1$ in $w$, then $u_1$ would not be the shortest prefix of $w[1..n]$ that contains all letters of $\Sigma$. 
Then, for $i\geq 1$, assume the property holds for the first $i$ letters of $x$. We show it for $i+1$. So, $x[i+1]$ should correspond to a letter of $w$ occurring to the right of $x[i]$. 
So on a position strictly greater than $m_i$. But $x[i+1]=w[m_{i+1}]$ occurs of the first time to the left of $m_i$ on position $m_{i+1}$. So, our statement is correct. Now, if the \nth{$(k-1)$} letter of $x$ occurs 
on a position greater or equal to $m_{k-1}$, then the last letter of $x$, namely $\ta$, should occur in $u_k=w[m_{k-1}+1..m_k]$, a contradiction.

If $u_k$ contains all letters of $\Sigma$, then let $x=w[m_1]w[m_2]\cdots w[m_{k}]\ta=m(w)\ta$, for some $\ta\in \Sigma$. Just like before, we can show that $x$ is not a scattered factor of $w$, but all shorter words are scattered factors of $w$; also $x$ is clearly a scattered factor of $ww$. 

Running the procedure described in Theorem \ref{decomp} takes linear time, and constructing $x$ also takes linear time. The conclusion follows.\qed
\end{proof}

\begin{remark}
By Proposition~\ref{minlength}, computing the shortest uncommon scattered factor of $w$ and $ww$ takes optimal $O(n)$ time, which is more efficient than running an algorithm computing the shortest uncommon scattered factor of two arbitrary words (see, e.g., \cite{ElzingaRW08,KufMFCS}, and note that we are not aware of any linear-time algorithm performing this task for integer alphabets). In particular, we can use Theorem \ref{simon_new} to find by binary search the smallest $k$ for which two words have distinct $k$-spectra in $O(n \log n)$ time. In \cite{TCS::Hebrard1991} a linear time algorithm solving this problem is given for binary alphabets; an extension seems non-trivial.
\end{remark}

Continuing the idea of Theorem~\ref{theorem_nonewsubs}, we investigate even-length palindromes, i.e. appending $w^R$ to $w$. The first result is similar to Theorem~\ref{theorem_nonewsubs} for $n=1$. Notice that $\iota(w)=\iota(w^R)$ follows immediately with the arch factorisation.

\begin{corollary}\label{wwRuniv}
A word $w$ is $k$-universal iff $\ScatFact_k(w)= \ScatFact_k(ww^R)$.
\end{corollary}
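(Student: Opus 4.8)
The plan is to treat the two implications separately, following the pattern of Theorem~\ref{theorem_nonewsubs} specialised to a single appended copy, and exploiting that $\letters(w^R)=\letters(w)=\Sigma$ (the same observation that underlies the already-noted identity $\iota(w)=\iota(w^R)$).

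First I would handle the easy direction. Assume $w$ is $k$-universal, so $\ScatFact_k(w)=\Sigma^k$. Since $w$ is a prefix of $ww^R$, every scattered factor of $w$ is a scattered factor of $ww^R$, giving $\Sigma^k=\ScatFact_k(w)\subseteq\ScatFact_k(ww^R)$. The reverse inclusion $\ScatFact_k(ww^R)\subseteq\Sigma^k$ is immediate because $\letters(ww^R)=\Sigma$, so the two spectra coincide.

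For the converse I would assume $\ScatFact_k(w)=\ScatFact_k(ww^R)$ and prove $\ScatFact_k(w)=\Sigma^k$ by induction on the length of a target word, exactly as in the backward direction of Theorem~\ref{theorem_nonewsubs} with $n=1$. The empty word is trivially present; assuming every word of length $\ell-1$ over $\Sigma$ is a scattered factor of $w$, take $v\in\Sigma^{\ast}$ of length $\ell\leq k$. Then $v[1..\ell-1]$ is a scattered factor of $w$, and since $v[\ell]\in\Sigma=\letters(w^R)$, the letter $v[\ell]$ occurs in the appended copy $w^R$; hence $v=v[1..\ell-1]\,v[\ell]$ is a scattered factor of $ww^R$. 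By hypothesis $v\in\ScatFact_k(ww^R)=\ScatFact_k(w)$, which closes the induction and yields $\Sigma^k\subseteq\ScatFact_k(w)$, i.e. $w$ is $k$-universal.

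The main thing to get right is that this is genuinely the $n=1$ instance of Theorem~\ref{theorem_nonewsubs}: there the delicate part of the converse is the arch-factorisation argument handling $n\geq 2$, which does not arise here because a single extra copy already suffices to append one letter to any scattered factor. Thus the only substantive input is that the appended word has full alphabet $\Sigma$, which holds for $w^R$ just as it does for $w$. I would also record that, as in Theorem~\ref{theorem_nonewsubs}, universality is understood with respect to $\letters(w)$, since $\letters(ww^R)=\letters(w)$ and the equivalence would fail for a strictly larger ambient alphabet (e.g. $w=\ta$ over $\Sigma=\{\ta,\tb\}$ gives $\ScatFact_1(w)=\ScatFact_1(ww^R)=\{\ta\}$ while $w$ is not $1$-universal w.r.t.\ $\{\ta,\tb\}$).
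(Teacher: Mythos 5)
Your proof is correct and takes essentially the same approach as the paper: the paper gives no separate argument for this corollary, presenting it exactly as the $n=1$ instance of Theorem~\ref{theorem_nonewsubs} adapted to appending $w^R$, which is precisely what you reconstruct. The only ingredient beyond that theorem's argument is $\letters(w^R)=\letters(w)=\Sigma$ (the same observation behind $\iota(w)=\iota(w^R)$), which you correctly identify as the sole substantive input.
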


In contrast to $\iota(w^2)$, $\iota(ww^R)$ is never greater than $2\iota(w)$.

\begin{proposition}\label{ww^R}
Let $w\in\Sigma^{\ast}$ be a palindrome and $u=\Pref_{\lfloor\frac{|w|}{2}\rfloor}(w)$ with $\iota(u)=k\in\N$. For $|w|$ even we have $\iota(w)=2k$ if $|w|$ even and for $|w|$ odd we get $\iota(w)=2k+1$ iff $w[\frac{n+1}{2}]\cup\letters(r(u))=\Sigma$.
\end{proposition}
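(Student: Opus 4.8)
The plan is to reduce everything to the behaviour of the greedy arch factorisation at an arch boundary. Write $n=|w|$, $m=\lfloor n/2\rfloor$, and let $u=\Pref_m(w)$ have arch factorisation $u=\ar_u(1)\cdots\ar_u(k)r(u)$; since $\iota(u)=k$ \emph{exactly}, we have $\letters(r(u))\subsetneq\Sigma$. As $w$ is a palindrome, $w=uu^R$ when $n$ is even and $w=ucu^R$ when $n$ is odd, where $c=w[\frac{n+1}{2}]$ is the central letter. The first observation I would establish is that the first $k$ archs of $w$ coincide with those of $u$: each arch is the shortest universal prefix of the current suffix, all $k$ archs of $u$ lie within the prefix $u$ (they end at some position $m_k\le m$), and $w$ agrees with $u$ there, so the greedy procedure on $w$ reproduces $\ar_u(1),\dots,\ar_u(k)$. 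Since the factorisation then continues greedily on whatever suffix remains past an arch boundary, this yields the additive identity $\iota(w)=k+\iota(s)$, where $s=r(u)u^R$ (even case) or $s=r(u)cu^R$ (odd case) is the part of $w$ following $\ar_u(k)$.

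The single combinatorial tool I need is: \emph{if $z\in\Sigma^{\ast}$ satisfies $\letters(r(u))\cup\letters(z)\subsetneq\Sigma$, then $\iota(uz)=k$.} This follows from the same boundary argument — the first $k$ archs of $uz$ are the archs of $u$, and the remaining suffix $r(u)z$ contains no complete arch since it misses a letter, so it is entirely the rest. Combined with the identity $\iota(x)=\iota(x^R)$ (noted just before the proposition), this settles the even case at once: using $(r(u)u^R)^R=u(r(u))^R$ we get $\iota(s)=\iota(r(u)u^R)=\iota(u(r(u))^R)$, and applying the tool with $z=(r(u))^R$ — here $\letters(r(u))\cup\letters((r(u))^R)=\letters(r(u))\subsetneq\Sigma$ — gives $\iota(s)=k$, hence $\iota(w)=2k$.

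For the odd case I would split on the stated condition. If $\{c\}\cup\letters(r(u))=\Sigma$, then since $\letters(r(u))\subsetneq\Sigma$ the unique letter missing from $r(u)$ is $c$, so $r(u)$ already contains every other letter and $c$ first appears in $s=r(u)cu^R$ immediately after $r(u)$; thus the shortest universal prefix of $s$ is exactly $r(u)c$, leaving $u^R$. This gives $\iota(s)=1+\iota(u^R)=1+k$ and therefore $\iota(w)=2k+1$. Conversely, if $\{c\}\cup\letters(r(u))\neq\Sigma$, I reverse $s$: since $(r(u)cu^R)^R=uc(r(u))^R$ we have $\iota(s)=\iota(uc(r(u))^R)$, and applying the tool with $z=c(r(u))^R$ (whose letters together with those of $r(u)$ form $\letters(r(u))\cup\{c\}\subsetneq\Sigma$) gives $\iota(s)=k$, hence $\iota(w)=2k$. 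The two subcases together establish the claimed equivalence $\iota(w)=2k+1\iff\{c\}\cup\letters(r(u))=\Sigma$.

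The step I expect to be the main obstacle is making the boundary-additivity rigorous: one must argue cleanly that the greedy factorisation of $w$ really recovers $\ar_u(1),\dots,\ar_u(k)$ (so that no arch of $w$ leaks past position $m_k$ into $r(u)$ earlier or later than in $u$) and that $\iota$ of a word equals the number of archs completed before a boundary plus $\iota$ of the trailing suffix. Everything else is bookkeeping with the strict inclusion $\letters(r(u))\subsetneq\Sigma$ and the reversal identity; the only further care needed is confirming, in the equality subcase, that $c\notin\letters(r(u))$ so that $r(u)c$ is genuinely the first arch of $s$.
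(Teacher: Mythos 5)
Your proof is correct, and it takes a genuinely different route from the paper's. The paper handles the even case with two separate bounds: $\iota(uu^R)\geq 2k$ because the $k$ archs of $u$ are followed in $u^R$ by $k$ more universal factors, and for the upper bound it assumes a $(2k{+}1)$-arch factorisation, observes that the $(k{+}1)$-st arch must then straddle the midpoint as $\ar_{uu^R}(k{+}1)=r(u)y$ with $y$ a prefix of $u^R$, and disposes of the three cases $|r(u)|=|y|$, $|r(u)|<|y|$, $|r(u)|>|y|$ by contradicting, respectively, the universality of that arch, the equality $\iota(u^R)=k$, and the unique-last-letter property of archs; the odd case is then only sketched as ``a similar argument''. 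You instead compute $\iota(w)$ exactly, with no case analysis, from two facts: the greedy additivity $\iota(uv)=k+\iota(r(u)v)$ (legitimate, since archs are precisely shortest universal prefixes, so the factorisation of $uv$ starts with the $k$ archs of $u$; the paper relies on the same fact implicitly, e.g.\ in Proposition~\ref{decomp}), and the reversal identity applied a second time to the trailing suffix, $\iota(r(u)u^R)=\iota(u\,(r(u))^R)$, after which additivity applies once more and the leftover $r(u)(r(u))^R$ misses a letter, so it contributes nothing. This double-reversal trick is what replaces the paper's case analysis; it buys you an exact value rather than matching bounds, a fully written-out odd case --- including the complementary statement $\iota(w)=2k$ when $\{w[\frac{n+1}{2}]\}\cup\letters(r(u))\neq\Sigma$, which the paper leaves implicit --- and, in the forward odd subcase, the clean identification of $r(u)c$ as the first arch of the remaining suffix. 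The step you flagged as the main obstacle, rigour of the boundary additivity, is indeed the only thing left to formalise, and your sketch (the archs of $w$ up to position $m_k$ coincide with those of $u$ because $w$ and $u$ agree there and archs are shortest universal prefixes, after which greediness takes over on the suffix) is exactly the right argument.
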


\begin{proof}
Consider firstly $|w|\equiv_20$, i.e. $w=uu^R$. By $\iota(u)=k$, $u$ has an arch factorisation with $k$ factors which also occur in $u^R$. This implies $\iota(ww^R)\geq 2k$. Suppose $\iota(uu^R)=2k+1$. Let $uu^R=\ar_{uu^R}(1)\dots \ar_{uu^R}(2k+1)r(uu^R)$ be the arch factorisation.  Since 
$k$ is maximal, $\ar_{uu^R}(1)\dots \ar_{uu^R}(k+1)$ is not a prefix of $u$, i.e. $\ar_{uu^R}(k+2)$ is a factor of $u^R$ and thus $\ar_{uu^R}(k+2)\dots \ar_{uu^R}(2k+1)r(uu^R)$ is a suffix of 
$u^R$. Hence we get $\ar_{uu^R}(k+1)=r(u)y$ for a prefix $y$ of $u^R$. If $|r(u)|=|y|$ we have $r(u)=y^R$ and thus $\Sigma=\letters(\ar_{uu^R}(k+1))=\letters(r(u))\subset\Sigma$.
If $|r(u)|<|y|$, then $r(u)^R$ is a prefix of $y$. This implies $\Sigma=\letters(\ar_{uu^R}(k+1))=\letters(y)$ and consequently we found an arch factorisation of $w$ (the second one) with $k+1$ factors. Consider now $|r(u)|>|y|$. Then $y^R$ is a suffix of $r(u)$ but by the definition of the arch factorisation $y[|y|]$ does not occur in $r(u)[1\dots |r(u)|-1]$. Since we get a contradiction in all three cases, the claim is proven for even-length palindromes.

\medskip

By a similar argument odd-length palindromes have to have exactly the letter in the middle which is missing in $r(u)$ to be $1$-universal.\qed
\end{proof}

\begin{remark}
If we consider the universality of a word $w=w_1\dots w_m$ for $m\in\N$ with $w_i\in\{u,u^R\}$ for a given word $u\in\Sigma^{\ast}$, then a combination of the previous results can be applied. Each time either $u^2$ or $(u^R)^2$ occurs Theorem~\ref{theorem_nonewsubs} can be applied (and the results about circular universality that finish this section). Whenever $uu^R$ or $u^Ru$ occur in $w$, the results of Proposition~\ref{ww^R} are applicable.
\end{remark}

Another generalisation of Theorem~\ref{theorem_nonewsubs} is to investigate concatenations under permutations:  for a morphic permuation $\pi$ of $\Sigma$ can we compute $\iota(w\pi(w))$?

\begin{lemma}\label{piw}
Let $\pi:\Sigma^{\ast}\rightarrow\Sigma^{\ast}$ be a morphic permutation. Then $\iota(w)=\iota(\pi(w))$ for all $w\in\Sigma^{\ast}$ and especially the factors of the arch factorisation of $w$ are mapped by $\pi$ to the factors of the arch factorisation of $\pi(w)$.
\end{lemma}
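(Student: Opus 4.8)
The plan is to reduce everything to the elementary observation that a bijective morphism of the free monoid $\Sigma^{\ast}$ is nothing but a permutation of $\Sigma$ applied letter by letter. First I would argue that $\pi$ must be length-preserving and act letterwise: since $\pi$ is a morphism with $\pi(\varepsilon)=\varepsilon$, injectivity forces $\pi(\ta)\neq\varepsilon$ for every letter $\ta$, and surjectivity then forces each letter to be the image of a single letter (a length count shows that if every letter is to lie in the image, no letter can have an image of length $\geq 2$). Hence $\pi$ restricts to a bijection of $\Sigma$ and $\pi(w)=\pi(w[1])\pi(w[2])\cdots\pi(w[|w|])$, so in particular $\pi(w)[i]=\pi(w[i])$ and $|\pi(w)|=|w|$.

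The key step is then that $\pi$ commutes with taking the alphabet of a factor: for every $v\in\Sigma^{\ast}$ we have $\letters(\pi(v))=\pi(\letters(v))$, and since $\pi|_\Sigma$ is a bijection this gives $\letters(v)=\Sigma$ iff $\letters(\pi(v))=\Sigma$. In words, a factor $v$ is universal exactly when $\pi(v)$ is universal. Because $\pi$ acts letterwise, this equivalence holds simultaneously for every factor $w[a..b]$ and its image $\pi(w)[a..b]=\pi(w[a..b])$.

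Next I would transfer this to the arch factorisation, which is the greedy left-to-right decomposition computed in Proposition~\ref{decomp}: the arch boundaries $0=m_0<m_1<\cdots<m_k$ are determined by letting $m_i$ be the smallest index $j>m_{i-1}$ for which $w[m_{i-1}+1..j]$ contains all letters of $\Sigma$, with $r(w)$ the remaining non-universal suffix. Since $\pi(w)[m_{i-1}+1..j]=\pi(w[m_{i-1}+1..j])$ is universal iff $w[m_{i-1}+1..j]$ is, the smallest such $j$ is the same for $w$ and for $\pi(w)$; inductively the boundaries coincide. Therefore $\ar_{\pi(w)}(i)=\pi(\ar_w(i))$ for all $i\in[k]$ and $r(\pi(w))=\pi(r(w))$, which is exactly the ``especially'' part of the statement. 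Finally, as the number of archs of a word equals its universality index, the two factorisations having the same number $k$ of archs yields $\iota(\pi(w))=k=\iota(w)$.

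I expect the only genuinely delicate point to be the first step, namely checking that the given abstract definition of a morphic permutation (bijective plus multiplicative) really forces a letterwise renaming; once that normal form is in hand, preservation of universality and hence of the arch boundaries is routine, and the index equality is immediate from counting archs.
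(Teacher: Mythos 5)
Your proof is correct and follows essentially the same route as the paper's: both transport the arch factorisation of $w$ through $\pi$, using that a morphic permutation acts as a letterwise renaming of $\Sigma$ and therefore preserves universality of factors, so that the image of the factorisation is exactly the arch factorisation of $\pi(w)$ and the equal arch counts give $\iota(w)=\iota(\pi(w))$. The differences are only presentational: you justify explicitly that a bijective morphism must act letterwise and then argue that the greedy arch boundaries coincide, whereas the paper leaves the letterwise action implicit, verifies instead the defining ``unique last letter'' property for the image archs, and obtains the converse by applying $\pi^{-1}$.
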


\begin{proof}
Let $w\in\Sigma^{\ast}$ and $w=\ar_w(1)\dots\ar_w(k)r(w)$ be the arch factorisation of $w$ for an appropriate $k\in\N_0$. 
By the definition of the arch factorisation $\ar_w(i)[|\ar_w(i)|]$ does not occur in $\ar_w(i)[1\dots |\ar_w(i)|-1]$ for all $i\in[k]$.
Set $k_i=\sum_{j=1}^i|\ar_w(j)|$ for $i\in[k]$.
Thus $\pi(\ar_w(i)[|\ar_w(i)|])$ occurs only once in $\pi(w)[k_i+1\dots k_{i+1}]$ and exactly as the last letter. Hence $\pi(\ar_w(1))\dots\pi(\ar_w(k))\pi(r(w))$ is the arch factorisation of $\pi(w)$. The other direction follows by applying $\pi^{-1}$ as a permutation to $\pi(w)$.\qed
\end{proof}

By Lemma~\ref{piw} we have $2\iota(w)\leq\iota(w\pi(w))\leq2\iota(w)+1$. Consider the universal word $w=\ta\tb\tc\tb\ta$. For $\pi(\ta)=\tc$, $\pi(\tb)=\tb$, and $\pi(\tc)=\ta$ we obtain $w\pi(w)=\ta\tb\tc.\tb\ta\tc.\tb\ta\tb\tc.$ which is $3$-universal. However, for the identity $\id$ on $\Sigma$ we get that $w\id(w)$ is $2$-universal. We can show exactly the case when $\iota(w\pi(w))=2\iota(w)+1$.

\begin{proposition}\label{pimore}
Let $\pi:\Sigma^{\ast}\rightarrow\Sigma^{\ast}$ be a morphic permutation and $w\in\Sigma^{\ast}$ with the arch factorisation $w=\ar_w(1)\dots\ar_w(k)r(w)$ and $\pi(w)^R=\ar_{\pi(w)^R}(1)\dots$ $\ar_{\pi(w)^R}(k)r(\pi(w)^R)$ for an appropriate $k\in\N_0$. Then $\iota(w\pi(w))=2\iota(w)+1$ iff $\letters(r(w)r(\pi(w)^R))=\Sigma$, i.e. the both rests together are $1$-universal.
\end{proposition}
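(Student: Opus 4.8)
The plan is to compute $\iota(w\pi(w))$ exactly rather than merely bound it: we already know $2\iota(w)\le \iota(w\pi(w))\le 2\iota(w)+1$, so the whole task is to decide when the extra arch appears, and I expect this to collapse to a single $1$-universality test on the concatenation of the two rests. The engine of the argument is the following additivity property of the arch factorisation, which comes from its greedy, leftmost nature: if $v=\ar_v(1)\cdots\ar_v(m)r(v)$ is the arch factorisation of $v$, then for every word $z$ the first $m$ arches of $vz$ are exactly $\ar_v(1),\ldots,\ar_v(m)$, whence $\iota(vz)=m+\iota(r(v)z)$. I would first establish this lemma by a short induction: no universal prefix of $vz$ is strictly shorter than $\ar_v(1)$ (such a prefix would lie inside $v$ and contradict minimality of $\ar_v(1)$), so the leading arch of $vz$ coincides with that of $v$, and one peels off arches one at a time until the remainder $r(v)z$ is reached.

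Setting $k=\iota(w)$ and $w=\ar_w(1)\cdots\ar_w(k)r(w)$, I would apply the additivity property with $v=w$ and $z=\pi(w)$ to get $\iota(w\pi(w))=k+\iota(r(w)\pi(w))$. Next I would pass to the reversal, using $\iota(x)=\iota(x^R)$, to rewrite $\iota(r(w)\pi(w))=\iota(\pi(w)^R\,r(w)^R)$. Since $\iota(\pi(w)^R)=\iota(\pi(w))=\iota(w)=k$ by Lemma~\ref{piw} together with reversal invariance, the word $\pi(w)^R$ has exactly $k$ arches with rest $r(\pi(w)^R)$, so a second application of the additivity property, now with $v=\pi(w)^R$ and $z=r(w)^R$, yields $\iota(\pi(w)^R\,r(w)^R)=k+\iota\big(r(\pi(w)^R)\,r(w)^R\big)$. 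Chaining these identities gives the clean formula $\iota(w\pi(w))=2k+\iota\big(r(\pi(w)^R)\,r(w)^R\big)$.

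It then only remains to read off the last term. Both $r(\pi(w)^R)$ and $r(w)^R$ are rests, hence non-universal, so their concatenation has universality index $0$ or $1$, and it equals $1$ exactly when the concatenation is universal, i.e. when $\letters(r(\pi(w)^R))\cup\letters(r(w)^R)=\Sigma$. As reversal preserves the alphabet of a word, this is the same as $\letters(r(w)\,r(\pi(w)^R))=\Sigma$, the claimed condition; substituting back gives $\iota(w\pi(w))=2k+1$ precisely in that case and $2k$ otherwise. I expect the only delicate point, and the step I would write most carefully, to be the additivity lemma itself: one must verify that greedily reading $vz$ reproduces the arches of $v$ verbatim without merging or splitting them at the junction before it continues into $r(v)z$. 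Once that is secured, the rest is bookkeeping with reversal invariance and the definition of $1$-universality (richness).
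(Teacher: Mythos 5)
Your proof is correct, but it takes a genuinely different route from the paper's. The paper argues the two directions of the equivalence separately: for sufficiency it exhibits an explicit decomposition of $w\pi(w)$ into $2k+1$ universal blocks, namely the $k$ archs of $w$, the middle block $r(w)\,r(\pi(w)^R)^R$ (universal precisely under the stated alphabet condition), and the $k$ reversed archs of $\pi(w)^R$, then invokes the upper bound $\iota(w\pi(w))\leq 2\iota(w)+1$; for necessity it analyses the arch factorisation of $w\pi(w)$ directly, showing that the first $k$ archs lie in $w$, that $r(w)$ is a prefix of $\ar_{w\pi(w)}(k+1)$, and that the remainder of that arch is exactly $r(\pi(w)^R)^R$. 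You instead isolate one reusable structural fact — the prefix-stability of the greedy arch factorisation, $\iota(vz)=\iota(v)+\iota(r(v)z)$ — and apply it twice (once to $w\cdot\pi(w)$, once after reversal to $\pi(w)^R\cdot r(w)^R$, using $\iota(x)=\iota(x^R)$ and Lemma~\ref{piw}), arriving at the exact identity $\iota(w\pi(w))=2k+\iota\bigl(r(\pi(w)^R)\,r(w)^R\bigr)$; combined with your (correct) observation that a concatenation of two non-universal words has universality index at most $1$, both directions of the equivalence drop out simultaneously. What your approach buys is modularity and strength: the additivity lemma encapsulates the only delicate greedy argument (the same kind of reasoning the paper's converse direction performs ad hoc, and somewhat tersely), and the resulting formula is a sharper statement than the bare iff, since it also recovers $\iota(w\pi(w))=2k$ in the complementary case. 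What the paper's argument buys is that the sufficiency direction is nearly immediate once the block decomposition is written down, with no auxiliary lemma needed. Your additivity lemma does require the care you flag — one must use that each arch is the shortest universal prefix of what follows, so no arch of $vz$ can end strictly inside an arch of $v$ — but your sketch (minimality of $\ar_v(1)$ among universal prefixes, then induction) is exactly the right argument.
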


\begin{proof}
Consider firstly that $r(w)r(\pi(w)^R)$ is $1$-universal. Then we get 
\[
w\pi(w)=\ar_w(1)\dots\ar_w(k)r(w).r(\pi(w)^R)^R(\ar_{\pi(w)^R}(k))^R\dots (\ar_{\pi(w)^R}(1))^R.
\]
Since all archs are $1$-universal by definition, the assumption implies that $w\pi(w)$ is $(2\iota(w)+1)$-universal and thus $\iota(w\pi(w))\geq 2\iota(w)+1$. The equality follows by the definition of $\iota$. For the other direction assume $\iota(w\pi(w))=2\iota(w)+1$. Here, we get the arch factorisation
\[
w\pi(w)=\ar_{w\pi(w)}(1)\dots\ar_{w\pi(w)}(2\iota(w)+1)r(w\pi(w)).
\]
This implies
\begin{multline*}
\ar_{w\pi(w)}(1)\dots\ar_{w\pi(w)}(2\iota(w)+1)r(w\pi(w))\\
=\ar_w(1)\dots \ar_{w}(k)r(w)r(\pi(w)^R)^R(\ar_{\pi(w)^R}(k))^R\dots (\ar_{\pi(w)^R}(1))^R.
\end{multline*}
By  $\iota(w)=k$ only the first $k$ archs can be contained in $w$. 
This implies that $r(w)$ is a prefix of $\ar_{w\pi(w)}(k+1)$. Choose $y\in\Sigma^+$ with $\ar_{w\pi(w)}(k+1)=r(w)y$.
By Lemma~\ref{piw} we have $\ar_w(i)=\pi(\ar_{\pi(w)}(i))$ and thus $y=r(\pi(w)^R))^R$. By $\ar_{w\pi(w)}(k+1)=\Sigma$ the claim is proven.\qed
\end{proof}

Proposition~\ref{pimore} ensures that, for a given word with a non-empty rest, we can raise the universality-index of $w\pi(w)$ by one if $\pi$ is chosen accordingly.

\begin{remark}
Appending permutations of the word instead of its images under permutations of the alphabet, i.e. appending to $w$ abelian equivalent words, does not lead to immediate results as the universality depends heavily on the~permutation. If $w$ is  $k$-universal, a permutation $\pi$ may arrange the letters in lexicographical order, so $\pi(w)$ would only be $1$-universal. On the other hand, the universality can be increased by sorting the letters in $1$-universal factors: $\ta_1^m\ta_2^m\dots\ta_{|\Sigma|}^m$ for $\Sigma=\{\ta_1,\dots,\ta_{|\Sigma|}\}$ is $1$-universal but $(\ta_1\dots\ta_{|\Sigma|})^m$ is $m$-universal, for $m\in\N$.
\end{remark}

In the rest of this section we present results regarding circular universality. Recall that a word $w$ is $k$-circular universal if a conjugate of $w$ is $k$-universal. Consider $\Sigma=\{\ta,\tb,\tc,\td\}$ and $w=\ta\tb\tb\tc\tc\td\ta\tb\ta\tc\td\tb\td\tc$. Note that $w$ is not $3$-universal ($\td\td\ta\not\in\ScatFact_3(w)$) but $2$-universal. Moreover, the conjugate 
$\tb\tb\tc\tc\td\ta\tb\ta\tc\td\tb\td\tc\ta$ of $w$ is $3$-universal; 
accordingly, $w$ is $3$-circular universal.

\begin{lemma}\label{universalcircularuniversal}
Let $w\in\Sigma^{\ast}$. If $\iota(w)=k\in\N$ then $k\leq\zeta(w)\leq k+1$. Moreover if $\zeta(w)=k+1$ then $\iota(w)\geq k$.
\end{lemma}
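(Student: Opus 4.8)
The plan is to reduce the entire statement to a single inequality, $\zeta(w)\le\iota(w)+1$, together with the trivial lower bound, and to obtain that inequality from two monotonicity properties of the universality index under concatenation, exploiting that every conjugate of $w=xy$ has the form $yx$. First I would dispose of the easy parts. Since $w$ is a conjugate of itself and is $\iota(w)$-universal, we get $\zeta(w)\ge\iota(w)=k$ immediately. For the ``moreover'' claim I would observe that it is nothing but the rearrangement of the upper bound: writing $\zeta(w)\le\iota(w)+1$ as $\iota(w)\ge\zeta(w)-1$ shows that $\zeta(w)=k+1$ forces $\iota(w)\ge k$. Thus everything rests on establishing $\zeta(w)\le\iota(w)+1$.

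To prove $\zeta(w)\le\iota(w)+1$ I would establish, for arbitrary $x,y\in\Sigma^{\ast}$, the pair of bounds
\[
\iota(xy)\ge\iota(x)+\iota(y)\quad\text{and}\quad\iota(xy)\le\iota(x)+\iota(y)+1 .
\]
The superadditive bound is immediate from the definition of universality: writing any $z\in\Sigma^{\iota(x)+\iota(y)}$ as $z=z_1z_2$ with $|z_1|=\iota(x)$ and $|z_2|=\iota(y)$, we have $z_1\in\ScatFact(x)$ and $z_2\in\ScatFact(y)$, hence $z\in\ScatFact(xy)$. Granting both bounds, every conjugate $yx$ of $w=xy$ satisfies $\iota(yx)\le\iota(y)+\iota(x)+1\le\iota(xy)+1=\iota(w)+1$; taking the maximum over all conjugates yields $\zeta(w)\le\iota(w)+1$, which closes the argument.

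The real work — and the step I expect to be the main obstacle — is the subadditive bound $\iota(xy)\le\iota(x)+\iota(y)+1$, which I would prove by inspecting the greedy arch factorisation of $xy$ at the boundary between $x$ and $y$. Let $p$ be the number of archs of $xy$ lying entirely inside the prefix $x$. Because the factorisation is greedy from the left, these first $p$ archs coincide with the first $p$ archs of $x$, and the suffix $r$ of $x$ remaining after them cannot be universal (otherwise a $(p+1)$-st arch would fit inside $x$, contradicting the choice of $p$); hence $r=r(x)$ and $\iota(x)=p$. If $r=\varepsilon$ the factorisation of $xy$ simply continues with the arch factorisation of $y$, giving $\iota(xy)=\iota(x)+\iota(y)$. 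Otherwise at most one further arch can straddle the boundary, namely $\ar_{xy}(p+1)=r\,t$ where $t$ is the shortest prefix of $y$ making $rt$ universal (and if no such $t$ exists the factorisation stops, giving $\iota(xy)=\iota(x)$); the remaining archs then form the arch factorisation of the suffix $y'$ of $y$ with $y=ty'$. Since $y'$ is a factor of $y$, the factor-monotonicity recorded in the Remark following the arch factorisation gives $\iota(y')\le\iota(y)$, whence $\iota(xy)=p+1+\iota(y')\le\iota(x)+\iota(y)+1$. The crux is thus the single observation that only one arch can cross the $x/y$ boundary, combined with the monotonicity of $\iota$ under passage to a factor, which absorbs the tail $y'$ into $\iota(y)$ and produces exactly the additive slack of $+1$.
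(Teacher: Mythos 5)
Your proof is correct, but it is organised differently from the paper's. The paper works directly on the arch factorisation of the conjugate: assuming $\zeta(w)\geq k+2$, it writes $yx=\ar_{yx}(1)\cdots\ar_{yx}(k+2)r(yx)$, observes that the $y$/$x$ boundary can break at most one arch $\ar_{yx}(j)$, and transfers the remaining $k+1$ whole archs back into $w=xy$ as disjoint universal factors appearing in order, contradicting $\iota(w)=k$; the ``moreover'' claim is then handled by a second, separate run of the same arch-transfer argument on a factorisation with $k+1$ archs. You instead isolate a reusable key lemma --- the two-sided bound $\iota(x)+\iota(y)\leq\iota(xy)\leq\iota(x)+\iota(y)+1$ --- proving superadditivity by splitting scattered factors and subadditivity by your boundary analysis of the arch factorisation of $xy$ (which is exactly where the paper's ``at most one arch is broken at the cut'' observation reappears, combined with the factor-monotonicity remark). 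Applying the sandwich to both $yx$ and $xy$ gives $\zeta(w)\leq\iota(w)+1$ in one line, and the ``moreover'' statement falls out as a formal rearrangement of that inequality rather than needing its own argument. The combinatorial substance (greedy arch factorisation plus the single straddling arch at a cut point) is the same in both, but your packaging is more modular and slightly more general: the additivity bounds are independent of the conjugacy setting and reusable elsewhere, at the cost of a somewhat longer case analysis than the paper's direct contradiction.
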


\begin{proof}
Since $w$ is a conjugate of itself, $w$ is at least $k$-circular universal. 
Suppose $\zeta(w)=k+2$. Choose 
$x,y\in\Sigma^{\ast}$ with $w=xy$ and $yx=\ar_{yx}(1)\dots\ar_{yx}(k+2)r(yx)$. Since $\iota(w)=k$ there is no $i$ such that $y=w_1\cdots w_i$ (otherwise $w=xy$ 
would be $(k+1)$-universal).  Thus there exists a $j\in[k+2]$ and a 
proper prefix $y_1$ of $w_j$ such that $y=w_1\cdots w_{j-1}y_1$; let $x_1$ be such that $w_j=y_1x_1$. This implies  $w=xy=x_1w_{j+1}\dots w_{k+2}w_1\dots 
w_jy_1$ and we get that $k+1$ archs are contained in 
$w$. This contradicts the maximality of $k$.

For the second claim let $w=xy$ and $yx=\ar_{yx}(1)\dots\ar_{yx}(k+1)r(yx)$. If $y$ contains all archs then $\iota(w)=k+1$. If $y$ does not contain all archs, there exists an $i\in[k+1]$ such that a prefix of $\ar_{yx}(i)$ is a suffix of $y$ and the corresponding suffix of $\ar_{yx}(i)$ is a prefix of $x$. Thus $\ar_{yx}(1)\dots\ar_{yx}(i-1)\ar_{yx}(i+1)\ar_{yx}(k+1)$ is a scattered factor of $w$.\qed
\end{proof}

\begin{lemma}\label{middleterm}
Let $w\in\Sigma^+$. If $\iota(w)=k$ and $\zeta(w)=k+1$ then there exists $v,z,u\in\Sigma^{\ast}$ 
such that $w=vzu$, with $u,v\neq\varepsilon$ and $\iota(z)=k$.
\end{lemma}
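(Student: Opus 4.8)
The plan is to extract the decomposition directly from the arch factorisation of the $(k+1)$-universal conjugate promised by $\zeta(w)=k+1$. First I would fix $x,y\in\Sigma^{\ast}$ with $w=xy$ such that the conjugate $yx$ is $(k+1)$-universal. Both $x,y\neq\varepsilon$, since $x=\varepsilon$ or $y=\varepsilon$ would make $yx=w$, forcing $w$ itself to be $(k+1)$-universal and contradicting $\iota(w)=k$. As $\zeta(w)=k+1$ is the maximal universality index among the conjugates of $w$, the conjugate $yx$ has universality index exactly $k+1$; I write its arch factorisation as $yx=\ar_{yx}(1)\dots\ar_{yx}(k+1)r(yx)$.

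The central object is the position $|y|$ at which $y$ ends inside $yx$, and the plan is to show it lies \emph{strictly} inside some arch $\ar_{yx}(i)$. It cannot lie inside $r(yx)$ nor at its left end, for then $y$ would contain all $k+1$ archs, giving $\iota(y)\geq k+1$; since $y$ is a factor of $w$, the remark on factor universality would yield $\iota(w)\geq k+1$, a contradiction. The delicate point is that the position also cannot fall on an arch boundary: if $y=\ar_{yx}(1)\dots\ar_{yx}(i-1)$ and $x=\ar_{yx}(i)\dots\ar_{yx}(k+1)r(yx)$ for some $2\leq i\leq k+1$, then $w=xy=\ar_{yx}(i)\dots\ar_{yx}(k+1)\,r(yx)\,\ar_{yx}(1)\dots\ar_{yx}(i-1)$ is a concatenation of the $(k+2-i)$-universal word $\ar_{yx}(i)\dots\ar_{yx}(k+1)$ and the word $r(yx)\ar_{yx}(1)\dots\ar_{yx}(i-1)$, which contains the $(i-1)$-universal factor $\ar_{yx}(1)\dots\ar_{yx}(i-1)$ and is therefore $(i-1)$-universal; by additivity of universality under concatenation this makes $w$ itself $(k+1)$-universal, again contradicting $\iota(w)=k$. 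The start of $\ar_{yx}(1)$ is excluded because $y\neq\varepsilon$.

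With the cut strictly inside $\ar_{yx}(i)$, I would write $\ar_{yx}(i)=st$ with $s,t\neq\varepsilon$, so that $y=\ar_{yx}(1)\dots\ar_{yx}(i-1)\,s$ and $x=t\,\ar_{yx}(i+1)\dots\ar_{yx}(k+1)r(yx)$. Setting $v=t$, $u=s$, and $z=\ar_{yx}(i+1)\dots\ar_{yx}(k+1)\,r(yx)\,\ar_{yx}(1)\dots\ar_{yx}(i-1)$ gives $w=xy=vzu$ with $u,v\neq\varepsilon$ immediately. It then remains to verify $\iota(z)=k$: the factor $z$ contains the $k$ universal blocks $\ar_{yx}(i+1),\dots,\ar_{yx}(k+1),\ar_{yx}(1),\dots,\ar_{yx}(i-1)$ in order, so it is $k$-universal and $\iota(z)\geq k$; and since $z$ is a factor of $w$, the remark on factor universality gives $\iota(z)\leq\iota(w)=k$, whence $\iota(z)=k$.

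The main obstacle is exactly the exclusion in the second paragraph: ruling out that the conjugation cut lands on an arch boundary, which is precisely the configuration that would force $s=\varepsilon$ or $t=\varepsilon$ and hence an empty $u$ or $v$. The resolution is the observation that such a cut would leave $k+1$ consecutive universal blocks (only separated by the rest $r(yx)$) inside a single copy of $w$, contradicting $\iota(w)=k$. Once this is in place, the strict-interior cut hands us both nonempty borders $s$ and $t$ automatically, and everything else is routine counting of archs combined with the factor-universality remark.
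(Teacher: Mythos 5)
Your proof is correct and follows essentially the same route as the paper's: take the arch factorisation of the $(k+1)$-universal conjugate $yx$, rule out that the cut between $y$ and $x$ falls on an arch boundary (since that would make $w=xy$ itself $(k+1)$-universal), and then split the arch containing the cut into the nonempty pieces $u,v$, with $z$ the remaining $k$ archs. If anything, your write-up is slightly more careful than the paper's: you explicitly exclude the case where the cut lies inside the rest $r(yx)$, and you correctly keep $r(yx)$ inside $z$ in the final decomposition, both of which the paper glosses over.
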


\begin{proof}
By $\zeta(w)=k+1$ there exist 
$x,y\in\Sigma^{\ast}$ with $w=xy$ and $yx=\ar_{yx}(1)\dots$ 
$\ar_{yx}(k+1)r(yx)$. Since $\iota(w)=k$ there is no $i$ such that $y=w_1\cdots w_i$ (otherwise $w=xy$ 
would be $(k+1)$-universal). 
Thus, there exists 
$i\in[k+1]_0$ with $w_{i+1}=uv$ and $u$ is a proper and non-empty suffix of 
$y$ 
and $v$ is a proper and non-empty prefix of $x$ with 
$\letters(u),\letters(v)\subset\Sigma$. This implies 
\[
y
w=xy=v\ar_w(i+2)\dots \ar_w(k+1)\ar_w(1)\dots \ar_w(i)u.
\]
Let $z=\ar_w(i+2)\dots \ar_w(k+1)\ar_w(1)\dots \ar_w(i)$. Clearly, $z$ contains  
$1$-universal words, so $\iota(z)\geq k$. By $\iota(w)=k$ follows immediately $\iota(z)\leq k$.\qed
\end{proof}

The following theorem connects the circular universality index of a word with the universality index of the repetitions of that word.

\begin{theorem}\label{circ2normal}
Let $w\in\Sigma^{\ast}$. If $\iota(w)=k$ and $\zeta(w)=k+1$ then $\iota(w^s)=sk+s-1$, for all $s\in\N$.
\end{theorem}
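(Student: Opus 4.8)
The plan is to establish the two inequalities $\iota(w^s)\ge sk+s-1$ and $\iota(w^s)\le sk+s-1$ separately: the lower bound is where the circular hypothesis $\zeta(w)=k+1$ is used, whereas the upper bound needs only $\iota(w)=k$.

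For the lower bound I would start from the decomposition produced in (the proof of) Lemma~\ref{middleterm}: write $w=vzu$ with $u,v\neq\varepsilon$ and $\iota(z)=k$, and recall that there $uv$ is exactly the arch $\ar_{yx}(i+1)$ of the $(k+1)$-universal conjugate $yx$, so $uv$ is $1$-universal. The decisive algebraic step is the regrouping
\[
w^s=(vzu)^s=vz\,[(uv)z]^{s-1}\,u,
\]
which I would verify by a short induction on $s$. Reading this factorisation left to right, the leading copy of $z$ supplies its $k$ archs $\ar_z(1),\dots,\ar_z(k)$, and each of the $s-1$ blocks $(uv)z$ supplies $k+1$ universal words (the universal factor $uv$, then the $k$ archs of the following $z$). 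This yields $k+(s-1)(k+1)=sk+s-1$ universal words occurring as pairwise disjoint factors in left-to-right order; absorbing the leftover pieces ($v$, the rests $r(z)$, and trailing $u$) into the following universal factor keeps each block universal, so $w^s$ contains a concatenation of $sk+s-1$ universal words and is therefore $(sk+s-1)$-universal.

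For the upper bound I would isolate the subadditivity estimate $\iota(PQ)\le \iota(P)+\iota(Q)+1$ and then telescope it: $\iota(w^s)\le \iota(w)+\iota(w^{s-1})+1$ gives $sk+s-1$ by induction on $s$. To prove the estimate, write $P=\ar_P(1)\cdots\ar_P(p)r(P)$ with $p=\iota(P)$. Since the shortest universal prefix of $PQ$ depends only on its prefix, the greedy arch factorisation of $PQ$ reproduces $\ar_P(1),\dots,\ar_P(p)$, and hence $\iota(PQ)=p+\iota(r(P)Q)$. As the rest $r(P)$ is not universal, the first arch of $r(P)Q$ must end strictly inside $Q$, leaving a suffix $Q''$ of $Q$ with $\iota(Q'')\le\iota(Q)$, so $\iota(r(P)Q)\le\iota(Q)+1$ and the estimate follows.

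The step I expect to be the main obstacle is the bookkeeping in this subadditivity lemma: justifying that the greedy factorisation of $PQ$ genuinely "replays" the archs of $P$, and that prepending the non-universal rest $r(P)$ to $Q$ can raise the universality index by at most one. Both rely on using the precise definition of the arch factorisation via shortest universal prefixes, together with the fact (immediate from the Remark following the arch factorisation) that a factor, and in particular a suffix, of a word never has a larger universality index than the word itself.
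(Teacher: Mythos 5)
Your proof is correct and follows essentially the same route as the paper: the lower bound uses exactly the paper's decomposition $w=vzu$ from Lemma~\ref{middleterm} and the same regrouping of $w^s$ (the paper writes it as $v(zuv)^{s-1}zu$, which is the same word as your $vz\,[(uv)z]^{s-1}\,u$), counting $(s-1)(k+1)+k$ universal blocks. For the upper bound the paper only remarks in one sentence that $w^s$ contains just $s-1$ transitions between consecutive copies of $w$, so your telescoped subadditivity estimate $\iota(PQ)\le\iota(P)+\iota(Q)+1$ is a rigorous formalization of that same counting idea rather than a genuinely different argument.
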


\begin{proof}
By Lemma~\ref{middleterm} there exist $v,z,u\in\Sigma^{\ast}$ with $w=vzu$, $\iota(z)=k$, and $\letters(v)$, $\letters(u)\subset\Sigma$. Consequently we have that
\[
w^s=(vzu)^s=v(zuv)^{s-1}zu
\]
is $((s-1)(k+1)+k)$-universal, thus $\iota(w^s)\geq(sk+s-1)$.
Since $\iota(w)=k$ and $w^s$ only contains $s-1$ transitions from one $w$ to another, $w^s$ cannot have a higher universality.\qed
\end{proof}

The other direction of Theorem~\ref{circ2normal} does not hold for arbitrary 
alphabets: Consider the $2$-universal word $w=\tb\ta\tb\tc\tc\ta\ta\tb\tc$. We 
have that $w^2$
is $5$-universal but $w$ is not $3$-circular universal. Nevertheless, Lemma 
\ref{middleterm} helps us show that the converse of Theorem \ref{circ2normal} 
holds for binary alphabets:

\begin{theorem}\label{theoitercirc}
Let $w\in\{\ta,\tb\}^{\ast}$ with $\iota(w)=k$ and $s\in\N$. Then
$\iota(w^s)=sk+s-1$ if $\zeta(w)=k+1$ and $sk$ otherwise.
\end{theorem}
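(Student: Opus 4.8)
The plan is to split on the value of $\zeta(w)$, which by Lemma~\ref{universalcircularuniversal} is either $k$ or $k+1$. If $\zeta(w)=k+1$, there is nothing new to do: this is exactly the hypothesis of Theorem~\ref{circ2normal}, which already gives $\iota(w^s)=sk+s-1$ over any alphabet. So the whole content lies in the case $\zeta(w)=k$, where I must prove $\iota(w^s)=sk$. The lower bound $\iota(w^s)\ge sk$ is immediate from Theorem~\ref{theorem_nonewsubs}, so only the upper bound $\iota(w^s)\le sk$ remains, and this is precisely the converse of Theorem~\ref{circ2normal} that the paper announces for binary alphabets.

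For the upper bound I would argue by contraposition: assuming $\iota(w^s)\ge sk+1$ for some $s$, I will manufacture a conjugate of $w$ that is $(k+1)$-universal, contradicting $\zeta(w)=k$. First I would set up an exact increment formula. Since the arch factorisation is greedy and prefix-determined, writing $w^s=w^{s-1}w$ yields $\iota(w^s)=\iota(w^{s-1})+\iota(r(w^{s-1})\,w)$ and $r(w^s)=r(r(w^{s-1})\,w)$. For a binary word the rest is always a run of a single letter, and as it is a suffix lying inside the last block $c^{a}$ of $w$ (with $c$ the last letter of $w$), one has $r(w^{t})=c^{j_t}$ with $0\le j_t\le a$. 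Two facts about the per-copy increment $\iota(c^{j_t}w)$ drive the argument: it is $\ge k$ because $w$ is a factor of $c^{j_t}w$ (the Remark following the arch factorisation), and it is $\le k+1$ because prepending a one-letter run to a binary word raises the number of archs by at most one, which I would read off the two possible binary arch shapes $\ta^{+}\tb$ and $\tb^{+}\ta$. Hence every increment is $k$ or $k+1$, and $\iota(w^s)\ge sk+1$ forces some increment $\iota(c^{j_t}w)=k+1$.

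It then remains to turn such a $k+1$ increment into a genuine $(k+1)$-universal conjugate. Here I would observe that the length-$|w|$ prefix of $c^{j_t}w$ is exactly the conjugate $\tilde w$ obtained from $w$ by moving its length-$j_t$ suffix $c^{j_t}$ to the front, that is $\tilde w=c^{j_t}w'c^{a-j_t}$ when $w=w'c^{a}$. The $k+1$ archs of $c^{j_t}w$ occupy its first $|w|+j_t-j_{t+1}$ positions, so they all lie inside this prefix as soon as $j_{t+1}\ge j_t$; since $\tilde w$ is in turn a prefix of $c^{j_t}w$, it contains exactly $k+1$ archs, whence $\iota(\tilde w)=k+1$ and $\zeta(w)\ge k+1$, the desired contradiction. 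With all increments thus equal to $k$, summing gives $\iota(w^s)=k+(s-1)k=sk$.

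The main obstacle is the monotonicity $j_{t+1}\ge j_t$ of the rest lengths along the repetition, which is what guarantees that the extra arch is captured inside one period rather than straddling a copy boundary (note that $j_{t+1}\ge j_t$ can fail for arbitrary, non-reachable $j$, so the statement must be proved along the actual trajectory of rests). I would establish it from the binary block structure: when the first and last letters of $w$ coincide the rest is constant, because increasing the first block alters neither the arch count nor the rest; otherwise one tracks the greedy ``leader'' block through the periodic block sequence of $w^{t}$ and verifies that the leftover in the final block never shrinks. This rigidity, namely single-letter rests together with only two arch shapes, is exactly what breaks down for $|\Sigma|\ge 3$, in line with the paper's counterexample $w=\tb\ta\tb\tc\tc\ta\ta\tb\tc$, where repetition fabricates archs that no single rotation of $w$ realises.
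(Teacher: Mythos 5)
Your overall architecture is sound and follows a genuinely different route from the paper. The reduction to showing that $\zeta(w)=k$ forces $\iota(w^s)\le sk$, the increment formula $\iota(w^s)=\iota(w^{s-1})+\iota(r(w^{s-1})w)$ with $r(w^s)=r(r(w^{s-1})w)$ (valid because the arch factorisation is greedy and prefix-determined), the fact that binary rests are single-letter runs $c^{j_t}$, the bounds $k\le\iota(c^{j_t}w)\le k+1$, and the observation that a $k{+}1$ increment at a step with $j_{t+1}\ge j_t$ yields a $(k{+}1)$-universal conjugate of $w$ -- all of this checks out. The paper never looks at rests at all: it picks a conjugate $v$ of $w$ with $v[1]=v[|w|]$ (if none exists then $w=(\ta\tb)^k$ or $(\tb\ta)^k$ and the claim is immediate), uses Lemma~\ref{universalcircularuniversal} and Theorem~\ref{circ2normal} to conclude that $v^{2s}$ is $(2sk+1)$-universal, and then invokes the characterisation of binary universality via disjoint occurrences of $\ta\tb$ or $\tb\ta$ from \cite{dlt2019}: since $v[1]=v[|w|]$, no such occurrence straddles a border between consecutive copies of $v$, so by pigeonhole one copy of $v$ contains $k+1$ of them, i.e.\ $\iota(v)\ge k+1$.

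The genuine gap is exactly the step you flag as the main obstacle: the monotonicity $j_{t+1}\ge j_t$ along the trajectory is never proved; ``one tracks the greedy leader block \dots and verifies that the leftover in the final block never shrinks'' only restates the claim. This is not cosmetic -- it is the whole content of the hard direction. If the scenario $j_1\ge 1$, $j_2=0$, $\iota(c^{j_1}w)=k+1$ were possible, the rests would alternate $j_1,0,j_1,0,\dots$, the increments would alternate $k{+}1,k,k{+}1,\dots$, and $\iota(w^3)$ would be $3k+1$, strictly between the two values your theorem allows; so any proof must exclude this, and yours as written does not. The claim is true, but establishing it takes real work: if $w[1]=w[|w|]$ the first arch of $c^jw$ absorbs the prepended run and $r(c^jw)=r(w)$; if $w[1]\neq w[|w|]$ then $r(c^jw)=r(w[2..|w|])$ for every $j\ge 1$, and one has to compare the greedy factorisations of $w$ and $w[2..|w|]$ at the level of blocks (each arch ends one letter into a block, and the block where the next arch ends depends only on the current block's length). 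Either the two block-trajectories collide -- then they agree from that point on and the rests are equal -- or they never collide, which forces every block of $w$ except the last to have length $1$, in which case $j_1=n_b-1$ and $j_2=n_b>j_1$, where $n_b$ is the length of the last block. The boundary behaviour is delicate: the rests need not be constant (for $w=\ta\tb\ta\tb\tb\tb$ one gets $j_1=2$ but $j_2=3$), which is precisely why this lemma needs a proof rather than a sketch, and why the paper's pigeonhole argument, which bypasses the rest analysis entirely, is the cleaner way through.
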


\begin{proof}
By Theorem~\ref{circ2normal} it suffices to prove $\zeta(w)=k+1$ if $w^s$ is $(sk+1)$-universal. Assume $\iota(w)\geq sk+1$. If for all conjugates $v$ of $w$ we have $v[1]\neq v[|w|]$ then $w$ is of even length and we have $w=( \ta\tb)^{k}$ or $w=(\tb\ta)^{k}$; this implies immeditaly $\zeta(w)=k$ by the arch factorisation.
Thus we know that there exists a conjugate $v$ of $w$ with $v[1]=v[|w|]$. Since $w^s$ is a conjugate of $v^s$ and $w^s$ is $(sk+1)$-universal, $v^s$ is $(sk+1)$-circular universal. By Lemma~\ref{universalcircularuniversal} we get that $v^s$ is $(sk)$-universal and by Theorem~\ref{circ2normal} follows that $v^{2s}$ 
is $(2sk+1)$-universal. By \cite[Theorem4]{dlt2019} $v^{2s}$ contains $2sk+1$ disjoint occurrences of $\ta\tb$ or $\tb\ta$.
By $v[1]=v[n]$ non of these occurrences can start in one $v$ and end in the following. This implies that one $v$ contains $k+1$ of these occurrences and therefore $\iota(v)\geq k+1$. Hence we get $\zeta(w)=k+1$.\qed
\end{proof}

\section{On Modifying the Universality Index}\label{queries}
In this section we present algorithms answering the for us most natural questions regarding universality: is a specific factor $v$ of $w\in\Sigma^{\ast}$ universal? what is the minimal $\ell\in\N$ such that $w^{\ell}$ is $k$-universal for a given $k\in\N$? how many (and which) words from a given set do we have to concatenate such that the resulting word is $k$-universal for a given $k\in\N$? what is the longest (shortest) prefix (suffix) of a word being $k$-universal for a given $k\in\N$? In the following lemma we establish  some preliminary data structures.

\begin{lemma}\label{help1}
Given a word $x\in\Sigma^n$ with $\letters(x)=\Sigma$, we can compute in $O(n)$ and for all $j\in[n]$
\begin{itemize}
\item the shortest $1$-universal prefix of $x[j..n]$: $u_x[j]=\min\{i\mid x[j..i]$ is universal$\}$,
\item the value $\iota(x[j..n])$: $t_x[j]=\max\{t\mid \ScatFact_t(x[j..n])= \Sigma^t\}$, and 
\item the minimal $\ell\in[n]$ with $\iota(x[j..\ell])=\iota(x[j..|x|])$: $m_x[j]=\min\{i\mid \ScatFact_{t_x[j]}$ $(x[j..i])= \Sigma^{t_x[j]}\}$.
\end{itemize}
\end{lemma}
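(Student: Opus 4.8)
The plan is to compute the three arrays with one forward sweep producing $u_x$, followed by two backward recurrences producing $t_x$ and $m_x$, each costing $O(n)$. Throughout I adopt the conventions that $u_x[j]=n+1$ when the suffix $x[j..n]$ is missing some letter (so that $\min\emptyset$ is handled), that $t_x[n+1]=0$, and that $m_x[p]=p-1$ whenever $t_x[p]=0$ (consistent with $\ScatFact_0(x[j..j-1])=\{\varepsilon\}=\Sigma^0$).

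First I would compute $u_x$ by the classical two-pointer (sliding window) technique. The key observation is that $u_x$ is monotonically non-decreasing: if $x[j+1..i]$ contains all letters of $\Sigma$, then so does the superword $x[j..i]$, whence $u_x[j]\leq u_x[j+1]$. I therefore maintain a window $x[j..i]$ together with an array counting, for each letter, its number of occurrences in the window, and a counter of how many distinct letters are currently present. For each $j$ from $1$ to $n$ I advance $i$ (which never moves backward) until $x[j..i]$ contains all $|\Sigma|$ letters, recording $u_x[j]=i$ on success and $u_x[j]=n+1$ otherwise; then I remove $x[j]$ from the window before incrementing $j$. Since $i$ is advanced at most $n$ times in total and each $j$ causes one deletion, this runs in $O(n)$.

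Next I compute $t_x=\iota(x[j..n])$ by a backward recurrence resting on the arch factorisation. The first arch of $x[j..n]$ is exactly $x[j..u_x[j]]$, and peeling it off leaves the suffix $x[u_x[j]+1..n]$, whose arch factorisation supplies the remaining archs. Hence
\[
t_x[j] = \begin{cases} 0 & \text{if } u_x[j]=n+1,\\[2pt] 1+t_x[u_x[j]+1] & \text{otherwise.}\end{cases}
\]
Because $u_x[j]+1>j$, the value $t_x[u_x[j]+1]$ is already available when processing $j$ from $n$ down to $1$, so all of $t_x$ is obtained in $O(n)$. For $m_x$, note that $\iota(x[j..i])$ is non-decreasing in $i$ (extending the right end only enlarges $\ScatFact_k$) and attains its maximum $t_x[j]$ precisely when the $t_x[j]$-th arch of $x[j..n]$ closes; thus $m_x[j]$ is the end position of that arch. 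The same peeling argument yields the uniform recurrence
\[
m_x[j] = \begin{cases} j-1 & \text{if } t_x[j]=0,\\[2pt] m_x[u_x[j]+1] & \text{if } t_x[j]\geq 1,\end{cases}
\]
which already handles the base case $t_x[j]=1$ correctly, since then $t_x[u_x[j]+1]=0$ and the convention gives $m_x[u_x[j]+1]=u_x[j]$. A single backward pass evaluates this in $O(n)$, so the total running time is $O(n)$.

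The hard part will be justifying the two recurrences rather than the bookkeeping: I must argue cleanly that the greedy arch factorisation of a suffix is obtained by deleting the first arch of the longer suffix, so that the indices line up (in particular, the $t_x[j]$-th arch of $x[j..n]$ coincides with the $(t_x[j]-1)$-th arch of $x[u_x[j]+1..n]$, giving $m_x[j]=m_x[u_x[j]+1]$). Establishing the monotonicity of $u_x$ and of $i\mapsto\iota(x[j..i])$ are the two supporting facts that make both the two-pointer sweep and the characterisation of $m_x[j]$ as the first index reaching the maximal universality index correct.
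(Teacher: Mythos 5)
Your proposal is correct and takes essentially the same route as the paper: an amortized linear-time sweep to compute $u_x$, followed by the backward arch-peeling recurrences $t_x[j]=1+t_x[u_x[j]+1]$ and $m_x[j]=m_x[u_x[j]+1]$ (with the analogous base cases). The only differences are implementation details: you compute $u_x$ left-to-right with a sliding window whereas the paper scans right-to-left while tracking the rightmost among the letters' first occurrences, and your $m_x$ recurrence is the position-consistent form of the paper's dynamic programming step.
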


\begin{proof}
For each $j\in [n]$ and letter $\ta\in \Sigma$, denote $g_{\ta}[j]=\min\{i\mid i\geq j, w[i]=\ta\}$ (by convention, $g_\ta[j]=+\infty$ if $\ta$ does not occur in $x[j..n]$). Clearly, $u_x[j]=\max\{g_\ta[j]\mid \ta\in \Sigma\}$ holds, i.e., $u_x[j]$ is the end position of the shortest word starting on position $j$ in $x$ which contains all letters of~$\Sigma$ (the value $g_\ta[j]$ is strongly related to the value $X_\ta(w[j..n])$ - read "next $\ta$ in $w[j..n]$"- used in \cite{KufMFCS} to denote the first position of $\ta$ in $w[j..n]$).
It is essential to note that we will not compute all the values $g_{\ta}[j]$, but only the values $u_x[j]$, for all $j$. As such, $x[j..u_x[j]]$ is the shortest universal prefix of $x[j..n]$.

Computing the elements of $u_x[\cdot]$ is done as follows: let $C$ be an array with $|\Sigma|$ elements, all initialised to $0$. As $\Sigma$ is considered to be the set of numbers $\{1,\ldots,|\Sigma|\}$, we will consider that $C$ is indexed by the letters of $\Sigma$. Also, initialise the variable $h$ with $|\Sigma|$. 

While $h>0$, we consider the positions $j$ of $x$ from the right to the left, i.e., from $n$ downwards. When reading $x[j]$, we set $C[x[j]]=j$, and if $C[x[j]]$ was $0$ before setting it to $j$, then we decrement $h$ by $1$. As soon as we have $h=0$ we stop. At this point we have $C[\ta]=g_\ta[j]$ for all $\ta\in \Sigma$, so $C[\ta]$ is the leftmost occurrence of $\ta$ to the right of $j$, and $x[j..n]$ is the shortest suffix of $x$ that contains all letters of $\Sigma$. We can set $u_x[j']=+\infty$, for all $j'>j$, and $u_x[j]=\max\{C[\ta]\mid \ta\in \Sigma\}$. 

Now let $m=u_x[j]$, and $d=j+1$ ($x[d..n]$ is the longest suffix of $x$ which is not universal). 

For $i$ from $j-1$ downto $1$ we do the following. If $m\neq C[x[i]]$ (i.e., $x[i]$ is not the same as the letter whose leftmost occurrence in $x[i+1..n]$ was the rightmost among all letters of $\Sigma$), we just set $C[x[i]]=i$. If $m=C[x[i]]$ (i.e., $x[i]$ is the same as the letter whose leftmost occurrence in $x[i+1..n]$ was the rightmost among all letters of $\Sigma$), we first set $C[x[i]]=i$ and then we need to recompute $m$, the maximum of $C$ (the position of the letter whose first occurrence in $x[i..n]$ is the rightmost among all letters). To do this, we decrement $m$ by~$1$ repeatedly, until it reaches a value $p$ such that $C[x[p]]=p$. At that point, $m=p$ is the leftmost position on which the letter $x[m]$ occurs in $x[i..n]$, and all letters of $\Sigma$ occur in $x[i..m]$. In this way, we ensure that $C[\ta]=g_\ta[i]$ for all $\ta \in \Sigma$ and $m$ points to the maximum element of $C$. In both cases, we set $u_x[i]=m$, and repeat the process for $i-1$. 

At the end of the computation described above, we computed $u_x[j]$ for every position $j$ of $x$, i.e., we know for each position $j$ of $x$ the shortest universal prefix of $x[j..n]$. The computation described above runs in time $O(n)$. For each value $j$ we set $C[x[j]]$ in constant time and then, if needed, recompute the value of $m$; this last part is not carried in constant time for each $j$, but in total $m$ traverses only once the entire word $x$ from right to left, so, summing the time spent to update $m$ for all values of $j$, we still get $O(n)$ time in total.

We now move on to the main phase of our algorithm. For $j\in [n]$, we want to compute $t_x[j]=\max\{t\mid \ScatFact_t(x[j..n])= \Sigma^t\}$ and $m_x[j]=\min\{i\mid \ScatFact_{t_x[j]}(x[j..i])= \Sigma^{t_x[j]}\}$.  

We show how to compute $m_x[j]$ and $t_x[j]$ for all positions $j$ of $x$, in $O(n)$ total time, by a simple dynamic programming algorithm. For $j\geq d$, we have $t_x[j]=0$ and $m_x[j]=u_x[j]$. For smaller values of $j$, we have $m_x[j]=u_x[j]+m_x[{u_x[j]+1}]$ and $t_x[j]=1+t_x[{u_x[j]+1}]$. Indeed, the maximum exponent $t_x[j]$ such that $\Sigma^{t_x[j]} =\ScatFact_{t_x[j]}(x[j..n])$ is obtained by taking the shortest prefix $x[j..u_x[j]]$ of $x[j..n]$ that contains all letters of $\Sigma$, and returning $1$ plus the maximum exponent $t_x[{u_x[j]+1}]$ such that $\Sigma^{t_x[{u_x[j]+1}]}$ is included in the set of scattered factors of the suffix $x[u_x[j]+1..n]$ that follows $x[j..u_x[j]]$. The value $m_x[j]$ is computed according to a similar idea. Clearly, computing $m_x[j]$ and $t_x[j]$ takes constant time for each $j$, so linear time overall. \qed
\end{proof}

The data structures constructed in Lemma \ref{help1} allow us to test in $O(1)$ time~the universality of factors $w[i..j]$ of a given word $w$, w.r.t. $\letters(w)=\Sigma$: $w[i..j]$ is $\Sigma$-universal iff $j\geq u_w[i]$.
The combinatorial results of Section~\ref{basic} give us an initial idea on how 
the universality of repetitions of a word relates to the universality of that 
word: Theorem~\ref{circ2normal} shows that in order to compute 
the minimum $s$ such that $w^s$ is $\ell$-universal, for a given {\em binary} 
word $w$ and a number $\ell$, can be reduced to computing the circular 
universality of~$w$. Unfortunately, this is not the case for all alphabets, as also shown in Section~\ref{basic}. However, this number $s$ can be computed efficiently, for input words over alphabets of all 
sizes. While the main idea for binary alphabets was to analyse the universality 
index of the conjugates of $w$ (i.e., factors of length $|w|$ of $ww$), in the general case we can analyse the universality index of the suffixes of $ww$, by 
constructing the data structures of Lemma~\ref{help1} for $x=ww$. The problem is 
then reduced to solving an equation over integers in order to 
identify the smallest $\ell$ such that $w^\ell$ is $k$-universal. 

\begin{proposition}\label{theomin}
Given a word $w\in\Sigma^n$ with $\letters(w)=\Sigma$ and $k\in\N$, we can 
compute  the minimal $\ell$ such that $w^\ell$ is $k$-universal in $O(n+ {\frac{\log k}{\log n}})$ time.
\end{proposition}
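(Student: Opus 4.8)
The plan is to reduce the problem to computing the length $S_k$ of the shortest prefix of the infinite periodic word $w^\omega = www\cdots$ that is $k$-universal, and then to divide this length by $n$. Since the arch factorisation is greedy and reads its input from left to right, the first $c$ archs of $w^\ell$ coincide with the first $c$ archs of $w^{\ell'}$ for every $\ell'\geq\ell$, as long as these archs fit inside the length-$\ell n$ prefix. Hence $\iota(w^\ell)=\max\{c\mid S_c\leq \ell n\}$, so $w^\ell$ is $k$-universal iff $S_k\leq \ell n$, and the minimal such $\ell$ is $\lceil S_k/n\rceil$. It therefore remains to compute $S_k$ (note $S_k\leq kn$, consistent with $\iota(w^k)\geq k$ from Theorem~\ref{theorem_nonewsubs}, so $\ell\leq k$).

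To compute $S_k$ I would track the arch factorisation of $w^\omega$ by recording, for each arch, the position within a single copy of $w$ at which it starts. First I build the data structures of Lemma~\ref{help1} for $x=ww$; this costs $O(n)$ and yields the values $u_{ww}[j]$ for $j\in[n]$. The key observation is that, because $w$ is $1$-universal, the conjugate $w[j..n]w[1..j-1]$ is already universal, so every arch starting at copy-relative position $j$ ends at $u_{ww}[j]\in[j,n+j-1]$; in particular each arch has length $\delta(j):=u_{ww}[j]-j+1\leq n$ and crosses at most one copy boundary, which is exactly why $ww$ (rather than $w^\ell$) suffices. By periodicity, an arch starting at copy-relative position $j$ is followed by an arch starting at copy-relative position $\tau(j):=(u_{ww}[j]\bmod n)+1$, and both $\delta$ and $\tau$ depend only on $j\in[n]$.

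Starting from $q_1=1$, the sequence of copy-relative start positions $q_{i+1}=\tau(q_i)$ lives in $[n]$ and is therefore eventually periodic. Following it while maintaining the running sums $S_c=\sum_{i\leq c}\delta(q_i)$ until a position repeats identifies a tail of length $a$ and a cycle of length $b$ with $a+b\leq n+1$, together with the per-period displacement $D=S_{a+b}-S_a$; all of this is $O(n)$. For $k\leq a+b$ the value $S_k$ is read off directly; otherwise, writing $m=\lfloor(k-a)/b\rfloor$ and $s=(k-a)\bmod b$, one gets $S_k=mD+S_{a+s}$, and finally $\ell=\lceil S_k/n\rceil$.

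For the complexity, everything up to and including the trajectory is $O(n)$, and the stored numbers $S_0,\ldots,S_{a+b}$ and $D$ are at most $O(n^2)$, hence fit in $O(1)$ machine words. The only large number is $k$ (and the resulting $S_k\leq kn$, with $O(\log k)$ bits); the final step performs a constant number of arithmetic operations on such numbers, each costing $O(\log k/\log n)$ in the assumed model, for a total of $O(n+\log k/\log n)$. I expect the main obstacle to be the rigorous justification of the reduction $\iota(w^\ell)=\max\{c\mid S_c\leq \ell n\}$ together with the translation-invariance of $\delta$ and $\tau$ — that is, that the behaviour of arch formation depends only on the copy-relative start position — and the careful bookkeeping needed to keep the period-skipping arithmetic within the claimed bound when $k$ is given in binary.
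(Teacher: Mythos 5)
Your proposal is correct, and it rests on the same two pillars as the paper's own proof --- the data structures of Lemma~\ref{help1} built for $x=ww$, and detection of a cycle in a deterministic walk over positions inside a single copy of $w$, followed by a constant number of arithmetic operations on $O(\log k)$-bit numbers --- but the walk itself is genuinely different. The paper iterates copy-by-copy: its state after $p$ copies is the pair $(i_p,s_p)$, where $i_p=\iota(w^p)$ and $s_p$ marks where the shortest $i_p$-universal prefix ends inside the $p$-th copy, updated through both arrays $t_x$ and $m_x$; since $i_p$ can grow by more than one per copy, after jumping over whole periods it still needs a final loop of up to $n$ iterations to pin down the exact $\ell$. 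You instead iterate arch-by-arch on $w^\omega$: your state is the copy-relative start $q_i$ of the $i$-th arch, updated via $u_x$ alone, and the key observation --- that every conjugate of a universal word is universal, so each arch has length at most $n$ and crosses at most one copy boundary --- is exactly what makes the transition $\tau$ and displacement $\delta$ computable from $ww$ and dependent only on the copy-relative position. Because universality grows by exactly one per arch, you get the closed form $S_k=mD+S_{a+s}$ and extract $\ell=\lceil S_k/n\rceil$ with a single ceiling division, eliminating the paper's final $O(n)$ loop; conversely, the paper's coarser per-copy recurrence computes $\iota(w^p)$ for every $p$ along the way and can stop as soon as $i_p\geq k$, without ever needing your reduction $\iota(w^\ell)=\max\{c\mid S_c\leq \ell n\}$ (which you do justify correctly via greediness of the arch factorisation). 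Both routes meet the stated $O(n+\frac{\log k}{\log n})$ bound, and your bookkeeping of which quantities are $O(n^2)$ (hence constant-word) versus $O(\log k)$-bit is consistent with the paper's machine model.
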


\begin{proof}
Consider the word $x=ww$. In a preprocessing phase, using Lemma \ref{help1}, we compute in $O(|x|)=O(n)$ time the values $t_x[j]$ and $m_x[j]$ for $j\in [2n]$.

We want to compute the minimum $\ell$ such that $w^\ell$ is $k$-universal. The general idea is the following: for $p\geq 1$, we compute the largest value $i_p$ such that $\Sigma^{i_p}=\ScatFact_{i_p}(w^p)$ as well as the shortest prefix $w^{p-1}w[1..s_p]$ of $w^p$ which is $i_p$-universal (as each $w$ contains all letters of $\Sigma$, it is clear that the shortest prefix of $w^{p}$ which is $i_p$-universal must extend inside the \nth{$p$} $w$). These values can be computed for a certain $p$ using the corresponding values for $p-1$ and the arrays we constructed in the preprocessing phase: $i_p=i_{p-1}+t_x[{s_{p-1}}]$ and $s_p=s_{p-1}+m_x[{s_{p-1}}]-n$. Essentially, for each $p$, we just extend to the right in $w^{p}$, as much as we can, the shortest prefix with the desired property constructed for $w^{p-1}$. In a simple version of our algorithm we could do that until $i_p\geq k$ (which happens after at most $k$ iterations), and return $p$ as the value we are searching for. However, this would lead to an algorithm with running time $O(n+\ell\log k/\log n)$ (where the $\log k//\log n$ factor comes from the fact that the operands in each addition $i_p=i_{p-1}+t_x[{s_{p-1}}]$ may have up to $\log k$ digits). As $\ell \leq k$ and it is natural to assume that $k$ is given in its binary representation, this algorithm could be exponential in the worst case.

We can optimise the idea above to work faster by exploiting the periodicity that occurs in the sequence $(s_p)_{p\in \N}$, defined for the repetitions of word $w$. By the pigeonhole principle, there always exist $p_1,p_2\leq n+1$ such that $s_{p_1}=s_{p_2}$. So, while $p\leq n+1$ we compute $i_p$ and $s_p$, as above, but keep track of the values taken by  $s_p$ and stop this loop as soon as the current $s_p$ has the same value as some previously computed $s_{p'}$ or $i_p\geq k$ (in the latter case, we proceed as above, and return $p$ as the value $\ell$ we look for). More precisely, we use an array $S$ with $n$ elements, all set initially to $0$. After computing $s_p$, if $S[s_p]=0$ then we set $S[s_p]=p$; if $S[s_p]\neq 0$ we proceed as follows. We stop the loop and compute two values $p_1=S[s_p]$ and $p_2=p$. It is immediate that $p_2$ is the smallest $p$ such $s_{p_1}=s_{p}$ and there are no other $p,p'<p_2$ such that $s_p=s_{p'}.$ Computing $p_1$ and $p_2$ takes $O(n)$ time. Note that all arithmetic operations we did so far are done on numbers that fit in constant memory.

Assume now that we have computed $p_2=p_1+\delta$ and $i_{p_2}=i_{p_1}+d$. It is clear that, for all $j \geq 0$, we have $s_{p_1+j\delta}=s_{p_1}$ and $i_{p_1+j\delta}=i_{p_1} + jd$. Now, let $m=k-i_{p_1}$ and $g=\lfloor\frac{m}{d}\rfloor$. Computing these numbers takes $O(\log k/\log n)$ time. 

Let $p_3=p_1 + g\delta$ (again, we need $O(\log k/\log n)$ time to compute $p_3$). We have $s_{p_3}=s_{p_1}$ and $i_{p_3}=i_{p_1} + gd \leq k$ (these operations take $O(\log k/\log n)$ time). Also, $i_{p_3+d}> k$. Let $z=k-i_{p_3}$ (and we have $z \leq d$). So, for $p$ from $p_3$ to $p_3 + \delta$ we proceed as follows. If $i_p-i_{p_3}\geq z$ (i.e., $i_p\geq k$), return $p$ as the value $\ell$ we search for. Otherwise, compute $i_{p+1}-i_{p_3}=(i_{p}-i_{p_3})+t_{s_{p}}$ (in time $O(1)$ as it can be done with only adding numbers which are smaller than $d$) and $s_{p+1}=s_{p}+m_{s_{p}}-n$, and iterate. Because we certainly reach, in this loop, a $p$ such that $i_p\geq k$, and $\delta\leq n$, the execution of the loop takes $O(n)$ time. 

Hence, we get the smallest $\ell$ such that $w^\ell$ is $k$-universal (i.e., $i_\ell\geq k$), in $O(n+\log k/\log n)$ time. \qed
\end{proof}

We can extend the previous result to the more general (but less motivated) case of arbitrary concatenations of words from a given set, not just repetitions of the same~word. The following preliminary results can be obtained. In all cases we give the number of steps of the algorithms, including arithmetic operations on $\log k$-bit numbers; the time complexities of these algorithms is obtained by multiplying these numbers by $O(\frac{\log k}{\log n})$.

%
%

\bigskip

For $\ell,n \in \N$ and $w_1,\dots,w_n\in\Sigma^{\ast}$, 
define $\lsk 
w_1,\dots,w_n\rsk_\ell$ as the set of all words $w=x_1\dots x_{\ell}$ with 
$x_i\in\{w_1,\dots,w_n\}$ and $\lsk 
w_1,\dots,w_n\rsk=\bigcup_{\ell \in\N}\lsk 
w_1,\dots,w_n\rsk_\ell$.

\begin{definition}
Let $n\in\N$.
The set $S=\{w_1,\dots,w_n|\,w_i\in\Sigma^*,i\in[n]\}$ is 
{\em $k$-universal} if there exists $u\in\lsk w_1,\dots,w_n\rsk$ such that $u$ is $k$-universal.
\end{definition}

Firstly we need introduce some notation for convenience and to prove an auxiliary lemma.
To each $S\subseteq \Sigma$ we associate a word $u_S$ with $|u_S|=|S|$ and $\letters(u_S)=S$ (i.e., $u_S$ is a linear ordering of the letters from $S$). Following the notations from Lemma \ref{help1}, for a word $x$, let $t_x=\max\{t\in\N_0\mid \ScatFact_t(x)= \Sigma^t\}$ and $m_x=\min\{i\in \N_0\mid \ScatFact_{t_x}(x[1..i])= \Sigma^{t_x}\}$; clearly, if $t_x=0$, then $m_x=0$, too. Note now that, for a word $u$ with $\letters(u)=S$ and $|u|=|S|$, we have $t_{u_Sw}=t_{uw}$ and $m_{u_Sw}=m_{uw}$, for all $w\in \Sigma^*$. Consider $w_1,\ldots,w_p\in\Sigma^{\ast}$, and take $j \in [p]$. For $\ell\in \N$ and $S'\subset \Sigma$, we define $\max_\ell(S,j,S')=\max\{t_{w}\mid w=u_Sw'w_j, w'\in \lsk w_1,\ldots,w_p\rsk_{\ell-1}$ and $\letters(w[m_{w}+1..|w|])= S' \}.$ By the remarks regarding the choice of the word $u_S$,  $\max_\ell(S,j, S')$ is clearly well defined. 

\begin{lemma}\label{greedy}
For $w_1,\ldots,w_p\in\Sigma^{\ast}$, $S\subseteq \Sigma$, $\ell\in\N_{\geq 2}$, and all $\ell'\in[\ell-1]$, we have $\max_\ell (S,j, S')= \max\{\max_{\ell'} (S,k,S'') + \max_{\ell-\ell'} (S'',j,S'))\mid k\in [p], S''\subseteq \Sigma \}$.
\end{lemma}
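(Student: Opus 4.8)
The plan is to establish the identity by the standard optimal-substructure (cut-and-paste) argument, proving the two inequalities separately and treating the fixed $\ell'\in[\ell-1]$ as arbitrary. The only combinatorial ingredient I will need is the additivity of the arch count under concatenation, expressed through the letter set of the rest.

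First I would record the following strengthening of the property stated just before the lemma. View the arch factorisation as a deterministic process on \emph{carry states} $R\subseteq\Sigma$: reading a word starting from a partial arch whose letter set is $R$ completes a number of archs and leaves a new carry, both depending only on $R$ and the word read. Concretely, for all $\alpha,\beta\in\Sigma^{\ast}$, setting $R=\letters(\alpha[m_\alpha+1..|\alpha|])$, one has
\[ t_{\alpha\beta}=t_\alpha+t_{u_R\beta}, \qquad \letters\big((\alpha\beta)[m_{\alpha\beta}+1..|\alpha\beta|]\big)=\letters\big((u_R\beta)[m_{u_R\beta}+1..|u_R\beta|]\big). \]
This holds because the greedy factorisation of $\alpha\beta$ coincides with that of $\alpha$ on the first $t_\alpha$ archs, after which the word left to factorise is $\alpha[m_\alpha+1..|\alpha|]\beta$; the next arch is completed exactly when the prefix of $\beta$ has contributed all letters of $\Sigma\setminus R$, a position depending only on $R$, and from then on the factorisation runs purely inside $\beta$. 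Hence both the additional arch count and the final carry agree with those of $u_R\beta$. This is the generalisation of the cited property to rests that may repeat letters, and I expect it to be the real content of the proof; the rest is bookkeeping.

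For the bound $\le$, take any witness $w=u_Sw'w_j$ of $\max_\ell(S,j,S')$, write $w'=y_1\cdots y_{\ell-1}\in\lsk w_1,\ldots,w_p\rsk_{\ell-1}$ and put $y_\ell=w_j$, so that $w$ carries $\ell$ set-words $y_1,\ldots,y_\ell$. Cut after the $\ell'$-th one. As $\ell'\le\ell-1$, the factor $y_{\ell'}$ is some $w_k$, so $w_A=u_Sy_1\cdots y_{\ell'}$ fits the template of $\max_{\ell'}(S,k,S'')$ with $S''$ the letter set of the rest of $w_A$; prepending $u_{S''}$, the block $w_B=u_{S''}y_{\ell'+1}\cdots y_\ell$ fits that of $\max_{\ell-\ell'}(S'',j,S')$, and by the fact above its rest has the same letter set as the rest of $w$, namely $S'$. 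The same fact gives $t_w=t_{w_A}+t_{w_B}$, whence $t_w\le\max_{\ell'}(S,k,S'')+\max_{\ell-\ell'}(S'',j,S')$ for these $k,S''$, which is bounded by the right-hand side; maximising over $w$ yields the inequality.

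For $\ge$ I would reverse the construction: for each $k\in[p]$ and $S''\subseteq\Sigma$ at which both right-hand maxima are attained, pick witnesses $w_A=u_Sa_1\cdots a_{\ell'-1}w_k$ (rest $S''$) and $w_B=u_{S''}b_1\cdots b_{\ell-\ell'-1}w_j$ (rest $S'$) and glue them into $w=u_S\,a_1\cdots a_{\ell'-1}w_k\,b_1\cdots b_{\ell-\ell'-1}\,w_j$. Counting shows the block between $u_S$ and the final $w_j$ consists of $\ell-1$ set-words, so $w$ is admissible for $\max_\ell(S,j,S')$; since the rest of $w_A$ has letter set $S''$, the fact above again gives $t_w=t_{w_A}+t_{w_B}$ and final rest $S'$, so $\max_\ell(S,j,S')\ge\max_{\ell'}(S,k,S'')+\max_{\ell-\ell'}(S'',j,S')$. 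Taking the maximum over $k,S''$ gives equality. The only thing to watch is the degenerate case in which a defining set of admissible words is empty; adopting $\max\emptyset=-\infty$ makes both sides degenerate consistently, since a finite witness on either side forces a finite matching term on the other.
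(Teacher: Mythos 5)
Your proof is correct and takes essentially the same route as the paper's: cut a witness of $\max_\ell(S,j,S')$ after the $\ell'$-th concatenated word and glue witnesses of the two subproblems back together, with the additivity $t_{\alpha\beta}=t_\alpha+t_{u_R\beta}$ (and the preservation of the rest's alphabet) doing the real work. The differences are organisational rather than substantive: the paper phrases the argument as exchange/contradiction steps around a single optimal witness and dismisses the additivity as ``not hard to see,'' whereas you prove that carry-state fact explicitly, argue via the two inequalities directly, and handle the degenerate case of empty witness sets --- welcome refinements, but the same approach.
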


\begin{proof}
Let $\ell'$ be a natural number such that $1\leq \ell'<\ell$. 
Let ${i_1},\ldots,{i_\ell}\in [p]$ such that ${i_\ell}=j$, $\max_\ell(S,j,S')=t_{w}$, and $\letters(w[m_{w}+1.. |w|])=S'$, for $w=u_Sw_{i_1}\cdots w_{i_\ell}$. Take $x'=u_Sw_{i_1}\cdots w_{i_{\ell'}}$, $S''=\letters({x'[m_{x'}..|x'|]})$, and $x''=u_{S''}w_{i_{\ell'+1}}\cdots w_{i_\ell}$. It is not hard to see that $\max_\ell(S,j,S')=t_{x'}+ t_{x''}$. 

Assume that $\max_{\ell'}(S,i_{\ell'}, S'') > t_{x'}$. Let $h_1,\ldots,h_{\ell'}\in [p]$ and $w_{h_1},\ldots,w_{h_{\ell'}}\in \Sigma^*$ be such that $h_{\ell'}=i_{\ell'}$, $\max_{\ell'}(S,h_{\ell'},S'')=t_{v'}$, and $\letters(v'[m_{v'}+1..|v'|])=S'$, for $v'=u_Sw_{h_1}\cdots w_{h_{\ell'}}$. Then, for $v''=u_Sw_{h_1}\cdots w_{h_{\ell'}}  w_{i_{\ell'+1}}\cdots w_{i_\ell}$ we have $t_{v''}>t_w=\max_\ell(S,j,S')$, a contradiction. Thus, $\max_{\ell'}(S,i_{\ell'}, S'') = t_{x'}$. We can similarly show that $\max_{\ell-\ell'}(S'',j, S') = t_{x''}$.

 Assume now that there exists $r\in [p]$ and $T\subseteq \Sigma$ such that $\max_{\ell'}(S,r,T) + \max_{\ell-\ell'}(T,j,S')>t_{x'}+ t_{x''}=t_w$. Let $j_1,\ldots,j_\ell\in [p]$ and $w_{j_1},\ldots,w_{j_{\ell}}\in \Sigma^*$ be such that $j_{\ell'}=r$, $j_\ell=j$, $\max_{\ell'}(S,j_{\ell'},T)=t_{x}$ and $\letters(x[m_x+1..|x|])=T$, for $x=u_Sw_{j_1}\cdots w_{j_{\ell'}}$, and $\max_\ell(T,j_{\ell'},S')=t_{y}$ and $\letters(y[m_y+1..|y|])=S'$, for $y=u_Tw_{j_{\ell'+1}}\cdots x_{j_{\ell}}$. Then, clearly, for $v=u_Sx_{j_1}\cdots x_{j_{\ell}}$ we have $t_v>t_w=\max_{\ell}(S,j,S')$, a contradiction, considering the form of $v$. 
\end{proof}

\begin{theorem}\label{basis2}
Given $w_1,\ldots,w_p\in\Sigma^{\ast}$ with $|w_1\cdots w_p|$ $=n$
and $\letters(w_1\cdots
w_p)=\Sigma$, and $k\in\N$, 
we can compute the minimal $\ell$ 
for which there exist $\{i_1,\ldots,i_\ell\} \subseteq [k]$ such that 
$w_{i_1}\cdots w_{i_\ell}$ is $k$-universal in $O(2^{3|\Sigma|} p^2 \log \ell + n)$ steps, some being arithmetic operations on numbers with $\log k$ bits. The overall time complexity of our algorithm is $O(\frac{\log k}{\log n}(2^{3|\Sigma|} p^2 \log \ell + n))$.
\end{theorem}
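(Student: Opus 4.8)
The plan is to read Lemma~\ref{greedy} as the statement that the quantity $\max_\ell$ is \emph{multiplicative} with respect to concatenation in a suitable $(\max,+)$-semiring, and then to replace the naive "append one word at a time" computation (which would cost $\Theta(\ell)$ steps and hence be exponential in the binary size of the input) by fast matrix exponentiation in that semiring, in the same spirit in which the periodicity argument of Proposition~\ref{theomin} was turned into a closed-form jump.

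First I would fix the state space. By the remark preceding Lemma~\ref{greedy}, namely that $t_{u_Sw}=t_{uw}$ and $m_{u_Sw}=m_{uw}$ whenever $\letters(u)=S$ and $|u|=|S|$, the only information that has to be carried across a junction between two blocks of words is the letter set of the current leftover; its multiplicities are irrelevant. Hence I define, for $S,S'\subseteq\Sigma$ and $\ell\in\N$, the matrix entry $B_\ell[S][S']=\max_j\max_\ell(S,j,S')$, the largest universality index attainable by prefixing $u_S$, concatenating $\ell$ words from $\{w_1,\dots,w_p\}$, and ending with leftover letter set exactly $S'$. Pushing the maximum over the last-word index through the recurrence of Lemma~\ref{greedy} (the factor $\max_{\ell'}(S,k,S'')$ does not depend on $j$, so the two maxima separate) yields
\[
B_\ell[S][S']=\max_{S''\subseteq\Sigma}\big(B_{\ell'}[S][S'']+B_{\ell-\ell'}[S''][S']\big),
\]
which is exactly $(\max,+)$-matrix multiplication on the index set of subsets of $\Sigma$. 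Consequently $B_\ell=B_1^{\otimes\ell}$, the $\ell$-th power of the one-step matrix $B_1$ in this semiring.

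Second I would build the base matrix $B_1$. Here the data structures of Lemma~\ref{help1}, precomputed once on each $w_j$ (and on the short words $u_S$) in $O(n)$ total time, let me evaluate for every pair $(S,j)$ the single-block quantities $t_{u_Sw_j}$ and the resulting leftover set by jumping through $w_j$ with the arrays $t_x[\cdot]$ and $m_x[\cdot]$ instead of rescanning; taking the maximum over $j$ then fills each entry of $B_1$. This is where the dependence on $p$ and on the number $2^{|\Sigma|}$ of source states enters, alongside the $O(n)$ preprocessing term; the factors $2^{3|\Sigma|}p^2$ quoted in the statement collect this construction together with the per-stage cost of the matrix products below. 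Since the universality counts stored in the matrices are numbers up to $k$, every addition and comparison on them carries the $\log k/\log n$ factor, explaining the form of the overall bound.

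Finally, to extract the minimal $\ell$ I would note that $\letters(w_1\cdots w_p)=\Sigma$ forces $(w_1\cdots w_p)^k$ to be $k$-universal, so the answer exists and satisfies $\ell\le kp$, whence $\log\ell=O(\log(kp))$. The target value $f(\ell)=\max_{S'}B_\ell[\emptyset][S']$ (taking $u_\emptyset=\varepsilon$) is nondecreasing in $\ell$ and, by the pigeonhole periodicity of the leftover-set sequence already used in Proposition~\ref{theomin}, eventually grows at least linearly; I would compute the powers $B_1^{\otimes 2^i}$ by repeated squaring and then locate the least $\ell$ with $f(\ell)\ge k$ by a bit-by-bit threshold search, scanning $i$ from high to low and tentatively multiplying a running product by $B_1^{\otimes 2^i}$, keeping the factor iff the accumulated $f$ stays below $k$. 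This uses $O(\log\ell)$ semiring multiplications of $2^{|\Sigma|}$-dimensional matrices, each $O(2^{3|\Sigma|})$ operations. I expect the main obstacle to be the rigorous justification that collapsing over the last-word index is sound and that the semiring product faithfully composes partial universality across junctions: one must check that Lemma~\ref{greedy} together with the $t_{u_Sw}=t_{uw}$ remark really licenses carrying only the leftover \emph{set}, and that the greedy arch structure never yields a better global solution than the block-wise optimum recorded in $B_\ell$. The second delicate point is establishing the periodic-growth and a~priori bound on $\ell$ that make the threshold search terminate after only $O(\log\ell)$ matrix powers rather than $\Theta(\ell)$ append-steps.
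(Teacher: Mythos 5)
Your proposal is correct and follows the same skeleton as the paper's proof: the same composition lemma (Lemma~\ref{greedy}), the same preprocessing via Lemma~\ref{help1}, the same doubling of block lengths until the universality threshold $k$ is exceeded, and the same binary search over precomputed power-of-two blocks, with the a priori bound $\ell\leq pk$ guaranteeing termination after $O(\log \ell)$ stages. The one genuine difference is your state space: the paper's dynamic-programming arrays $N_e[S,S',j]=\max_{2^e}(S,j,S')$ retain the index $j$ of the last concatenated word, whereas you collapse it, setting $B_\ell[S][S']=\max_j\max_\ell(S,j,S')$. This collapse is sound, for exactly the reason you sketch: in the right-hand side of Lemma~\ref{greedy} the first summand $\max_{\ell'}(S,k,S'')$ does not depend on $j$ and the second summand $\max_{\ell-\ell'}(S'',j,S')$ does not depend on $k$, so the maximum over $(k,j,S'')$ factors through $S''$ alone, giving $B_\ell[S][S']=\max_{S''\subseteq\Sigma}\bigl(B_{\ell'}[S][S'']+B_{\ell-\ell'}[S''][S']\bigr)$, i.e., a genuine $(\max,+)$-matrix product indexed by the $2^{|\Sigma|}$ subsets of $\Sigma$; the remark $t_{u_Sw}=t_{uw}$, $m_{u_Sw}=m_{uw}$ preceding Lemma~\ref{greedy} is what makes the subset a sufficient summary of the leftover. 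What this buys: each doubling and search step costs $O(2^{3|\Sigma|})$ operations instead of the paper's $O(2^{3|\Sigma|}p^2)$, with the dependence on $p$ pushed entirely into the $O(2^{|\Sigma|}|\Sigma|p+n)$ construction of the base matrix $B_1$, so your variant in fact establishes a slightly stronger complexity bound than the one stated (the paper's retention of $j$ is organisational, mirroring Theorem~\ref{basis} where positions inside the last word genuinely matter, but it is not needed for correctness here). Two minor points to tighten: the ``periodic growth'' you invoke is unnecessary --- monotonicity of $f(\ell)=\max_{S'}B_\ell[\emptyset][S']$ in $\ell$ (appending a word never decreases the universality index) together with $f(pk)\geq k$ already makes both the doubling phase and your bit-by-bit threshold search terminate --- and that monotonicity is also the justification the threshold search needs, so it should be stated explicitly rather than left implicit.
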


\begin{proof}
Note first that, because $\Sigma=\letters(w_1\cdots w_p)$, we have $\ell\leq pk$. We can now sketch the algorithm computing $\ell$. The general idea is first to construct, by dynamic programming, concatenations of $2^e$ factors of the set $\{w_1,\ldots,w_p\}$, for larger and larger $e$, until we find one such concatenation with $2^f$ elements that is $k'$-universal, for some $k'\geq k$. That is, we compute the values $N_e[S,S',j]=\max_{2^e}(S,i,S')$, for $e$ from $0$ until we reach an array $N_f$ which contains a value $N_f[\emptyset,S',j]\geq k$. Note that $2^f$ is the smallest power of $2$ such that the concatenation of $2^f$ numbers is $k$-universal, so, consequently, $f\leq 2\ell$. The values in each of the array $N_e$ are computed by dynamic programming based on the values in the array $N_{e-1}$, using Lemma \ref{greedy} for $\ell=2^e$ and $\ell'=2^{e-1}$. Once this computation is completed, we use binary search to obtain the exact value of $\ell$, as required. However, we now have the benefit that we can perform this binary search in an interval upper bounded by $2^f\in O(\ell)$. 

In the following we describe the algorithm in details. We will evaluate its complexity first as the number of steps (including arithmetic operations on numbers with up to $\log k$ bits) it performs. Then we compute its actual time complexity.

We start with a preprocessing phase. We construct the $p \times |\Sigma|$ matrix $F[\cdot,\cdot]$, indexed by the numbers between $1$ and $p$ and the letters of $\Sigma$ (which are numbers between $1$ and $|\Sigma|$). We have $F[i,a]$ is the position of the first (i.e., leftmost) occurrence of each letter $x\in \Sigma$ in $w_i$. This matrix can be computed as follows. Initialise all elements of $F$ with $0$. For each $i$, we traverse $w_i$, letter by letter, from left to right. When the letter $x\in \Sigma$ is read on position $j$ of $w_i$, if $F[i,z]=0$ then we set $F[i,x]=j$. The total number of steps needed to do this is $O(|\Sigma|p + n)$ (as it includes the initialisation of $F$). Similarly, we construct the $p \times |\Sigma|$ matrix $L[\cdot,\cdot]$, indexed by the numbers between $1$ and $p$ and the letters of $\Sigma$, where $L[i,a]$ is the position of the rightmost occurrence of each letter $x\in \Sigma$ in $w_i$. 

We compute also in the preprocessing phase the data structures from Lemma \ref{help1}, for each word $w_i$, with $i\in [p]$. So, we have for each word $w_i$ the arrays $t_{w_i}[j]=\max\{t\mid \ScatFact_t(w_i[j..n])= \Sigma^t\}$ and $m_{w_i}[j]=\min\{g\mid \ScatFact_{t_{w_i}[j]}(w_i[j..g])$ $= \Sigma^{t_{w_i}[j]}\}$. This is done in $O(n)$ steps.

Then, for each set $S\subseteq \Sigma$ and $i\in [p]$, we compute in $O(\Sigma)$, the value $j=\max\{F[x,i]\mid x\in \Sigma\setminus S\}$. Basically, $w_i[1..j]$ is the shortest prefix of $w_i$ such that $u_Sw_i$ contains all letters of $\Sigma$. Let $g=m_{w_i}[j+1]$, and let $S'\subseteq \Sigma$ be the set of letters contained by $w_i[g+1..|w_i|]$. The set $S'$ can be computed in $O(\Sigma)$ time, by selecting in $S'$ the letters $x\in \Sigma$ with $L[i,x]>g$. We set $M_{1}[S,i]=(1+t_{w_i}[j], S')$, where $M_1$ is an additional matrix we use. The computation of $M_1[S,i]$, performed for a set $S$ and a number $i\in [p]$, takes $O(|\Sigma|)$. So, in total we compute the matrix $M_1$ in $O(2^{|\Sigma|}|\Sigma|p)$ time. It is worth noting that if $M_1[S,i]=(h,S')$, then $\max_{1}(S,i,S')=h$.

The main phase of the algorithm follows. If there is an element $M_1[\emptyset,i]=(h,S')$ such that $h\geq k$, then we return $\ell=1$. If not we proceed as described next.

For $e$ natural number such that $\log_{pk}+1 \geq e \geq 1$, we define the $3$-dimensional array $N_e[\cdot,\cdot,\cdot]$, whose first two indices are subsets of $\Sigma$, and the third is a number from $[p]$, and $N_e[S,S',i]=\max_{2^e}(S,i,S')$. That is, $N_e[S,S',i]$ stores the maximum $k$ such that there exists $k$-universal word $w$ which is the concatenation of $u_S$ followed by $2^e$ words from $\{w_1,\ldots,w_p\}$, ending with $w_i$, and, moreover, if $w'$ is the suffix of $w$ that follows the shortest $k$-universal prefix of $w$, then $\letters(w')=S'$. The elements $N_e[S,S',i]$ will be computed by dynamic programming, using Lemma \ref{greedy} for $\ell=2^e$ and $\ell'=\frac{\ell}{2}$.

For $e=1$, the elements of the array $N_e$ are computed as follows. By Lemma \ref{greedy}, $N_1[S,S',i]=\max\{g\mid g=g_1+g_2$ where $M_1[S,j]=(g_1,S'')$ and $M_1[S'',i]=(g_2,S'), $ with $j\in [p], S''\subseteq \Sigma \}$.
For $e>1$, we have $N_e[S,S',i]=\max\{g\mid g=g_1+g_2$ where $N_{e-1}[S,S'',j]=g_1$ and $N_{e-1}[S'',S',i]=g_2, $ with $j\in [p], S''\subseteq \Sigma \}$. Clearly, computing each of the arrays $N_e$ takes $O(2^{3|\Sigma|}p^2)$. It is not hard to see that the maximum element of $N_e$ is strictly greater than the maximum element of $N_{e-1}$. 

We stop the computation of the arrays $N_e$ as soon as we reach such an array $N_f$ such that there exists $i$ and $S'$ with $N[\emptyset, S',i]\geq k$. We get that $2^{f-1}< \ell \leq 2^f$ (where $\ell$ is the value we want to compute), so the total time needed to compute all these arrays is $O(2^{3|\Sigma|}p^2 \log \ell)$. 

Now we need to search $\ell$ between $b=2^{f-1}$ and $s=2^{f}$.  We will do this by an adapted binary search. Denote $N'=N_{f-1}$ and $N''=N_f$. Let $h$ be maximum such that $b+2^h< s$. We compute the $3$-dimensional array $N_{mid}[\cdot,\cdot,\cdot]$, indexed just as the arrays $N_e$. We have $N_{mid}[S,S',i]=\max\{g\mid g=g_1+g_2$ where $N'[S,S'',j]=g_1$ and $N_{h}[S'',S',i]=g_2, $ with $j\in [p], S''\subseteq \Sigma \}$. If $N_{mid}$ contains an element greater or equal to $\ell$, we repeat this search for the same $b$ and $N'$, and $s=b+2^h$ and $N''=N_{mid}$. Otherwise, we repeat the search for the same $s$ and $N''$, and using $b+2^h$ instead of $b$ and $N_{mid}$ instead of $N'$. We stop the process if $b=s-1$, and return $s$. This procedure is iterated $O(f)=O(\log \ell)$ times. Thus, computing $\ell$ is done in $O(2^{3|\Sigma|}p^2 \log \ell)$ steps, some of which are arithmetic operations on numbers with up to $\log k$ bits.

The overall complexity of the algorithm is, thus, $O(\frac{\log k}{log n}(2^{3|\Sigma|}p^2 \log \ell) + n)$. \qed
\end{proof}

Note that, in the case stated in the previous theorem, computing the minimal 
number of words (from a given set) that should be concatenated in order to 
obtain a $k$-universal word is fixed parameter tractable w.r.t. the parameter 
$|\Sigma|$, the size of the alphabet of the input words. If both $p$, the number 
of input words, and $|\Sigma|$ are constant, the algorithm runs in $O(n+\log 
\ell)$ steps, which is linear w.r.t. the size of the input because $\log \ell 
\leq \log (pk)=\log p + \log k$ (but the overall time is still affected by the operations on large numbers). In fact, we can give a solution with a linear number of steps 
for this problem in the case of words over binary alphabets (and $p$ is 
not necessarily constant). The main idea is, in this case, we can show that, from an input 
set of words, only a constant number are useful when trying to construct a 
$k$-universal word by a minimal number of concatenations. The following result is based on the arch factorisation and Proposition~\ref{decomp}.

\begin{theorem}\label{basis-binary}
Given $k\in\N$ and $w_1,\ldots,w_p\in\{\ta,\tb\}^{\ast}$ with $\letters(w_1\cdots w_p)=\{\ta,\tb\}$ and $|w_1\cdots w_p|=n$, 
we can compute in $O(n+\log \ell)$ steps the minimal $\ell$ for which there 
exist $\{i_1,\ldots,i_\ell\} \subseteq [k]$ such that $w_{i_1}\cdots w_{i_\ell}$ 
is $k$-universal. The overall complexity of the algorithm is, thus, $O(\frac{\log k}{log n}\log \ell + n)$. 
\end{theorem}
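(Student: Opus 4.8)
The plan is to show that, over a binary alphabet, only a constant number of the input words $w_1,\ldots,w_p$ are ever needed, so that the problem reduces to an instance of Theorem~\ref{basis2} with a constant number of input words (and with the constant alphabet $\abs{\Sigma}=2$), which then runs in $O(n+\log\ell)$ steps. First I would compute, for each $w_i$, its universality index $\iota(w_i)$ together with its first and last letters; by Proposition~\ref{decomp} this costs $O(\abs{w_i})$ per word, hence $O(n)$ in total.

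The classification I would use exploits the block (run) structure of binary words. Writing $w_i$ as a maximal alternation of blocks $c_1^{e_1}\cdots c_m^{e_m}$ with $c_j\neq c_{j+1}$, the arch factorisation pairs consecutive blocks, so $\iota(w_i)=\lfloor m/2\rfloor$, and the rest $r(w_i)$ is empty when $m$ is even and equals the last block $c_m^{e_m}$ when $m$ is odd; moreover $m$ is odd exactly when the first and last letters of $w_i$ coincide. Consequently the only data that govern how $w_i$ interacts with its neighbours in a concatenation are the pair of its first and last letters, written $(c_1,c_m)$, which also fixes the parity of $m$ and hence the type of $r(w_i)$. This yields at most four types. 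I would group the words accordingly and, within each type, keep a single representative, namely one attaining the maximal value of $\iota$.

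The crux is the exchange argument justifying this reduction, and I expect it to be the main obstacle. Here I would rely on the fact that the arch factorisation is prefix-greedy: for any word $\alpha$ and any $w_i$ we have $\iota(\alpha w_i)=\iota(\alpha)+\iota(r(\alpha)w_i)$ and $r(\alpha w_i)=r(r(\alpha)w_i)$, so when scanning a concatenation from left to right the contribution of each factor depends only on the monochromatic ``pending'' rest $\rho=r(\alpha)$ carried in from the left. A short computation on block counts shows that, for a fixed incoming $\rho$, the block count of $\rho w_i$ is $m$ or $m+1$ according to whether $c_1$ equals the letter of $\rho$ or not; hence the number $\iota(\rho w_i)$ of archs gained is a non-decreasing function of $m$, while the outgoing pending rest is determined solely by the parity of the block count of $\rho w_i$ and by $c_m$. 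Two words of the same type $(c_1,c_m)$ therefore effect identical pending transitions, and replacing one by a same-type word of larger $\iota$ can only increase the running universality at that step and leaves every later step unchanged. Thus any concatenation of $\ell$ input words that is $k$-universal can be turned, factor by factor, into a $k$-universal concatenation of $\ell$ representatives; since the representatives form a subset of the original words, the minimal $\ell$ is unchanged. Establishing this monotonicity and the invariance of the pending state under the swaps is exactly what the block-count bookkeeping above is designed to settle.

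Finally I would note that the representatives collectively still contain both letters (as the original words do, a letter occurring in some $w_i$ necessarily survives as a first letter, a last letter, or inside the max-$\iota$ representative of its class), so Theorem~\ref{basis2} applies to them. Running that algorithm on the at most four representatives with the given $k$ and reading off the minimal $\ell$, the bound $O(2^{3\abs{\Sigma}}p^2\log\ell+n)$ collapses, with $p=O(1)$ and $\abs{\Sigma}=2$, to $O(n+\log\ell)$ steps; multiplying the $O(\log\ell)$ operations on $\log k$-bit numbers by the per-operation cost $O(\frac{\log k}{\log n})$ then yields the claimed overall time $O(\frac{\log k}{\log n}\log\ell+n)$.
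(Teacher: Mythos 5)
Your high-level plan (shrink the input set to constantly many representatives via an exchange argument, then invoke Theorem~\ref{basis2}) is exactly the paper's strategy, but your classification is too coarse and the block-count bookkeeping supporting it is factually wrong. First, the combinatorial claims fail: for $w=\ta\ta\tb\tb\ta\ta$ we have $m=3$ blocks but $\iota(w)=2\neq\lfloor m/2\rfloor$ (archs $\ta\ta\tb$, $\tb\ta$, rest $\ta$), and for $w=\ta\ta\tb\tb$ the block count $m=2$ is even yet $r(w)=\tb\neq\varepsilon$, while $\ta\tb\ta\tb$ (also even $m$, same type $(\ta,\tb)$) has empty rest. So neither the arch count nor the outgoing rest is determined by the type $(c_1,c_m)$ together with $m$ or its parity; in particular, two same-type words do \emph{not} ``effect identical pending transitions.''

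Second, and decisively, the exchange step breaks. Take $u=\tb\ta$ and $v=\tb\tb\ta\ta$: same type $(\tb,\ta)$, same $\iota(u)=\iota(v)=1$, so your rule (keep a max-$\iota$ representative, ties arbitrary) may keep $u$. But after an incoming pending rest $\ta$, the word $u$ contributes one arch ($\ta\tb\ta$ has $\iota=1$) while $v$ contributes two ($\ta\tb\tb\ta\ta$ has archs $\ta\tb$, $\tb\ta$). Concretely, on input $\{\ta\tb,\;\tb\ta,\;\tb\tb\ta\ta\}$ with $k=3$, the true answer is $\ell=2$ since $\tb\tb\ta\ta\cdot\tb\tb\ta\ta$ is $3$-universal, whereas every length-$2$ concatenation of the representatives $\{\ta\tb,\tb\ta\}$ has $\iota=2$, so your algorithm would output $3$. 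The underlying obstruction is that when a word sits inside a concatenation, its first and/or last letter can be absorbed into a \emph{crossing} arch, and then what matters is the universality index of the word with that boundary letter removed — a quantity that is not maximised by the max-$\iota$ word of the same type (ties in $\iota$ can hide a difference of one arch in each trimmed version). This is precisely why the paper keeps not $4$ but (up to) $9$ representatives: one word maximising $t_{w}$, for each letter $x$ one maximising $t_{w[2..|w|]}$ among words starting with $x$, one maximising $t_{w[1..|w|-1]}$ among words ending with $x$, and for each pair $(x,y)$ one maximising $t_{w[2..|w|-1]}$; its case analysis then replaces a non-representative according to whether its first letter ends a crossing factor and/or its last letter starts one. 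Your reduction would need to be repaired along these lines (classify by which trimmed universality index is relevant, not by $\iota$ alone) before the appeal to Theorem~\ref{basis2} is justified.
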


\begin{proof}
Let $u_0\in \{w_1,\ldots,w_p\}$ be such that $t_{u_0} \geq t_{w_i}$, for all $i\in [p]$. For each $x \in \{\ta,\tb\}$, let $u_x\in \{w_1,\ldots,w_p\}$ be such that $u_x$ starts with $x$ and $t_{u_x[2..|u_x|]} \geq t_{w_i[2..|w_i|]}$, for all $i\in [p]$. 
For each $x \in \{\ta,\tb\}$, let $v_x\in \{w_1,\ldots,w_p\}$ be such that $v_x$ ends with $x$ and $t_{v_x[1..|v_x|-1]} \geq t_{w_i[1..|w_i|-1]}$, for all $i\in [p]$. 
For each pair $x,y \in \{\ta,\tb\}$, let $u_{x,y}\in \{w_1,\ldots,w_p\}$ be such that $u_{x,y}$ starts with $x$ and ends with $y$ and $t_{v_x[2..|v_x|-1]} \geq t_{w_i[2..|w_i|-1]}$, for all $i\in [p]$. 
In case of equalities, we just any word that fulfils the desired property.

Let $S=\{u_0\}\cup \{u_x\mid x\in \{\ta,\tb\}\} \cup \{v_x\mid x\in \{\ta,\tb\}\} \cup \{u_{x,y}\mid x,y\in \{\ta,\tb\}\}$. Clearly, $S$ contains at most $9$ words. Note that all words of $S$ can be computed in $O(n)$ time, using the same strategy as in Proposition~\ref{decomp}. 

One can show that for every concatenation of $m$ words from $\{w_1,\ldots,w_p\}$ which is $k$-universal, there exist a concatenation of $m$ words from $S$ which is $k'$-universal, for some $k'\geq k$. Thus, it is enough to solve the problem for the input set $S$, of constant size, instead of the whole $\{w_1,\ldots,w_p\}$. Hence, by Theorem \ref{basis2}, the conclusion follows. 

Indeed, let $w=w_{i_1}\cdots w_{i_{\ell-1}} w_{i_\ell} w_{i_{\ell+1}} \cdots w_{i_{m}} $, such that $w_{i_\ell}\not\in S$. To compute $t_w=t$ we can proceed as in Proposition~\ref{decomp} and identify $t$ factors $d_1,\ldots,d_t \in \{\ta\tb,\tb\ta\}$ of $w$ such that $w=s_0d_1s_1\cdots d_ts_t$, where $s_i\in \{\ta,\tb\}^*$ for $i\in [t]_0$. Clearly, $|\letters(s_i)|\leq 1$, for all $i \in [t]_0$. Now, we do a case analysis. 

Let $x=w_{i_\ell}[1]$ and $y=w_{i_\ell}[|w_{i_\ell}|]$. If the first letter of $w_{i_\ell}$ is the last letter of a factor $d_i$ and the last letter of $w_{i_\ell}$  is the first letter of a factor $d_j$ (with $i<j$), let $w'=w_{i_1}\cdots w_{i_{\ell-1}} u_{x,y} w_{i_{\ell+1}} \cdots w_{i_{m}} $; it is immediate that $t_{w'}\geq t_{w}$. If the first letter of $w_{i_\ell}$ is the last letter of some $d_i$ but the last letter of $w_{i_\ell}$ is not the first letter of any factor $d_j$ (where $j>i$), let $w'=w_{i_1}\cdots w_{i_{\ell-1}} u_{x} w_{i_{\ell+1}} \cdots w_{i_{m}} $; it is immediate that $t_{w'}\geq t_{w}$. If the first letter of $w_{i_\ell}$ is not the last letter of any factor $d_i$ but the last letter of $w_{i_\ell}$ is the first letter of a factor $d_j$, let $w'=w_{i_1}\cdots w_{i_{\ell-1}} u_{y} w_{i_{\ell+1}} \cdots w_{i_{m}} $; it is immediate that $t_{w'}\geq t_{w}$. Finally, if the first letter of $w_{i_\ell}$ is not the last letter of any factor $d_i$ and the last letter of $w_{i_\ell}$  is not the first letter of any factor $d_j$, let $w'=w_{i_1}\cdots w_{i_{\ell-1}} u_{0} w_{i_{\ell+1}} \cdots w_{i_{m}} $; it is immediate that $t_{w'}\geq t_{w}$.

So, if a concatenation of $m$ words $w_{i_1}\cdots  w_{i_{m}}$ is $t$-universal, we could iteratively replace all the words which are not part of $S$ by words of $S$ and obtain a concatenation with $m$ input words, which is $t'$-universal, with $t'\geq t$. Therefore, to solve the problem from the statement of the theorem is enough to produce the set $S$ and then solve the problem for the input set $S$ instead of $\{w_1,\ldots,w_p\}$. For that we can use the algorithm from Theorem \ref{basis2}, which will run in $O(n+\frac{\log k \log \ell}{\log n} )$ steps, because both $S$ and $\Sigma$ are constant.\qed 
\end{proof} 

In a particular case of Theorem \ref{basis2} each of the input words contain all letters of $\Sigma$. Once again, we obtain a polynomial algorithm.

\begin{theorem}\label{basis}
Given $w_1,\ldots,w_p\in\Sigma^{\ast}$, with $\letters(w_i)=\Sigma$ for all $i\in [p]$ and $|w_1\cdots w_p|=n$, and $k\in\N$, 
we can compute in polynomial time $O(n+p^3|\Sigma|\log\ell \frac{\log k}{log n})$ the minimal $\ell$ for which there exist $\{i_1,\ldots,i_\ell\} \subseteq [k]$ with $w_{i_1}\cdots$ $w_{i_\ell}$ is $k$-universal.
\end{theorem}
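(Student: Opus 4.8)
The plan is to treat this as the special case of Theorem~\ref{basis2} in which every input word is already $1$-universal, and to observe that this hypothesis collapses the exponential dependence on $|\Sigma|$ of that theorem to a polynomial one. Recall that in Theorem~\ref{basis2} the factor $2^{3|\Sigma|}$ arose because the dynamic program ranged its carry sets $S,S',S''$ over all $2^{|\Sigma|}$ subsets of $\Sigma$. The key structural observation here is that, since each $w_i$ satisfies $\letters(w_i)=\Sigma$, appending $w_i$ to any carry $u_T$ completes at least one arch \emph{inside} $w_i$ (as $w_i$ alone contains all letters); hence the position $m_w$ after the last full arch of any $w=u_Tw'w_j$ lies inside the final factor $w_j$, and the resulting tail letter set $\letters(w[m_w+1..|w|])$ is a \emph{tail letter set} of $w_j$, i.e. a set of the form $\letters(w_j[g+1..|w_j|])$. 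For a fixed word these tail letter sets form a nested chain and there are at most $|\Sigma|$ of them; summing over the $p$ words, only $O(p|\Sigma|)$ distinct carry sets can ever occur (together with the initial $\emptyset$). I would first prove this lemma precisely.

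First I would do an $O(n)$ preprocessing: build, as in Theorem~\ref{basis2}, the first/last occurrence matrices $F$ and $L$ and the arrays $t_{w_i}[\cdot],m_{w_i}[\cdot]$ of Lemma~\ref{help1} for each $w_i$, and in the same pass enumerate the $O(p|\Sigma|)$ reachable carry sets $\mathcal{C}$, representing each as a (word index, tail level) pair. For every carry $T\in\mathcal{C}$ and every $j\in[p]$ I would precompute, exactly as in Theorem~\ref{basis2} via $j_1=\max\{F[x,j]\mid x\in\Sigma\setminus T\}$ and $m_{w_j}[j_1+1]$, the one-step gain $1+t_{w_j}[j_1+1]$ and the outgoing carry $T'\in\mathcal{C}$; this is $O(|\Sigma|)$ per pair. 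These data define a $(\max,+)$ transition matrix $M$ indexed by $\mathcal{C}$, whose entry $M[T,T']$ is the best gain of a single appended word taking carry $T$ to carry $T'$. Note that, unlike the arrays $N_e$ of Theorem~\ref{basis2}, no separate last-word index is needed: the outgoing carry already serves as the second index.

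I would then run exactly the doubling-plus-binary-search scheme of Theorem~\ref{basis2}, but with $M$ in place of the subset-indexed arrays $N_e$: repeated squaring in the $(\max,+)$ semiring computes, for $e=0,1,\dots$, the maximal universality $\max_{T'}M^{2^e}[\emptyset,T']$ of a length-$2^e$ concatenation, and I stop at the first $f$ with this value $\geq k$; because each word is $1$-universal the maximal universality of a length-$\ell$ concatenation is at least $\ell$, so $\ell\leq k$ and $f=O(\log\ell)$. An adapted binary search between $2^{f-1}$ and $2^f$, combining the already-computed matrix powers in the $(\max,+)$ semiring exactly as the arrays $N'$ and $N_{mid}$ are combined in Theorem~\ref{basis2}, then pins down the minimal $\ell$. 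Correctness of this phase is inherited from Lemma~\ref{greedy} and Theorem~\ref{basis2} once the carry sets are restricted to $\mathcal{C}$.

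The main obstacle is twofold, and both parts concern the state space rather than the scheme itself. First, I must verify completeness: that restricting carries to the tail letter sets $\mathcal{C}$ loses no optimal concatenation — this is exactly the structural lemma above, whose proof rests on the $1$-universality of each $w_i$ forcing the last arch boundary into the final word. Second, to obtain the claimed $O(n+p^3|\Sigma|\log\ell\,\tfrac{\log k}{\log n})$ rather than a cruder $(p|\Sigma|)^3$ bound, I would exploit a domination property — a larger carry can only increase the achievable future universality, since it shortens the first arch of the next word and leaves more of it for further archs — which lets me keep, for each last-word index, only the dominant carry and thereby work with a $p$-indexed transition matrix carrying an $O(|\Sigma|)$ cost per entry. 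Establishing this monotonicity cleanly, and checking that it is preserved under $(\max,+)$ squaring (so that the dominant carries can be propagated alongside the numeric gains), is the delicate point; the remaining arithmetic on $\log k$-bit numbers is routine and contributes the $\tfrac{\log k}{\log n}$ factor as in the previous results.
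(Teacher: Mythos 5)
Your route is the paper's route: use $\letters(w_i)=\Sigma$ to collapse the carry sets of Theorem~\ref{basis2} to the $O(p|\Sigma|)$ tail letter sets, then rerun the doubling-plus-binary-search scheme with domination-based pruning. Both of your key structural points correspond exactly to what the paper does: your observation that the last arch boundary of any concatenation falls inside its last word (so carries are nested tail sets) is the paper's observation that ``crossing'' archs can start at only $|\Sigma|$ specific positions per word (the lists $L_i$), and your domination order is the paper's ``more useful'' order on pairs (universality index, minimal end position) --- these coincide because, for a fixed ending word, a smaller end position means a larger nested tail set, and a carry is worth at most one extra arch, so the gain-then-carry comparison is total and is preserved under composition. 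So the ``delicate point'' you flag does go through, essentially by the argument you sketch.

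Two concrete points, however, keep the proposal as written from meeting the stated bound. First, you build the one-step transition table by spending $O(|\Sigma|)$ on each pair (carry, word), i.e.\ $O(p|\Sigma|)\cdot p\cdot O(|\Sigma|)=O(p^2|\Sigma|^2)$ steps; for $p=2$, $k=2$ and the two words being permutations of a large alphabet, this is $\Theta(|\Sigma|^2)$ while the claimed bound is $O\bigl(n+p^3|\Sigma|\log\ell\tfrac{\log k}{\log n}\bigr)=O(|\Sigma|)$. The fix is to exploit the nestedness you already established: for fixed source word $w_i$ and target $w_j$, process the tails of $w_i$ in order and update the crossing endpoint incrementally --- precisely the paper's recurrence $C[i,j,g]=\max\{F[j,x_g],C[i,j,g-1]\}$, costing $O(p^2|\Sigma|)$ in total. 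Second, the ``$p$-indexed transition matrix'' cannot be a bare $p\times p$ matrix: the gain of a block depends on its incoming carry, which is produced by the preceding block and so cannot be pruned away on the incoming side; domination only lets you prune the stored outgoing (gain, carry) value. The state space must remain (incoming carry, ending word), i.e.\ $O(p^2|\Sigma|)$ entries, each combined over $O(p)$ junction words per squaring step --- exactly the paper's arrays $R_e[i,j,c]$ and $P_e[i,j,c]$ --- and it is this structure that yields the claimed $O(p^3|\Sigma|)$ steps per doubling/binary-search round.
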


The proofs of Theorems \ref{basis2} and \ref{basis} are based on 
a common dynamic programming algorithm: for all subsets $S\subset \Sigma$ and 
increasing values of an integer $\ell>1$, we compute the maximal universality 
index $m$ that we can obtain by concatenating $2^t$ words from the input set such 
that the respective concatenation consists in a prefix which is $m$-universal, followed by a suffix over $S$. 
we stop as soon we reach an $m \geq k$ as 
universality index. We then optimise the number of concatenated words needed to obtain universality index $k$ by binary search. 
Now, for Theorem \ref{basis2} we really have to consider all the sets $S$, in each step, while in the case of Theorem 
\ref{basis} it is enough to consider only the sets that occur as alphabets of 
the suffixes of the input words. This is why this strategy can be implemented 
more efficiently in the case when all input words are universal to begin with.

\begin{proof} (of Theorem~\ref{basis})
We follow the idea of the algorithm of Theorem \ref{basis2}: construct, by dynamic programming, longer and longer concatenations of factors of the set $\{w_1,\ldots,w_p\}$, until one such concatenation which is $k$-universal is obtained. The main difference is that in each concatenation $w= w_{i_1}\ldots w_{i_m}$, the shortest prefix of $w$ which is $k$-universal ends inside $w_{i_m}$, because $\letters(w_i)=\Sigma$ for all $i\in [p]$. As such, the $\ell$ we search for is at most $k$, but also this allows us to get rid of the exponential dependency on $\Sigma$ from Theorem \ref{basis2}, as we can now work with certain suffixes of the words $w_{i}$, instead of subsets of $\Sigma$, when defining our dynamic programming structures.
Informally, our algorithm works as follows: we find the highest universality index of a concatenation of $2^e$ words of $\{w_1,\ldots,w_p\}$, which starts inside $w_i$ and ends inside $w_j$, for all $i$ and $j$, and suitable starting and, respectively, ending positions. This can be efficiently computed for several reasons. Firstly, such a concatenation is obtained by putting together two concatenations of roughly $2^{e-1}$ words of $\{w_1,\ldots,w_p\}$ which have the highest universality index, the first starting in the same place within $w_i$, followed by $2^{e-1}-2$ words of the input set, and ending with a prefix of length $c$ of some $w_q$, and the second one starting with $w_q[c+1..|w_q|]$ followed by $2^{e-1}$ words from the input set, ending in the same place within $w_j$. Secondly, a concatenation of $2^e$ words of $\{w_1,\ldots,w_p\}$ with the highest universality~index, ending inside $w_j$, can actually only end on some very specific positions of $w_j$: the positions where each letter of $\Sigma$ occurs for the first time in the shortest prefix of $w_j$ that contains all letters of $\Sigma$. Consequently, the starting positions of such concatenations (useful in our algorithm either directly as solutions, or as building blocks for larger concatenations) are also restricted. Putting these two ideas together, and using an adapted binary search to search for $\ell$, we obtain an algorithm with the stated complexity. 

Once again, we start with some preliminaries and a preprocessing phase. Let $\sigma=|\Sigma|$. 

To begin with, let us consider a concatenation $w= w_{i_1}\cdots w_{i_m}$, and let $t$ be the maximum number such that $w$ is $t$-universal. By Lemma \ref{decomp} we can greedily decompose $w=d_1\cdots d_td'$, such that $\letters(d_j)=\Sigma$, $\letters(d')$ is a strict subset of $\Sigma$, and $d_1\cdots d_j$ is the shortest prefix of $w$ which is $j$-universal, for all $j\in [t]$. Because $\letters(w_i)=\Sigma$ for all $i$, we have that each factor $d_j$ is either fully contained in one of the words $w_{i_g}$, for $j\in [t]$ and $g\in [m]$, or it starts inside $w_{i_{g}}$ and ends inside $w_{i_{g+1}}$, for some $g\in [m]$. In the following, we call a factor $d_j$ crossing if it starts  inside $w_{i_{g}}$ and ends inside $w_{i_{g+1}}$, for some $g\in [m]$. If $d_j$ is such a factor, then $d_j$ can only start on some very specific positions inside $w_{i_g}$. Firstly, the suffix of $w_{i_g}$ that comes after $d_{j-1}$ cannot contain all letters of $\Sigma$. So $d_{j-1}$ must end inside the shortest suffix of $w_{i_g}$ that contains all letters of $\Sigma$. Assume this suffix starts on position $r$ and note that it starts with the last occurrence of some letter of $\Sigma$ in $w_{i_g}$. So, $d_{j-1}$ ends on a position $r'\geq r$. Due to the greedy construction of $d_{j-1}$, it follows that the last letter of $d_{j-1}$ occurs in $w[r..r']$ exactly once. So, $d_{j-1}$ ends on the first occurrence of a letter of $\Sigma$ to the right of $r$. There are at most $\sigma$ such positions. Consequently, $d_j$ starts exactly on the next position after $d_{j-1}$ ends, and we also have at most $\sigma$ positions where $d_j$ may start.

In conclusion, in each word $w_i$, part of a concatenation  $w= w_{i_1}\ldots w_{i_m}$, there are at most $\Sigma$ positions where a crossing factor can start. Each crossing factor $d_j$ is constructed by appending to $d_j$ (in a left to right traversal, from the starting position of the factor) the letters of the considered concatenation until $\Sigma=\letters(d_j)$. Therefore, $d_j$ is uniquely determined by the two factors it crosses ($w_{i_g}$ and $w_{i_{g+1}}$) and its starting position inside $w_{i_g}$. Hence, there can be at most $O(p^2|\Sigma|)$ crossing factors overall, and we will determine all of them in our preprocessing.

In the preprocessing phase, we first construct the $p \times |\Sigma|$ matrix $F[\cdot,\cdot]$, indexed by the numbers between $1$ and $p$ and the letters of $\Sigma$ (which are numbers between $1$ and $|\Sigma|$). We have $F[i,a]$ is the position of the first (i.e., leftmost) occurrence of each letter $x\in \Sigma$ in $w_i$. This matrix can be computed as follows. Initialise all elements of $F$ with $0$. For each $i$, we traverse $w_i$, letter by letter, from left to right. When the letter $x\in \Sigma$ is read on position $j$ of $w_i$, if $F[i,z]=0$ the we set $F[i,x]=j$. The total number of steps needed to do this is $O(|\Sigma|p + n)$ (as it includes the initialisation of $F$). Similarly, we construct the $p \times |\Sigma|$ matrix $L[\cdot,\cdot]$, indexed by the numbers between $1$ and $p$ and the letters of $\Sigma$, where $L[i,a]$ is the position of the rightmost occurrence of each letter $x\in \Sigma$ in $w_i$. Using $L[i,\cdot]$ we also determine the position $r_i$ of $w_i$ such that $w_i[r_i..|w_i|]$ is the shortest suffix of $w_i$ that contains all letters of $\Sigma$. Also, in another traversal of $w_i$ we can compute the increasingly sorted list $L_i$ of positions where each letter of $\Sigma$ occurs for the first time in $w_i[r_i..|w_i|]$. More precisely, we construct the lists $L_i=(i_1,x_1),\ldots,(i_\sigma,x_\sigma)$, where $i_g<i_{g+1}$ for $g\in \Sigma$, and $\{x_1,\ldots,x_\sigma\}=\Sigma$. The needed to compute all these structures is $O(n)$. 

Now, we compute the factors crossing from $w_i$ to $w_j$. They should start on one of the positions $i_1+1$, $i_2+1, \ldots$ $i_{\sigma}+1$, obtained using $L_i$. Let $c_{i,j}[i_g+1]$ be the crossing factor that starts on position $i_g+1$ for some $g\in [\sigma]$. The prefix of $c_{i,j}[i_g+1]$ contained in $w_i$ contains only the letters $x_{g+1},\ldots,x_\sigma$ and none of the letters $x_1,\ldots,x_g$. Thus, $c_{i,j}[i_g+1]$ extends in $w_j$ until it contains all the missing letters, i.e., till the maximum position among $F[j,x_1]$, $F[j,x_2], \ldots$, $F[j,x_g]$. This observation allows us to compute the respective crossing factors efficiently. Let $C[i,j,g]$ be the last position (inside $w_j$) of $c_{i,j}[i_{g}+1]$. Then $C[i,j,1]=F[j,x_1]$. For $g>1$, $C[i,j,g]=\max\{F[j,x_g],C[i,j,g-1]\}$. 

The time needed to compute the values $C[i,j,\cdot]$ is $O(\sigma)$. We do this computation for all $i$ and $j$, and, as such, we identify the starting and ending positions for all possible crossing factors in $O(p^2\sigma)$.

Still in the preprocessing phase, we compute the data structures from Lemma \ref{help1}, for each word $w_i$, with $i\in [p]$. So, we have for each word $w_i$ the arrays $t_{w_i}[j]=\max\{t\mid \ScatFact_t(w_i[j..n])= \Sigma^t\}$ and $m_{w_i}[j]=\min\{g\mid \ScatFact_{t_{w_i}[j]}(w_i[j..g])= \Sigma^{t_{w_i}[j]}\}$. Let $t_{w_i}=t_{w_i}[1]$ and $m_{w_i}=m_{w_i}[1]$. This takes $O(n)$ time.

Further, we present the main phase of our algorithm, that computes the value $\ell$ for which there exist $\{i_1,\ldots,i_\ell\} \subseteq [k]$ such that $w_{i_1}\cdots w_{i_\ell}$ is $k$-universal.

Firstly, if there exists $i$ such that $t_{w_i}\geq k$, we have $\ell=1$. Otherwise, we continue as follows. 

For $e\in [k]$, $e\geq 1$, we define the $3$-dimensional arrays $R_e[\cdot,\cdot,\cdot]$, whose first and third indices are numbers $i,j\in [p]$, and second index is a number from $\{0\}\cup L_i$ (so each $R_e$ has size $O(p^2\sigma)$. We define $R_e[i,j,c]=(t,d)$ where $t$ is the maximum number for which there exist ${i_2},\ldots,{i_{2^e-1}}\in [p]$ such that $t_w=t$, where $w=w_{i}[c+1..|w_i|]w_{i_2}\cdots w_{i_{2^e-1}}w_j$, and $d$ is the minimum number for which there exist ${i_2},\ldots,{i_{2^e-1}}$ such that $t_w=t$, where $w=w_{i}[c+1..|w_i|]w_{i_2}\cdots w_{i_{e-1}}w_j[1..d]$. That is, $R_e[i,j,c]$ stores, on its first component, the maximum $t$ such that there exists $t$-universal word $w$ which is the concatenation of the suffix of $w_i$ that starts on position $c+1$, followed by $2^e-2$ words from the set $\{w_1,\ldots,w_p\}$, and then followed by $w_j$. Moreover, $R_e[i,j,c]$ also stores, on its second component, the minimum value $m_w$ obtained for a concatenation $w=w_{i}[c+1..|w_i|]w_{i_2}\cdots w_{i_{2^e-1}}w_j$ for which $t_w=t$ (i.e., $t_w$ is as large as possible). We define also the $3$-dimensional arrays $P_e[\cdot,\cdot,\cdot]$, exactly as the above with the single difference that in the definition of the elements of $P_e$ we consider the concatenation of $2^e+1$ elements, not just $2^e$ as it was the case for $R_e$.

The elements $R_e[i,j,c]$ and $P_e[i,j,c]$ can be computed by dynamic programming, somehow similarly to the approach of Theorem \ref{basis2}. To simplify the exposure we also define the array $R_0[\cdot,\cdot,\cdot]$, in which only the elements $R_0[i,i,c-1]=(t_{w_i}[c],m_{w_i}[c])$ are defined (the others are set to $-\infty$). Clearly, $R_0$ can be computed in $O(p^2\sigma)$. 

To describe the general computation, we need to compare pairs of numbers. We say that $(a,b)$ is {\em more useful} than $(c,d)$ if $a>b$ or $a=b$ and $c\leq d$. Also, if $p=(a,b)$ is a pair of natural numbers, then its first projection is $\pi_1(p)=a$ and its second projection is $\pi_2(p)=b$.

To compute $R_1$ we can use the formula:
\[R_1[i,j,c-1] = (t_{w_i}[c] + 1 + \pi_1(R_0[j,j, C_{i,j}[m_{w_i}[c]+1]]), m_{w_j}[1+C_{i,j}[m_{w_i}[c]+1]]),\] 
for $i,j\in [p]$ and $c\in \{0\}\cup L_i$. 
\begin{figure}\begin{center}
\includegraphics[width=0.8\linewidth]{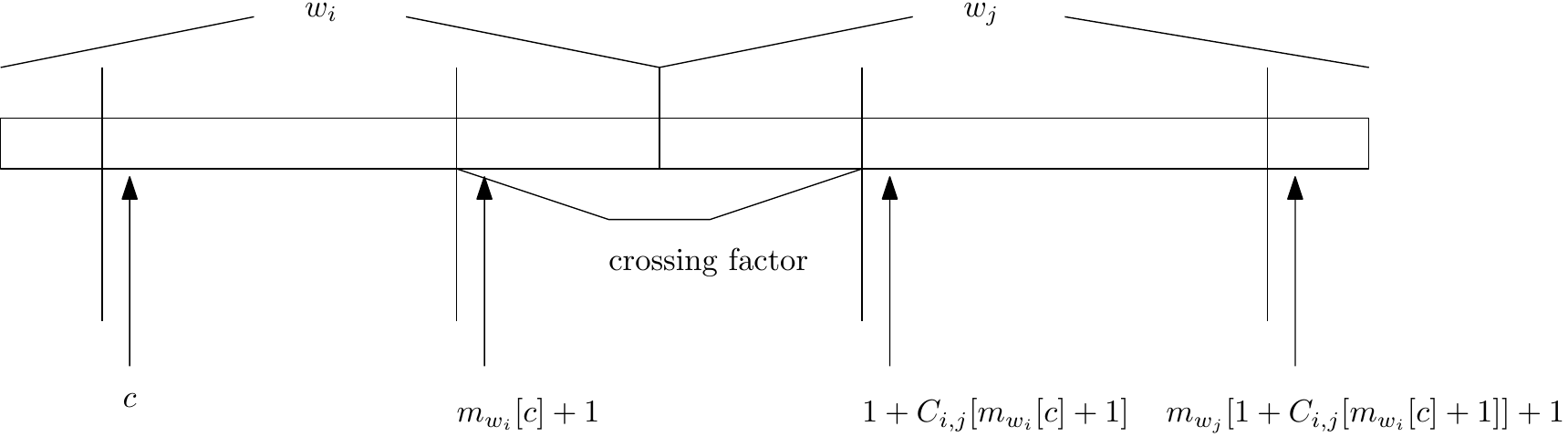}
\caption{The computation of $R_1[i,j,c-1]$}
\end{center}
\end{figure}

Indeed, when computing $R[i,j,c-1]$ we start on position $c$ of $w_i$ and essentially try to identify as many consecutive strings whose alphabet is $\Sigma$ in the concatenation of $w_i$ and $w_j$ as possible. Firstly. using $t_{w_i}[c] $ and $m_{w_i}[c]$ we find the shortest factor $w_i[c..m_{w_i}[c]]$ which has the highest universality index among all factors of $w_i$ starting on position $c$. Then we use the crossing factor that corresponds to $m_{w_i}[c]$ to move into $w_j$, on position $c'=C_{i,j}[m_{w_i}[c]+1]$, and then find the shortest factor $w_j[c'..m_{w_j}[c']]$ which has the highest universality index among all factors of $w_j$ starting on position $c'$. Following similar arguments to those in the proof of Lemma \ref{decomp} we get that $R_0$ is correctly computed in this way: our strategy here corresponds exactly to the greedy strategy employed in the respective lemma. 

After we compute $R_e$, for some $e\geq 1$, we first compute $P_e$. The formula for the elements of $P_e$ is given in the following. Let $q\in [p]$ be such that $R_e[q,j, 1+m_{w_q}[1+C_{i,q}[m_{w_i}[c]+1]]]$ is more useful than any other pair $R_e[q',j, 1+m_{w_{q'}}[1+C_{i,q'}[m_{w_i}[c]+1]]]$ for $q'\in [p]$. We then can compute \\
 $P_e[i,j,c-1] = (t_{w_i}[c] + 1 + t_{w_q}[1+C_{i,q}[m_{w_i}[c]+1]] + \pi_1(R_e[q,j, m_{w_q}[1+C_{i,q}[m_{w_i}[c]+1]]]),$ \\
\hspace*{2.4cm}$ \pi_2(R_e[q,j, m_{w_q}[1+C_{i,q}[m_{w_i}[c]+1]]]),$\\
 for $i,j\in [p]$ and $c\in \{0\}\cup L_i$. 
 \begin{figure}\begin{center}
\includegraphics[width=0.9\linewidth]{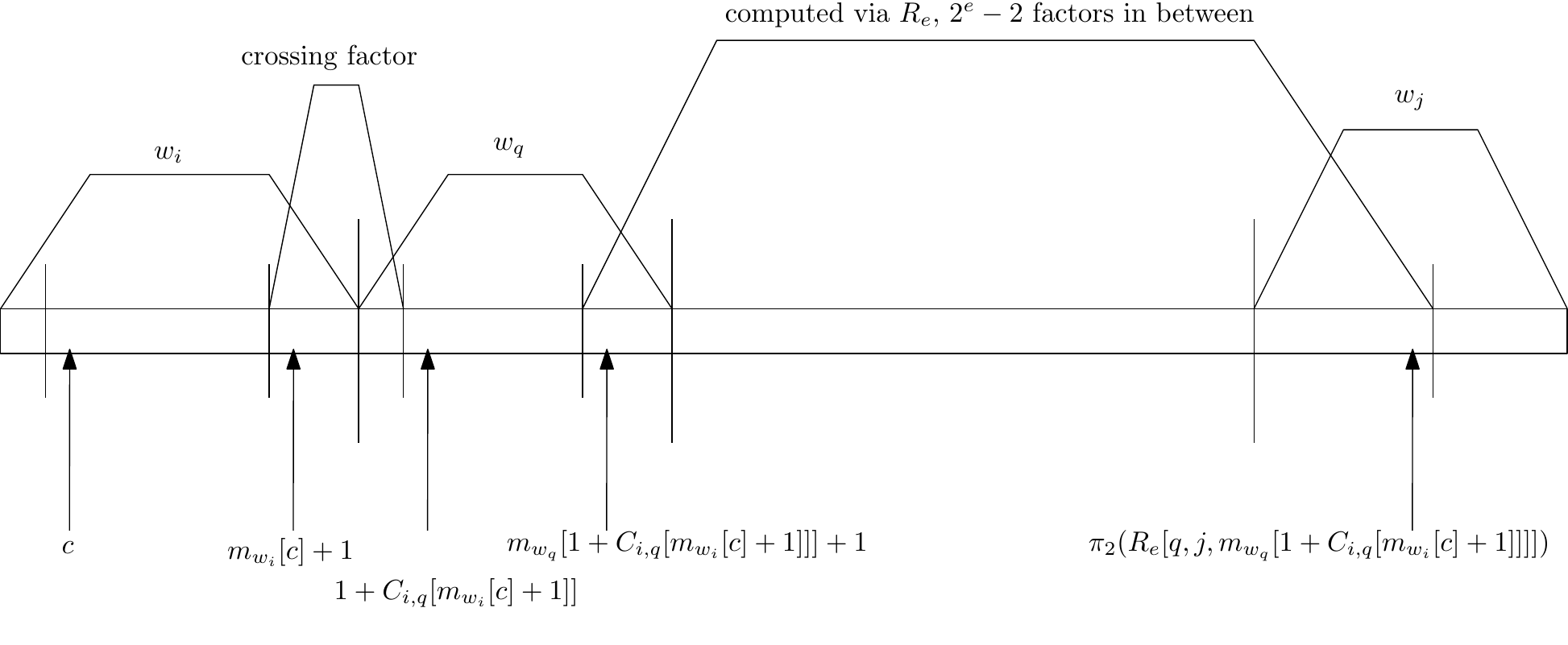}
\caption{The computation of $P_e[i,j,c-1]$}
\end{center}
\end{figure}
 Similarly to the computation of $R_1$, when computing $P_e[i,j,c-1]$ we start on position $c$ of $w_i$ and try to add to $w_i[c..|w_i|]$ a concatenation of $2^e$ words of $\{w_1,\ldots,w_p\}$ (ending with $w_j$), which contains as many consecutive strings, whose alphabet is $\Sigma$, as possible. This is done using the greedy approach of Lemma \ref{decomp}.  As such, we use $t_{w_i}[c] $ and $m_{w_i}[c]$ we find the shortest factor $w_i[c..m_{w_i}[c]]$ which has the highest universality index among all factors of $w_i$ starting on position $c$. Then we identify the word $w_q$, such that after using the crossing factor that corresponds to $m_{w_i}[c]$ to move from $w_i$ into $w_q$ we can reach $w_j$ by concatenating another $2^e-2$ factors in between, to obtain a word with the highest universality index among all such possible concatenations. Once again, it is not hard to see that this formula is correct (see also the figure below). Firstly, the choice of the factor $w_i[c..m_{w_i}[c]]$ as the first group of consecutive strings, each with the alphabet $\Sigma$, is correct due to the greedy approach in Lemma~\ref{decomp}. Then, we need to cross into the rest of the factors in the concatenation of words from $\{w_1,\ldots,w_p\}$. For each choice $w_{q'}$ of the factor following $w_i$ in this concatenation, we cross into this word from $w_i$ in an optimal way: we use the crossing string ending on $C_{i,q'}[m_{w_i}[c]+1]$. Any shorter word would not work, any longer word does not make sense due to the greedy strategy of Lemma \ref{decomp}. Then, using the already computed $m_{w_q}$ and $R_e$ we start from $1+C_{i,q'}[m_{w_i}[c]+1]$ and follow the optimal selection of the concatenated strings given by these arrays. We then select from all these possibilities (computed for each $q'$) the one that produces a string with higher universality index. So, the computation of $P_e[i,j,c-1] $ is correct. 

After computing $P_e$ for some $e\geq 1$, we compute $R_{e+1}$. For some $i,j\in [p]$ and $c$ with $c\in \{0\}\cup L_i$, let $q\in [p]$ be such that $\pi_1(R_e[i,q,c])+\pi_1(P_e[q,j,\pi_2(R_e[i,q,c])]) \geq \pi_1(R_e[i,q',c])+\pi_1(P_e[q',j,\pi_2(R_e[i,q',c])])$ for all $q'\in [p]$. To break equalities, we select $q$ such that $\pi_2(P_e[q,j,\pi_2(R_e[i,q,c])])$ is minimal. Then, we can compute $R_{e+1}[i,j,c]$ by the following formula:\\
$R_{e+1}[i,j,c-1]= (\pi_1(R_e[i,q,c-1])+\pi_1(P_e[q,j,\pi_2(R_e[i,q,c-1])]),$\\
\hspace*{2.9cm}$ \pi_2(P_e[q,j,\pi_2(R_e[i,q,c-1])])),$  for $i,j\in [p]$ and $c\in \{0\}\cup L_i$.  \\
The idea is pretty similar to how we computed the other arrays. 
 \begin{figure}\begin{center}
\includegraphics[width=0.9\linewidth]{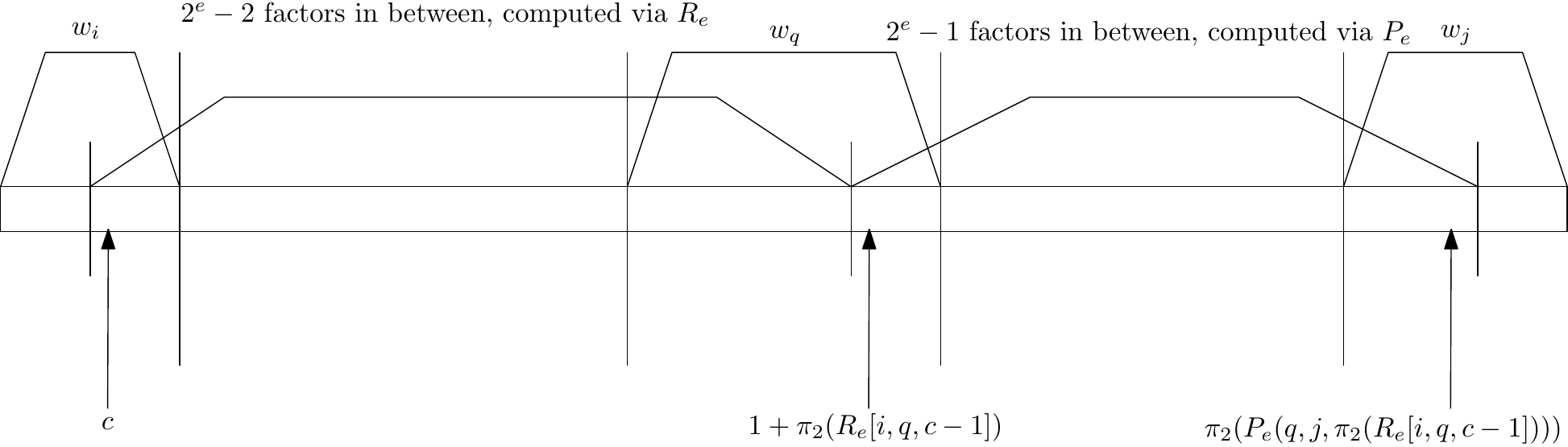}
\caption{The computation of $R_{e+1}[i,j,c-1]$}
\end{center}
\end{figure}
We start on position $c+1$ of $w_i$ and try to add to $w_i[c+1..|w_i|]$ a concatenation of $2^{e+1}-1$ words of $\{w_1,\ldots,w_p\}$ (ending with $w_j$), which contains as many consecutive strings, whose alphabet is $\Sigma$, as possible. We iterate over all possible choices for the $2^e$-th word in this concatenation, namely $w_{q}$. We use the value computed found in $R_e(i,q,c)$ to find the concatenation of $2^e$ words with highest universality index that starts with $w_i[c..|w_i|]$ and ends with $w_{q'}$. Then we continue this concatenation again in the best way (i.e., by the concatenation of $2^{e+1}$ words with the highest universality index), as given by $P_e[q',j,\pi_2(R_e[i,q',c])]$. Then we just take the value $q$ for which we obtained the most useful pair $(\pi_1(R_e[i,q,c])+\pi_1(P_e[q,j,\pi_2(R_e[i,q,c])]), \pi_2(P_e[q,j,\pi_2(R_e[i,q,c])]) $. Once more, the greedy approach shown to be correct in Lemma \ref{decomp} proves that the formula used for the elements of $R_{e+1}$ is also correct. 

Clearly, the complexity of computing each element of $P_e$ and $R_e$ is $O(p)$. So, computing each of these matrices takes $O(p^3\sigma)$. 

As in the proof of Theorem \ref{basis2}, we stop as soon as we computed an array $R_f$ that contains an element $R_f[i,j,0]$ with $\pi_1(R_f[i,j,0])\geq k$. We have $f\in O(\log \ell)$. 

 Now we need to search $\ell$ between $b=2^{f-1}$ and $s=2^{f}$. And we can proceed exactly as in the proof of the aforementioned theorem, by an adapted binary search. Denote $R'=R_{f-1}$ and $R''=R_f$. Let $h$ be maximum such that $b+2^h< s$. We compute the $3$-dimensional array $R_mid[\cdot,\cdot,\cdot]$, indexed just as the arrays $R_e$. We have \\
$R_{mid}[i,j,c-1]= (\pi_1(R'[i,q,c-1])+\pi_1(P_h[q,j,\pi_2(R'[i,q,c-1])]), $\\
\hspace*{2.9cm}$\pi_2(P_h[q,j,\pi_2(R'[i,q,c-1])])),$ 
for $i,j\in [p]$ and $c\in \{0\}\cup L_i$. 

If $R_{mid}$ contains an element whose first component is greater or equal to $\ell$, we repeat this search for the same $b$ and $R'$, and $s=b+2^h$ and $R''=R_{mid}$. Otherwise, we repeat the search for the same $s$ and $R''$, and using $b+2^h$ instead of $b$ and $R_{mid}$ instead of $R'$. We stop the process if $b=s-1$, and return $s$. This procedure is iterated $O(f)=O(\log \ell)$ times. 

The overall number of steps of the algorithm we described is, thus, $O(p^3 \sigma \log \ell + n)$. Of course, in the part where we compute concatenations with large universality index we need to manage arithmetic operations with $\log k$-bit numbers. So, our algorithm requires $O(p^3 \sigma \log \ell \frac{\log k}{\log n} + n)$ time. \qed
\end{proof}

Finally, we consider the case of decreasing the universality of a word by an operation opposed to concatenation, namely the deletion of a prefix or a suffix.

\begin{theorem}\label{del-pref-suf}
Given $w\in\Sigma^n$ with $\iota(w)=m$  and a number 
$\ell<m$, we can compute in linear time the shortest prefix (resp., suffix) 
$w[1..i]$ (resp., $w[i..n]$) such that $w[i+1..n]$ (resp., $w[1..i-1]$) has 
universality index $\ell$. 
\end{theorem}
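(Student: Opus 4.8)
The plan is to reduce both the prefix and the suffix case to a single quantity computed in Lemma~\ref{help1}: the array $t_w[\cdot]$ of universality indices of all suffixes, $t_w[j]=\iota(w[j..n])$. Deleting the prefix $w[1..i]$ leaves $w[i+1..n]$ with index $t_w[i+1]$, so the shortest prefix whose deletion leaves index $\ell$ is obtained from the smallest $j=i+1$ with $t_w[j]=\ell$. Before turning this into an algorithm, I would first argue that such a $j$ exists and is unique as a ``first crossing'', which rests on a step-by-one monotonicity property of the suffix index.

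The key fact I would establish is that for every letter $a$ and word $u$, $\iota(u)\le\iota(au)\le\iota(u)+1$. The lower bound is immediate since $u$ is a suffix (hence a factor) of $au$, and the index of a factor never exceeds that of the word containing it (the Remark following the arch factorisation). For the upper bound, write the arch factorisation $au=\ar(1)\cdots\ar(s)r$; the first arch begins with $a$, so $\ar(2)\cdots\ar(s)r$ is a suffix of $u$ whose factorisation has $s-1$ archs, giving $s-1\le\iota(u)$ and thus $\iota(au)=s\le\iota(u)+1$ (the case $s=0$ is trivial). Applying this with $u=w[j+1..n]$ and $a=w[j]$ yields $t_w[j+1]\in\{t_w[j]-1,\,t_w[j]\}$, so $t_w[\cdot]$ is non-increasing and decreases by at most one per step. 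Since $t_w[1]=m$ and, setting the sentinel $t_w[n+1]=\iota(\varepsilon)=0$, the staircase descends to $0$, every value in $\{0,\dots,m\}$ is attained; in particular $\ell$ is, which guarantees that the requested prefix exists.

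For the prefix case I would then compute $t_w[\cdot]$ in $O(n)$ via Lemma~\ref{help1} (legitimate since $\iota(w)=m\ge 1$ forces $\letters(w)=\Sigma$), scan it left to right for the first index $j$ with $t_w[j]=\ell$, and return $w[1..j-1]$. By construction $\iota(w[j..n])=\ell$, while any shorter deleted prefix leaves a suffix of index $>\ell$, so $w[1..j-1]$ is indeed shortest. The suffix case follows by reversal, using $\iota(u)=\iota(u^R)$: with $v:=w^R$, keeping the prefix $w[1..i-1]$ corresponds to keeping the suffix $v[n-i+2..n]$, so $\iota(w[1..i-1])=t_v[n-i+2]$, and the shortest deleted suffix (the longest kept prefix, i.e.\ the largest $i$) corresponds to the smallest $j$ with $t_v[j]=\ell$, returning $i=n-j+2$.

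Each ingredient runs in linear time: one reversal, one invocation of Lemma~\ref{help1}, and one linear scan, giving overall $O(n)$. The only genuinely non-routine point is the step-by-one bound on the suffix index; it is precisely what makes the simple linear scan both correct (a unique first crossing) and guaranteed to succeed (the target $\ell$ is never skipped). Everything else is bookkeeping around the arrays of Lemma~\ref{help1} and the index coordinate change induced by reversal.
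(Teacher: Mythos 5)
Your proof is correct, and it takes a recognisably different route from the paper's. The paper never looks at the suffix-index array $t_w[\cdot]$: it walks the arch factorisation greedily using the array $u_w[\cdot]$ of Lemma~\ref{help1} (the endpoint of the shortest universal prefix of each suffix), making $\ell+1$ jumps $j \mapsto u_w[j]+1$ to locate the shortest $(\ell+1)$-universal prefix $w[1..t]$; then $w[1..t-1]$ is the longest prefix of universality index $\ell$, so $w[t..n]$ is the shortest suffix to delete, and the prefix-deletion case is dispatched as ``similar''. You instead read the answer off $t_w[\cdot]$ directly, which forces you to prove the step-by-one monotonicity fact $\iota(u)\le\iota(\ta u)\le\iota(u)+1$ (your argument for it, via the arch factorisation of $\ta u$ and the factor-monotonicity remark, is sound) and then apply a discrete intermediate-value argument; the suffix case you handle by reversal via $\iota(u)=\iota(u^R)$, which the paper itself notes. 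Your route costs an extra combinatorial lemma but buys something the paper leaves implicit: it shows that a position $j$ with $t_w[j]=\ell$ exists at all, i.e., that the requested prefix/suffix is well-defined for every $\ell<m$; the paper's proof tacitly relies on the arch walk completing $\ell+1$ archs, which holds because $\iota(w)=m>\ell$ but is never stated. Both arguments yield $O(n)$ time, the paper's scan being a short arch-by-arch walk after the preprocessing, yours a position-by-position scan of an array that Lemma~\ref{help1} provides anyway.
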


\begin{proof}
To compute the longest prefix $w[1..i-1]$ of $w$ which has universality index $\ell$, we use data structures from Lemma \ref{help1}. We start with $j=1$ and $k=0$. While $k\neq \ell+1$ do $t=u_w[j]$, increase $k$, set $j=t+1$. If $k=\ell+1$ then $w[1..t]$ is the shortest prefix of $w$ which is $\ell+1$ universal. Therefore  the longest prefix $w[1..i-1]$ of $w$ which has universality index $\ell$ is $w[1..t-1]$. A similar approach can be used for suffixes. \qed
\end{proof}

Theorem \ref{del-pref-suf} allows us to compute which is the shortest prefix (suffix) we should delete so that we get a string of universality index $\ell$. Its proof  is based on the data structures of Lemma \ref{help1}. For instance, to compute the longest prefix $w[1..i-1]$ of $w$ which has universality index $\ell$, we identify the first $\ell+1$ factors of the decomposition of Theorem \ref{decomp}, assume that their concatenation is $w[1..i]$, and remove the last symbol of this string. A similar approach works for suffixes.

\section{Conclusion}
In this paper we investigated the scattered factor universality of words. We have proven how this universality behaves if a word is repeated and how this characterisation can be exploited to obtain linear-time algorithms for obtaining an uncommon scattered factor. Moreover we set the universality of a palindrome into relation with its first half (minus one letter if the length is odd) as well as the generalised repetition $w\pi(w)$ for a morphic permutation $\pi$. The last part of Section~\ref{basic} dealt with circular universality. Here we have proven the relation between universality and circular universality and we have proven that the characterisation in Theorem~\ref{theoitercirc} does not hold for arbitary alphabets. We conjecture that for an alphabet of cardinality $\sigma$ the notion of circularity has to be generalised such that, assuming the word as a circle, not once but $\sigma-1$ times the word has to read before the universality is increased. Finally in the last section we developed data structures that allow us to determine the universality of factors of a given word.

\bibliographystyle{plain}
\bibliography{scatfact}

%
%

\end{document}